\newcommand\cF{{\cal F}}
\newcommand\cY{{\cal Y}}
\newcommand\bE{\mathbf{E}}
\newcommand\bR{\mathbb{R}}
\newcommand\bP{\mathbb{P}}
\newcommand\be{\begin{equation}}
\newcommand\ee{\end{equation}}
\newtheorem{assumption}{Assumption}
\title{Jump-Diffusion Risk-Sensitive Asset Management I: Diffusion Factor Model}
\author{Mark Davis\thanks{Department of Mathematics, Imperial College London, London SW7 2AZ, England, Email: mark.davis@imperial.ac.uk} \and S\'ebastien Lleo\thanks{Finance Department, Reims Management School, 59 rue Pierre Taittinger, 51100 Reims, France, Email: sebastien.lleo@reims-ms.fr}}
\date{September 26, 2010}
\begin{document}
\bibliographystyle{plain}
\maketitle

\begin{abstract}
        This paper considers a portfolio optimization problem in which asset prices are represented by SDEs driven by Brownian motion and a Poisson random measure,
with drifts that are functions of an auxiliary diffusion factor process. The criterion, following earlier work by Bielecki, Pliska, Nagai and others, is risk-sensitive optimization (equivalent to maximizing the expected growth rate subject to a constraint on variance.) By using a change of measure technique introduced by Kuroda and Nagai we show that the problem reduces to solving a certain stochastic control problem in the factor process, which has no jumps. The main result of the paper is to show that the risk-sensitive jump diffusion problem can be fully characterized in terms of a parabolic Hamilton-Jacobi-Bellman PDE rather than a PIDE, and that this PDE admits a classical $(C^{1,2})$ solution.
\end{abstract}

%\begin{keywords}
%Asset management, risk-sensitive stochastic control, jump diffusion processes, Poisson point processes, L\'evy processes, HJB PDE, policy improvement.
%\end{keywords}

%%%
\section{Introduction}
In this article, we consider a finite time jump-diffusion version of the risk sensitive asset management problem of Bielecki and Pliska~\cite{bipl99}. Fundamentally, our main result is to show that the resulting stochastic control problem can be fully characterized by a parabolic Hamilton-Jacobi-Bellman PDE rather than a PIDE, and that this PDE admits a classical $(C^{1,2})$ solution.  
\\

Risk-sensitive control is a generalization of classical stochastic control in which the degree of risk aversion or risk tolerance of the optimizing agent is explicitly parameterized in the objective criterion and influences directly the outcome of the optimization. In risk-sensitive control, the decision maker's objective is to select a control
policy $h(t)$ to maximize the criterion
\begin{equation}\label{eq_criterion_J}
    J(x,t,h;\theta) := -\frac{1}{\theta}\ln\mathbf{E}\left[e^{-\theta F(x,h)} \right]
\end{equation}
where $t$ is the time, $x$ is the state variable, $F$ is a given reward function, and the risk sensitivity $\theta>0$ is an exogenous parameter representing the decision maker's degree of risk aversion. A Taylor expansion of the previous expression around $\theta = 0$
evidences the vital role played by the risk sensitivity parameter:
\begin{equation}\label{eq_Taylor_RSC}
    J(x,t,h;\theta)
    = \mathbf{E}\left[F(x,t,h)\right]
    - \frac{\theta}{2}\mathbf{Var}\left[F(x,t,h)\right]
    + O(\theta^2)
\end{equation}
This criterion amounts to maximizing $\mathbf{E}\left[F(x,t,h)\right]$ subject to a penalty for variance.  For a general reference, see Whittle~\cite{wh90}. Much of the recent literature concerns the infinite time horizon problem:
\begin{equation}\label{eq_criterion_J_infinitetime}
    J_{\infty}(x, h;\theta) := \liminf_{t \to \infty} -\frac{1}{\theta} t^{-1} \ln\mathbf{E}\left[e^{-\theta f(x,h)} \right]
\end{equation}
This is interesting from a theoretical perspective, but is not applicable to practical asset management because of the non-uniqueness of controls. Optimality in this sense is a `tail property': if $h^*(t)$ is optimal, then so is $\tilde{h}(t) = h^*(t) 1_{t>T}+h(t) 1_{t \leq T}$ for any arbitrary process $h(t)$ and time $T>0$. Of course, near-term decisions are the ones that are of primary importance to investment managers.

In the past decade, the applications of risk-sensitive control to asset management have flourished. Risk-sensitive control was first applied to solve financial problems by Lefebvre and Montulet~\cite{lemo94} in a corporate finance context and by Fleming~\cite{fl95} in a portfolio selection context. However, Bielecki and Pliska~\cite{bipl99} were the first to apply the continuous time risk-sensitive control as a practical tool that could be used to solve `real world' portfolio selection problems. They considered a long-term asset allocation problem and proposed the logarithm of the investor's wealth as a reward function, so that the investor's objective is to maximize the risk-sensitive (log) return of his/her portfolio or alternatively to maximize a function of the power utility (HARA) of terminal wealth. They derived the optimal control and solved the associated Hamilton-Jacobi-Bellman (HJB) PDE under the restrictive assumption that the asset and factor noise are uncorrelated. This assumption is unrealistic and it was later relaxed (see~\cite{bipl04}). The contribution of Bielecki and Pliska to the field is immense: they studied the economic properties of the risk-sensitive asset management criterion (see~\cite{bipl03}), extended the asset management model into an intertemporal CAPM (\cite{bipl04}), worked on transaction costs (\cite{bipl00}), numerical methods (\cite{bihhpl02}) and considered factors driven by a CIR model (\cite{biplsh05}).  A major contribution to the mathematical theory was made by Kuroda and Nagai~\cite{kuna02} who introduced an elegant solution method based on a change of measure argument which transforms the risk sensitive control problem in a linear exponential of quadratic regulator. They solved the associated HJB PDE over a finite time horizon and then studied the properties of the ergodic HJB PDE. Recently, Davis and Lleo~\cite{dall_RSBench} applied this change of measure technique to solve, at a finite and an infinite horizon, a benchmarked investment problem in which an investor selects an asset allocation to outperform a given financial benchmark. The problem we consider is also related to the vast literature on HARA utility maximization that has flourished in the past 50 years. This literature includes a number of references related to risk-sensitive control, such as works by Fleming and Sheu (\cite{flsh99}, \cite{flsh00} and~\cite{flsh02}) or Hansen and Sargent~\cite{hasa08} in the context of robust control.
\\

Risk-sensitive asset management theory was originally set in a world of diffusion dynamics where randomness is modelled using correlated Brownian motions. To our knowledge, the only attempt to extend the risk-sensitive asset management theory from a diffusion to a jump diffusion setting was made by Wan~\cite{wa06} who briefly sketched a jump-diffusion extension of Bielecki and Pliska's~\cite{bipl99} original infinite horizon risk-sensitive asset management model. Wan's treatment is however restrictive as it only considers a single Poisson process-driven jump per asset and assumes that the underlying valuation factor risks and asset risks are uncorrelated. Our paper addresses these two limitations. The setting of our control problem, which takes place within a finite time horizon, allows for both infinite activity jumps in asset prices and for a correlation structure between factor risks and asset risks. To solve this control problem we extend Kuroda and Nagai's powerful change of measure technique to account for the jumps. One of the difficulties we face in extending this technique is proving that the optimal control is admissible as this requires showing that the Dol\'eans exponential~\eqref{eq_JDRSAM_Doleansexp_chi} associated with this control is a martingale. In a pure diffusion setting, this would follow easily from the Kamazaki condition or the Novikov condition. However, when the Dol\'eans exponential does not have continuous path, as is the case in a jump diffusion setting, proving that it is indeed a martingale is more difficult as only weaker partial results exist. This question is addressed in Appendix \ref{sec_Admissibility} of the present paper.

In this paper the asset price processes are modelled as jump-diffusions whose growth rates are functions of an auxiliary `factor' process $X(t)$ which satisfies a linear diffusion SDE. Our main result is that the risk-sensitive jump-diffusion asset management problem is equivalent to an optimal control problem for a \emph{diffusion} process (no jumps) and that the HJB equation for the latter admits a unique classical $C^{1,2}\left([0,T)\times\mathbb{R}^n\right)$ solution. Showing the existence and uniqueness of a solution to a risk-sensitive control problem can prove difficult even in a pure diffusion setting. For example, Bensoussan, Freshe and Nagai~\cite{befrna98} had to constrain the behaviour of the Hamiltonian in order to prove existence of a classical solution. Still in a pure diffusion setting, Fleming and Soner (see V.9 in~\cite{flso06}) proved that the value function is a continuous viscosity solution of the associated Hamilton Jacobi Bellman Partial Differential Equation (HJB PDE) but had to assume boundedness of all coefficients and of the derivatives of the reward function. No such strong condition is required to solve the jump diffusion problem considered in this article. In fact all our assumptions arise naturally from the structure of the risk-sensitive asset management problem. Uniqueness follows from a classical verification argument while the proof of existence relies on a policy improvement algorithm and on the properties of linear parabolic PDEs.
\\

The paper is organized as follows. We first introduce the general setting of the model in section 2 and define the class of random Poisson measures which will be used to model the jump component of the asset dynamics. In Section 3, we formulate the jump-diffusion control problem and introduce the change of measure argument of Kuroda and Nagai~\cite{kuna02}. In a pure diffusion case, this is enough to transform the problem into a standard Linear Exponential of Quadratic Regulator. In our jump-diffusion setting, the change of measure simplifies the problem by associating the HJB PDE given in Section \ref{sec_Optim}, rather than the expected Partial Integro-Differential Equation containing non-local terms, to the value function. It is striking that an optimal control problem for a jump-diffusion model has a solution that is characterized in terms of a HJB PDE and not a HJB PIDE\footnote{See \O ksendal and Sulem~\cite{oksu05} for a treatment of jump-diffusion control problems.}.
\\

Our main result is Theorem~\ref{theo_JDRSAM_diffstate_main_result} in Section~\ref{sec_Mainresult}. The proof depends on various technical arguments which are given in Sections 5 to 7. In section 5, we show the existence of a unique optimal control before addressing two key questions in Section 6. First, the admissibility of the optimal control is no longer \emph{a priori} guaranteed because the Dol\'eans exponential defining the Radon-Nikodym derivative does not have continuous paths. This point is addressed in Propositions 6.3 and 6.4. Second, the Risk-Sensitive Hamilton-Jacobi-Bellman Partial Differential Equation (RS HJB PDE)  contains a jump-induced control-dependent integral term: it is no longer possible to find an analytical solution and the existence of a strong, classical solution is no longer guaranteed. However, should we be able to prove the existence of a classical $C^{1,2}$ solution to the RS HJB PDE, then we can prove uniqueness and resolve the control problem using a straightforward verification theorem, presented in Theorem 6.1 and Corollary 6.2 in Section 6. 
\\

In Section 7 we address existence and regularity of solutions to the RS HJB PDE. We show, in Theorem 7.2 and Corollary 7.4, that the risk-sensitive jump-diffusion control problem we consider admits a unique classical $C^{1,2}\left([0,T)\times\mathbb{R}^n\right)$ solution. Showing the existence and uniqueness of a solution to a risk-sensitive control problem can prove difficult even in a pure diffusion setting. For example, Bensoussan, Frehse and Nagai~\cite{befrna98} had to constrain the behaviour of the Hamiltonian in their finite time horizon problem to prove existence of a classical solution. Still in a pure diffusion setting and over a finite time horizon, Fleming and Soner (see V.9 in~\cite{flso06}) proved that the value function is a continuous viscosity solution of the associated Hamilton Jacobi Bellman Partial Differential Equation (HJB PDE) but had to assume boundedness of all coefficients and of the derivatives of the reward function. No such strong condition is required to solve the jump diffusion problem considered in this article. In fact all our assumptions arise naturally from the structure of the risk-sensitive asset management problem.  We obtain our result by applying an approximation in policy space in a two-step process: first, we show existence on a  bounded region and then extend to the unbounded state space. With this result in hand, we have all the ingredients needed for the proof of Theorem~\ref{theo_JDRSAM_diffstate_main_result}. 
\\

Up to this point, we have assumed that the factor process $X(t)$ is directly observed by the controller, and therefore represents real economic factors: GDP growth, inflation, the S\&P500 index, etc. We may however wish to use $X(t)$ as an abstract latent factor, introduced to model volatility of returns, in which case only the prices, and not $X(t)$, will be observed. In our final Section 8, we note that this problem, once adequately reformulated, can be solved using a classical Kalman filter, as in \cite{nape02}, as the jump noise is absent from the dynamics of $X(t)$. While this is from a technical point of view a simple observation, it greatly enhances the applicability of our results.
\\

In a companion paper \cite{dall10} we consider the case in which there are jumps in both the price and factor processes. There the measure change technique does not remove the jumps and the argument is substantially different.

%%%%%%%%%%%%%%%%%%%%%%%%%%%%%%%%%%%%%%%%%%%%%%%%%%%%%%%%%%%%%%%%%%%%%%%%%%%%%%%%%%%
%
%   NEXT SECTION
%
%%%%%%%%%%%%%%%%%%%%%%%%%%%%%%%%%%%%%%%%%%%%%%%%%%%%%%%%%%%%%%%%%%%%%%%%%%%%%%%%%%%

%%%
\section{Analytical Setting}\label{as}
\subsection{Overview}\label{def_JDRSAM_setting}
The growth rates of the assets are assumed to depend on $n$ factors $X_1(t), \ldots, X_n(t)$ which follow the dynamics given in equation~\eqref{eq_FactorProcess_diffusion} below. As in Kuroda and Nagai's asset-only model, the assets market comprises $m$ risky securities $S_i, \; i=1,\ldots m$. In contrast to Kuroda and Nagai, we assume that the money market account process, $S_0$, is an affine function of the valuation factors, which enables us to easily model a stochastic short term rate. Let $M := n+m$. Throughout, we will assume that $m>n$; this is needed in connection with the `zero-beta' policies introduced in Section \ref{zb}.
\\

Let $(\Omega, \left\{ \mathcal{F}_{t} \right\}, \mathcal{F},
\mathbb{P})$ be the underlying probability space. On this space is
defined an $\mathbb{R}^M$-valued
$\left(\mathcal{F}_t\right)$-Brownian motion $W(t)$ with components
$W_k(t)$, $k=1,\ldots,M$. Moreover, let $N$ be a $\left(\mathcal{F}_t\right)$-Poisson point process on $(0,\infty)\times\mathbf{Z}$, independent of $W(t)$, where $(\mathbf{Z},\mathcal{B}_{\mathbf{Z}})$ is a given Borel space\footnote{$\mathbf{Z}$ is a Polish space and $\mathcal{B}_{\mathbf{Z}}$ is the Borel $\sigma$-field. See Ikeda and Watanabe~\cite{ikwa81} for a formal definition of the Poisson point process}. Define
\begin{equation}\label{def_JDRSAM_ZFrak_set}
    \mathfrak{Z} := \left\{ U \in \mathcal{B}(\mathbf{Z}), \mathbb{E} \left[N(t,U)\right] < \infty \; \forall t\right\}
\end{equation}
Finally, for notational convenience, we fix throughout the paper a set $\mathbf{Z}_0 \in \mathcal{B}_{\mathbf{Z}}$ such that
$\nu(\mathbf{Z} \backslash \mathbf{Z}_0)<\infty$ and define 
\begin{eqnarray}
    &&\bar{N}(dt,dz)
                                                \label{nbar}\\
    &=& \left\{ \begin{array}{ll}
        N(dt,dz) - \hat{N}(dt,dz) = N(dt,dz) - \nu(dz)dt =: \tilde{N}(dt,dz)    &   \textrm{if } z \in \mathbf{Z}_0     \\
        N(dt,dz)                      &   \textrm{if } z \in \mathbf{Z} \backslash \mathbf{Z}_0       \\
    \end{array}\right.
                                                \nonumber
\end{eqnarray}

\subsection{Factor Dynamics}\label{Chapter3_JDRSAM_theory_factordynamics}
The dynamics of the $n$ factors are expressed by the affine diffusion equation
\begin{equation}\label{eq_FactorProcess_diffusion}
    dX(t) = (b + BX(t))dt + \Lambda dW(t),
    \qquad X(0) = x
\end{equation}
where $X(t)$ is the $\mathbb{R}^{n}$-valued factor process with components $X_{j}(t)$ and $b \in \mathbb{R}^n$, $B \in \mathbb{R}^{n\times n}$ and $\Lambda \in \mathbb{R}^{n\times M}$.
\\

\subsection{Asset Market Dynamics}\label{Chapter3_JDRSAM_theory_assetdynamics}
Let $S_0$ denote the wealth invested in the money market account with dynamics given by the equation:
\begin{equation}\label{eq_JDRSAM_BankAccount}
    \frac{dS_{0}(t)}{S_{0}(t)} = \left(a_0 + A_0^T X(t)\right)dt,
    \qquad S_0(0) = s_0
\end{equation}
where $a_0 \in \mathbb{R}$ is a scalar constant, $A_0 \in
\mathbb{R}^{n}$ is a $n$-element column vector and throughout the paper $x^T$ denotes the transpose of the matrix or vector $x$.
\\

Let $S_{i}(t)$ denote the price at time $t$ of the $i$th security, with $i = 1,
\ldots, m$. The dynamics of risky security $i$ can be expressed as:
\begin{eqnarray}\label{eq_SecurityProcess}
    \frac{dS_{i}(t)}{S_{i}(t^{-})} &=&
        (a + AX(t))_{i}dt
        + \sum_{k=1}^{N} \sigma_{ik} dW_{k}(t)
        + \int_{\mathbf{Z}} \gamma_i(z)\bar{N}(dt,dz),
                                                \nonumber\\
    && S_i(0) = s_i,
    \quad i = 1, \ldots, m
\end{eqnarray}
where $a \in \mathbb{R}^m$, $A \in \mathbb{R}^{m \times n }$,
$\Sigma := \left[ \sigma_{ij} \right], \; i = 1, \ldots, m, \; j =
1, \ldots, M$ and $\gamma(z) \in \mathbb{R}^m$ satisfying Assumption~\ref{as_assetjumps_upanddown_1}:
\\

\begin{assumption}\label{as_assetjumps_upanddown_1}
$\gamma(z) \in \mathbb{R}^m$ satisfies
\begin{eqnarray}
        -1 \leq \gamma_{i}^{min} \leq \gamma_{i}(z) \leq \gamma_{i}^{max} < +\infty
        , \qquad i =1,\ldots,m
                                                                                        \nonumber
\end{eqnarray}and
\begin{eqnarray}
        -1 \leq \gamma_{i}^{min} < 0 < \gamma_{i}^{max} < +\infty
        , \qquad i =1,\ldots,m
                                                                                        \nonumber
\end{eqnarray}
for $i = 1, \ldots, m$. Furthermore, define
\begin{equation}
    \mathbf{S} := \mathrm{supp}(\nu) \in \mathcal{B}_{\textbf{Z}}
                                                \nonumber
\end{equation}
and
\begin{equation}
    \tilde{\mathbf{S}}
    :=  \mathrm{supp}(\nu \circ \gamma^{-1})
    \in \mathcal{B}\left(\mathbb{R}^m\right)
                                                \nonumber
\end{equation}
where $\mathrm{supp}(\cdot)$ denotes the measure's support, then we assume that $\prod_{i=1}^{m} [\gamma_{i}^{min}, \gamma_{i}^{max}]$ is the smallest closed hypercube containing $\tilde{\mathbf{S}}$.
\\

In addition, the vector-valued function $\gamma(z)$ satisfies:
\begin{equation}\label{as_assetjumps_gamma_integrable}
    \int_{\mathbf{Z}_0} \lvert\gamma(z)\rvert^2 \nu(dz) < \infty
\end{equation}
\end{assumption}
\\

Note that Assumption~\ref{as_assetjumps_upanddown_1} implies that each asset has, with positive probability, both upward and downward jumps. As will become evident in Section~\ref{sec_Change_of_Measure}, the effect of this assumption is to bound the space of controls. Relation~\eqref{as_assetjumps_gamma_integrable} is a standard condition. 
\\

Define the set $\mathcal{J}$ as
\begin{equation}\label{def_JDRSAM_setJ}
    \mathcal{J} := \left\{h \in \mathbb{R}^m:  -1-h^T \psi < 0 \quad \forall \psi \in \tilde{\mathbf{S}}\right\}
\end{equation}
and let $\overline{\mathcal{J}}$ be the closure of $\mathcal{J}$. For a given $z$, the equation $h^T \gamma(z)   = -1$ describes a
hyperplane in $\mathbb{R}^m$. $\mathcal{J}$ is a bounded open convex subset of $\mathbb{R}^m$. 
\\

\subsection{Portfolio Dynamics}\label{Chapter3_JDRSAM_theory_portfoliodynamics}
 We will need the following assumptions:
\\
\begin{assumption}\label{as_JDRSAM_sigposdef} $\Sigma\Sigma^T  >0$
\end{assumption}
\\

The effect of this assumption is to prevent redundant assets. For example, we will not able to model in our investment market a share and an option or futures on that share. However, this assumption leaves us free to model a wide range of assets such as shares, bonds and commodities products as well as related indices.
\\

\begin{assumption}\label{as_JDRSAM_lamposdef} $\Lambda\Lambda^T  >0.$
\end{assumption}
\\

Let $\mathcal{G}_t := \sigma((S(s), X(s)), 0 \leq s \leq t)$ be the
sigma-field generated by the security and factor processes up to
time $t$. An \textit{investment strategy} or \textit{control process} is an $\mathbb{R}^m$-valued process with the interpretation that $h_i(t)$ is the fraction of current portfolio value invested in the $i$th asset, $i=1,\ldots,m$. The fraction invested in the money market account is then $h_0(t) = 1 - \sum_{i=1}^{m} h_{i}(t)$.
\\

\begin{definition}\label{def_JDRSAM_controlprocess_h}
    An $\mathbb{R}^m$-valued control process $h(t)$ is in class $\mathcal{H}$ if the
    following conditions are satisfied:
    \begin{enumerate}
        \item $h(t)$ is progressively measurable with respect to
        $\left\{ \mathcal{B}([0,t]) \otimes \mathcal{G}_t\right\}_{t \geq
        0}$ and is c\`adl\`ag;

        \item $\bP\left(\int_{0}^{t} \left| h(s) \right|^2 ds < +\infty \right)
        =1, \quad \forall t>0$;

        \item $h^T (t)\gamma(z) > -1, \quad \forall t >0, z \in \mathbf{Z}$, a.s. $d\nu$.
    \end{enumerate}
\end{definition}

Define the set $\mathcal{K}$ as
\begin{equation}\label{def_JDRSAM_setmathcalK}
    \mathcal{K} := \left\{h \in \mathcal{H}: h(t,\omega) \in \mathcal{J} \quad \textrm{a.e. } (dt\times d\mathbb{P} \right\}
\end{equation}

\begin{lemma}
        Under Assumption~\ref{as_assetjumps_upanddown_1}, a control process $h(t)$ satisfying condition 3 in Definition~\ref{def_JDRSAM_controlprocess_h} is bounded.
\end{lemma}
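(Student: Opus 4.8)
The plan is to read condition~3 of Definition~\ref{def_JDRSAM_controlprocess_h} as the requirement that $h(t,\omega)$ take values in the closure $\overline{\mathcal{J}}$ of the set $\mathcal{J}$ of~\eqref{def_JDRSAM_setJ}, and then to use Assumption~\ref{as_assetjumps_upanddown_1} to bound $\overline{\mathcal{J}}$. Since $\overline{\mathcal{J}}$ depends only on the constants $\gamma_{i}^{min},\gamma_{i}^{max}$ and not on $(t,\omega)$, this yields a single deterministic constant $K$ with $|h(t,\omega)|\le K$ for every $t>0$ and $\bP$-a.e.\ $\omega$, which is the assertion.

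\emph{Step 1: condition~3 is equivalent to $h(t,\omega)\in\overline{\mathcal{J}}$.} Fix $(t,\omega)$ and write $h:=h(t,\omega)$; the case $h=0$ is trivial, so take $h\neq 0$. Let $\mu:=\nu\circ\gamma^{-1}$, so that $\tilde{\mathbf{S}}=\mathrm{supp}(\mu)$ by definition. Condition~3 says precisely that the open half-space $H:=\{\psi\in\bR^m:\ h^{T}\psi>-1\}$ has full $\mu$-measure. Since the support of a measure is contained in the closure of every Borel set of full measure, and $\overline{H}=\{\psi:\ h^{T}\psi\ge-1\}$, we get $\tilde{\mathbf{S}}\subseteq\{\psi:\ h^{T}\psi\ge-1\}$, i.e.\ $h\in\overline{\mathcal{J}}$; the converse inclusion is immediate.

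\emph{Step 2: $\overline{\mathcal{J}}$ is bounded.} (This is already recorded in the text preceding the lemma; I include the argument for completeness.) Suppose not: choose $h_{n}\in\mathcal{J}$ with $|h_{n}|\to\infty$, and pass to a subsequence along which $u_{n}:=h_{n}/|h_{n}|$ converges to some $u$ with $|u|=1$. Dividing the defining inequality $h_{n}^{T}\psi>-1$ by $|h_{n}|$ and letting $n\to\infty$ gives $u^{T}\psi\ge 0$ for every $\psi\in\tilde{\mathbf{S}}$, hence for every $\psi\in\overline{\mathrm{conv}(\tilde{\mathbf{S}})}$. This is where Assumption~\ref{as_assetjumps_upanddown_1} is used: the requirement that $\prod_{i=1}^{m}[\gamma_{i}^{min},\gamma_{i}^{max}]$ be the smallest closed hypercube containing $\tilde{\mathbf{S}}$, together with $\gamma_{i}^{min}<0<\gamma_{i}^{max}$ for each $i$, forces the origin into the interior of $\overline{\mathrm{conv}(\tilde{\mathbf{S}})}$; but no nonzero linear functional can be nonnegative on a neighbourhood of the origin, a contradiction. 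Hence $\overline{\mathcal{J}}$, being closed and bounded, is compact, and with $K:=\sup_{y\in\overline{\mathcal{J}}}|y|<\infty$ the two steps give $|h(t,\omega)|\le K$ for all $t>0$ and a.e.\ $\omega$.

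The routine ingredients are Step~1 (the support-of-push-forward computation) and the final compactness deduction. The one point carrying real content is the geometric claim inside Step~2 -- that Assumption~\ref{as_assetjumps_upanddown_1} genuinely places the origin in the interior of the convex hull of $\mathrm{supp}(\nu\circ\gamma^{-1})$, so that $\mathcal{J}$ has no recession direction -- and this is the step I would expect to need the most care to state and prove cleanly.
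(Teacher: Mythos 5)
The paper gives no argument here at all (its ``proof'' is the single word \emph{immediate}, resting on the unproved assertion after \eqref{def_JDRSAM_setJ} that $\mathcal{J}$ is a bounded set), so you are in effect supplying the proof the authors omit, and your overall strategy is the right one. Step~1 is correct: for fixed $(t,\omega)$ the open half-space $\{\psi:h^T\psi>-1\}$ has full $\nu\circ\gamma^{-1}$-measure, hence contains no point of the complement of its closure that lies in $\tilde{\mathbf{S}}$, giving $h^T\psi\ge-1$ on $\tilde{\mathbf{S}}$; identifying this set with $\overline{\mathcal{J}}$ uses convexity of $\mathcal{J}$ and $0\in\mathcal{J}$ and deserves a line, but is fine. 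Your reformulation in Step~2 is also correct as a piece of convex analysis: $\overline{\mathcal{J}}$ is bounded exactly when there is no $u\neq0$ with $u^T\psi\ge0$ on $\tilde{\mathbf{S}}$, i.e.\ exactly when $0\in\operatorname{int}\overline{\mathrm{conv}}(\tilde{\mathbf{S}})$.

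The gap sits precisely at the point you yourself flagged: Assumption~\ref{as_assetjumps_upanddown_1}, as literally stated, does \emph{not} force the origin into the interior of $\overline{\mathrm{conv}}(\tilde{\mathbf{S}})$. The smallest-hypercube condition only controls the coordinate marginals of the support (for each $i$ there are support points with $i$-th coordinate near $\gamma_i^{min}<0$ and near $\gamma_i^{max}>0$); it says nothing about joint directions. Concretely, take $m=2$ and let $\gamma$ take only the two values $(-c,-c)$ and $(c,c)$, each charged by $\nu$, with $0<c\le1$: the smallest closed hypercube containing $\tilde{\mathbf{S}}$ is $[-c,c]^2$ and every requirement of Assumption~\ref{as_assetjumps_upanddown_1} holds, yet $\mathrm{conv}(\tilde{\mathbf{S}})$ is a segment with empty interior, $u=(1,-1)$ satisfies $u^T\psi\ge0$ on $\tilde{\mathbf{S}}$, and $\mathcal{J}$ contains the whole line $\{(t,-t):t\in\mathbb{R}\}$; the deterministic control $h(t)=(t,-t)$ then satisfies conditions 1--3 of Definition~\ref{def_JDRSAM_controlprocess_h} but is unbounded. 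So the geometric claim in Step~2 cannot be derived from the stated assumption; it is an additional nondegeneracy hypothesis ruling out perfectly offsetting jump exposures across assets. To be fair, the paper makes exactly the same leap when it declares $\mathcal{J}$ bounded, and this is presumably the intended meaning of the assumption; your proof is complete and correct the moment one reads (or strengthens) Assumption~\ref{as_assetjumps_upanddown_1} as saying $0\in\operatorname{int}\overline{\mathrm{conv}}(\tilde{\mathbf{S}})$, e.g.\ that $\tilde{\mathbf{S}}$ meets a neighbourhood of every vertex of the hypercube -- but as written, Step~2 asserts an implication that is false in general, and you should state the needed condition explicitly rather than attribute it to the hypercube requirement.
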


\begin{proof}
        The proof of this result is immediate.
\end{proof}

\begin{definition}\label{def_JDRSAM_admissible_A}
    A control process $h$ is in class $\mathcal{A}(T)$ if the
    following conditions are satisfied:
    \begin{enumerate}
        \item $h \in \mathcal{H}$;

                        \item $\mathbf{E} \chi_T^h= 1$ where $\chi_t^h$, $t \in (0,T]$, is the Dol\'eans exponential defined as
\begin{eqnarray}\label{eq_JDRSAM_Doleansexp_chi}
    \chi_t^h
    &:=& \exp \left\{ -\theta \int_{0}^{t} h(s)^T \Sigma dW_s
    -\frac{1}{2} \theta^2 \int_{0}^{t} h(s)^T \Sigma\Sigma^T h(s) ds            \right.
                                                            \nonumber\\
   &&   \left.
        +\int_{0}^{t} \int_{\mathbf{Z}} \ln\left(1-G(z,h(s))\right) \tilde{N}(ds,dz)
            \right.
                                                            \nonumber\\
   &&   \left.
        +\int_{0}^{t} \int_{\mathbf{Z}} \left\{\ln\left(1-G(z,h(s))\right)+G(z,h(s))\right\}\nu(dz)ds
    \right\},
                                                                                                                                                                \nonumber\\
\end{eqnarray}
and
\begin{eqnarray}
    G(z,h) &=&
        1-\left(1+h^T \gamma(z)\right)^{-\theta}
\end{eqnarray}

\end{enumerate}
 We  say that a control process $h$ is \emph{admissible} if $h\in \mathcal{A}(T)$.
\end{definition}

The proportion invested in the money market account is $h_0(t)=1-\sum_{i=1}^{m} h_i(t)$. Taking this budget equation into consideration, the wealth, $V(t)$ of the investor in response to an investment strategy $h(t) \in \mathcal{H}$, follows the dynamics
\begin{eqnarray}
   \frac{dV(t)}{V(t^-)}
    &=& \left(a_0 + A_0^T X(t)\right)dt
            + h^T (t)\left(a-a_0\mathbf{1}
            +\left(A-\mathbf{1}A_0^T \right)X(t)\right)dt
                                                    \nonumber\\
    &&
            + h^T (t)\Sigma dW_t
            + \int_{\mathbf{Z}}h^T (t)\gamma(z)\bar{N}(dt,dz)
                                        \nonumber
\end{eqnarray}
where $\mathbf{1} \in \mathbf{R}^m$ denotes the $m$-element unit column vector and with $V(0) = v$. Defining $\hat{a} := a - a_0\mathbf{1}$ and $\hat{A} := A -
\mathbf{1}A_0^T $, we can express the portfolio dynamics as
\begin{eqnarray}\label{eq_JDRSAM_V_dynamics}
    \frac{dV(t)}{V(t^-)}
        = \left(a_0 + A_0^T X(t)\right)dt
        + h^T (t)\left(\hat{a}+\hat{A}X(t)\right)dt
        + h^T (t)\Sigma dW_t
        + \int_{\mathbf{Z}}h^T (t)\gamma(z)\bar{N}(dt,dz)
                                        \nonumber\\
\end{eqnarray}
with initial endowment $V(0) = 0$.
\\

%%%%%%%%%%%%%%%%%%%%%%%%%%%%%%%%%%%%%%%%%%%%%%%%%%%%%%%%%%%%%%%%%%%%%%%%%%%%%%%%%%%
%
%   NEXT SECTION
%
%%%%%%%%%%%%%%%%%%%%%%%%%%%%%%%%%%%%%%%%%%%%%%%%%%%%%%%%%%%%%%%%%%%%%%%%%%%%%%%%%%%

%%%
\section{Problem Setup}
\subsection{Optimization Criterion}
We will assume that the objective of the investor is to maximize the risk adjusted growth of his/her portfolio of assets over a finite time horizon. In this context, the objective of the risk-sensitive management problem is to find $h^*\in \mathcal{A}(T)$ that maximizes the control criterion
\begin{equation}\label{eq_JDRSAM_criterion_J}
    J(x,t,h;\theta) := -\frac{1}{\theta}\ln\mathbf{E}\left[e^{-\theta \ln V(t,x,h)} \right]
\end{equation}

By It\^o's Lemma, the log of the portfolio value in response to a strategy $h$ is
\begin{eqnarray}\label{eq_JDRSAM_Vt}
    \ln V(t)
        &=&\ln v + \int_{0}^{t} \left(a_0 + A_0^T X(s)\right)
        +   h(s)^T \left(\hat{a}+\hat{A}X(s)\right)ds
        -   \frac{1}{2} \int_{0}^{t} h(s)^T \Sigma\Sigma^T h(s) ds
                                                                      \nonumber \\
        &&  +\int_{0}^{t} h(s)^T \Sigma dW(s)
                                                                            \nonumber \\
        &&  +\int_{0}^{t}\int_{\mathbf{Z}_0}
            \left\{
            \ln\left(1+h(s)^T \gamma(z)\right)-h(s)^T \gamma(z)
            \right\}\nu(dz)ds
                                                                            \nonumber \\
        &&  +\int_{0}^{t} \int_{\mathbf{Z}} \ln\left(1+h(s)^T \gamma(z)\right) \bar{N}(ds,dz)
\end{eqnarray}
Hence,
\begin{eqnarray}\label{eq_JDRSAM_eminthetaVt}
    e^{-\theta \ln V(t)} &=&  v^{-\theta}
        \exp \left\{ \theta \int_{0}^{t} g(X_s,h(s)) ds \right\} \chi_t^h
\end{eqnarray}
where
\begin{eqnarray}\label{eq_JDRSAM_g_func_def}
    g(x,h)
    &=&\frac{1}{2} \left(\theta+1 \right)h^T \Sigma\Sigma^T h- a_0-A_0^T x -h^T (\hat{a} + \hat{A}x)
                                                            \nonumber\\
   && +\int_{\mathbf{Z}}  \left\{\frac{1}{\theta}
        \left[\left(1+h^T \gamma(z)\right)^{-\theta}-1\right]
        +h^T \gamma(z)\mathit{1}_{\mathbf{Z}_0}(z)
        \right\} \nu(dz)
\end{eqnarray}
and the Dol\'eans exponential $\chi_t^h$ is given by~\eqref{eq_JDRSAM_Doleansexp_chi}.
\\

\subsection{Change of Measure}\label{sec_Change_of_Measure}
Let $\mathbb{P}_{h}$ be the measure on
$(\Omega,\mathcal{F}_T)$ defined via the Radon-Nikodym derivative
\begin{eqnarray}\label{eq_JDRSAM_RNder_chi}
    \frac{d\mathbb{P}_{h}}{d\mathbb{P}}
    := \chi_T^h
\end{eqnarray}
For a change of measure to be possible, we must ensure that the
following technical condition holds:
\begin{equation}
    G(z,h(s)) < 1
                                                        \nonumber
\end{equation}

This condition is satisfied iff
\begin{eqnarray}\label{cond_JDRSAM_changeofmeasure}
        h^T (s)\gamma(z)    > -1
\end{eqnarray}
a.s. $d\nu$, which was already required for $h(t)$ to be in class $\mathcal{H}$ (Condition 3 in Definition~\ref{def_JDRSAM_controlprocess_h}). Condition~\eqref{cond_JDRSAM_changeofmeasure} is endogenous to the control problem and can be interpreted as a risk management safeguard preventing the investor from investing in some of the portfolios if the jump component of these portfolios could result in the investor's bankruptcy. 
\\

Observe that $\mathbb{P}_{h}$ is a probability measure for $h \in \mathcal{A}(T)$. Then,
\begin{equation}
    W_{t}^{h} = W_t + \theta \int_{0}^{t} \Sigma^T h(s) ds
            \nonumber
\end{equation}
is a standard Brownian motion under the measure $\mathbb{P}_{h}$ and we have (recall the notation defined at \eqref{nbar})
\begin{eqnarray}
    \int_{0}^{t}\int_{\mathbf{Z}_0}\tilde{N}^{h}(ds,dz)
&=&     \int_{0}^{t}\int_{\mathbf{Z}_0}N(ds,dz)
    -   \int_{0}^{t}\int_{\mathbf{Z}_0} \left\{1-G(z,h(s))\right\}\nu(dz)ds
                \nonumber\\
&=&     \int_{0}^{t}\int_{\mathbf{Z}_0}N(ds,dz)
    -   \int_{0}^{t}\int_{\mathbf{Z}_0} \left\{\left(1+h^T \gamma(z)\right)^{-\theta}\right\} \nu(dz)ds
                \nonumber
\end{eqnarray}
As a result, $X(t)$ satisfies the SDE:
\begin{eqnarray}\label{eq_JDRSAM_diffstate_state_SDE}
    dX(t)
    &=& f\left(X(t),h(t)\right) dt
                + \Lambda dW_{t}^{h}
        , \qquad t \in [0,T]
\end{eqnarray}
where
\begin{eqnarray}\label{eq_JDRSAM_diffstate_f}
        f(x,h) := \left(b + B x - \theta\Lambda \Sigma^T h\right)
\end{eqnarray}

We will now introduce the following two auxiliary criterion functions under the
measure $\mathbb{P}_{h}$:
\begin{itemize}
\item the auxiliary function directly associated with the risk-sensitive control problem:
\begin{equation}\label{eq_JDRSAM_diffstate_auxcriterion}
    I(v,x;h;t,T) = - \frac{1}{\theta} \ln \mathbf{E}_{t,x}^{h}
        \left[ \exp \left\{ \theta \int_{t}^{T} g(X_s,h(s)) ds
        - \theta \ln{v} \right\} \right]
\end{equation}
where $\mathbf{E}_{t,x} \left[ \cdot \right]$ denotes the
expectation taken with respect to the measure $\mathbb{P}_{h}$ and with initial conditions $(t,x)$.

\item the exponentially transformed criterion
\begin{equation}\label{eq_JDRSAM_Exp_of_int_criterion}
    \tilde{I}(v,x,h;t,T)
        := \mathbf{E}_{t,x}^{h}
        \left[ \exp \left\{ \theta \int_{t}^{T} g(s,X_s,h(s))
        ds -\theta \ln v
        \right\} \right]
\end{equation}
which we will find convenient to use in our derivations.
\\
\end{itemize}

We have completed our reformulation of the problem under the measure $\mathbb{P}_{h}$.
It is striking that the asset allocation problem with jump-diffusion asset prices reduces to a stochastic control problem for a \emph{diffusion} process, with dynamics~\eqref{eq_JDRSAM_diffstate_state_SDE} and reward function~\eqref{eq_JDRSAM_diffstate_auxcriterion} or~\eqref{eq_JDRSAM_Exp_of_int_criterion}.

\subsection{The Risk-Sensitive Control Problems under $\mathbb{P}_{h}$}\label{sec_Optim}
Let $\Phi$ be the value function for the auxiliary criterion
function $I(v,x;h;t,T)$. Then $\Phi$ is defined as
\begin{equation}\label{eq_JDRSAM_diffstate_valuefunction}
    \Phi(t,x) = \sup_{h \in \mathcal{A}(T)} I(v,x;h;t,T)
\end{equation}
We will show that $\Phi$ satisfies the HJB PDE
\begin{equation}\label{eq_JDRSAM_diffstate_HJBPDE}
   \frac{\partial \Phi}{\partial t}(t,x)
   + \sup_{h \in \mathcal{J}}L_{t}^{h}\Phi(t,x) = 0
        , \qquad (t,x) \in (0,T)\times\mathbb{R}^n
\end{equation}
where
\begin{eqnarray}\label{eq_JDRSAM_diffstate_HJBPDE_operator_L}
    L_{t}^{h}\Phi(t,x)
    &=&\left(b+ B x - \theta\Lambda \Sigma^T h(s)\right)^T D\Phi
                                \nonumber\\
    &&
        + \frac{1}{2} \textrm{tr} \left( \Lambda \Lambda^T  D^2 \Phi \right)
        - \frac{\theta}{2} (D\Phi)^T \Lambda \Lambda^T  D\Phi
        - g(x,h)
\end{eqnarray}
and subject to terminal condition
\begin{equation}\label{eq_JDRSAM_diffstate_HJBPDE_termcond}
      \Phi(T, x) = \ln v,\qquad x \in \mathbb{R}^n.
\end{equation}
Similarly, let $\tilde{\Phi}$ be the value function for the auxiliary criterion
function $\tilde{I}(v,x;h;t,T)$. Then $\tilde{\Phi}$ is defined as
\begin{equation}\label{eq_JDRSAM_diffstate_exptrans_valuefunction}
    \tilde{\Phi}(t,x) = \inf_{h \in \mathcal{A}(T)} \tilde{I}(v,x;h;t,T).
\end{equation}
The corresponding HJB PDE is
\begin{equation}\label{eq_JDRSAM_diffstate_exptrans_HJBPDE}
                \frac{\partial \tilde{\Phi}}{\partial t}(t,x)
                + \frac{1}{2} \textrm{tr} \left( \Lambda \Lambda^T  D^2 \tilde{\Phi}(t,x)\right)
                        + H(t,x,\tilde{\Phi},D\tilde{\Phi})=0,  
\end{equation}
and subject to terminal condition
\begin{equation}\label{eq_JDRSAM_diffstate_exptrans_HJBPDE_termcond}
        \tilde{\Phi}(T, x) = v^{-\theta}
\end{equation}
and where
\begin{equation}\label{eq_JDRSAM_diffstate_logtrans_H_function}
   H(x,r,p) =\inf_{h \in \mathcal{J}} \left\{f(x,h)^T p + \theta g(x,h) r
        \right\}
\end{equation}
for $r \in \mathbb{R}$, $p \in \mathbb{R}^n$ and in particular,
\begin{equation}\label{eq_JDRSAM_diffstate_relationship_Phi_tildePhi}
    \tilde{\Phi}(t,x)
    = \exp \left\{-\theta \Phi(t,x) \right\}
\end{equation}

%\begin{remark}\label{rk_control_Phi_Phitilde}
Note that since $\Phi$ and $\tilde{\Phi}$ are related through a strictly monotone continuous transformation, an admissible (optimal) strategy for the exponentially transformed problem is also admissible (optimal) for the risk-sensitive problem.
%\end{remark}

%%%%%%%%%%%%%%%%%%%%%%%%%%%%%%%%%%%%%%%%%%%%%%%%%%%%%%%%%%%%%%%%%%%%%%%%%%%%%%%%%%%
%
%   NEXT SECTION
%
%%%%%%%%%%%%%%%%%%%%%%%%%%%%%%%%%%%%%%%%%%%%%%%%%%%%%%%%%%%%%%%%%%%%%%%%%%%%%%%%%%%

%%%
\section{Main Result}\label{sec_Mainresult}
In this section, we present the main result of this article, namely that the risk-sensitive jump diffusion problem admits a classical $(C^{1,2})$ solution, and show that the value function $\Phi$ is convex in $x$.
\\
 
\begin{proposition}\label{prop_JDRSAM_convexity_Phi}
        The value function $\Phi(t,x)$ is convex in $x$.
\end{proposition}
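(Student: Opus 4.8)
The plan is to derive convexity from the HJB characterisation of $\Phi$ rather than from the stochastic representation. It is worth recording at the outset why the obvious approach fails: for a \emph{fixed} control $h$ the map $x\mapsto I(v,x,h;t,T)$ is concave, not convex. Indeed, the state equation \eqref{eq_JDRSAM_diffstate_state_SDE} and the running cost $g$ are affine in $x$, so $\theta\int_t^T g(X_s,h(s))\,ds$ is (pathwise) affine in $x$, and $I$ is then $-\tfrac1\theta$ times the logarithm of an expectation of an exponential of a quantity affine in $x$, hence concave by H\"older's inequality. Convexity of $\Phi$ is therefore a genuine effect of the optimisation over $h$ and cannot be obtained by freezing a control and perturbing the initial condition.

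The mechanism that produces convexity is visible in the Hamiltonian. Because $f(x,h)$ and $g(x,h)$ are affine in $x$, while $g(\cdot,h)$ is \emph{convex} in $h$ (the term $\tfrac12(\theta+1)h^T\Sigma\Sigma^T h$ is convex since $\Sigma\Sigma^T>0$, the integrand $\tfrac1\theta[(1+h^T\gamma(z))^{-\theta}-1]$ is convex in $h$ on $\mathcal{J}$, and the remaining terms are affine), the optimised first-order part
\[
(x,p)\longmapsto \sup_{h\in\mathcal{J}}\left\{-\theta\, h^{T}\Sigma\Lambda^{T} p-g(x,h)\right\}
\]
is a supremum of functions affine in $(x,p)$ and is thus jointly convex. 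In other words, maximising over $h$ converts the bilinear coupling $h^T\hat{A}x$ into a convex contribution, which is exactly what is needed.

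I would then prove convexity by a maximum-principle argument on the HJB PDE \eqref{eq_JDRSAM_diffstate_HJBPDE}, using the $C^{1,2}$ regularity established for the main result. Introduce the convexity defect $w(t,x,y)=\tfrac12\Phi(t,x)+\tfrac12\Phi(t,y)-\Phi(t,\tfrac{x+y}{2})$ and note that $w(T,\cdot)\equiv0$, since the terminal datum $\Phi(T,\cdot)=\ln v$ is constant. Evaluating \eqref{eq_JDRSAM_diffstate_HJBPDE} at $x$, $y$ and the midpoint and forming the combination defining $w$ yields a parabolic inequality for $w$. The state-independence of $\Lambda$ is essential here: the second-order operator in \eqref{eq_JDRSAM_diffstate_HJBPDE_operator_L} is then linear and evaluates additively at the three points, and together with the joint convexity of the optimised Hamiltonian above this is meant to force $w\ge0$, i.e. midpoint convexity and hence (by continuity) convexity of $\Phi(t,\cdot)$.

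The main obstacle is the concave gradient term $-\tfrac\theta2(D\Phi)^T\Lambda\Lambda^T D\Phi$ in \eqref{eq_JDRSAM_diffstate_HJBPDE_operator_L}: being concave in $D\Phi$ it opposes convexity, and a na\"ive comparison at the three points produces a contribution of unfavourable sign. This is the reason to pass to the exponentially transformed value function $\tilde\Phi=e^{-\theta\Phi}$ of \eqref{eq_JDRSAM_diffstate_exptrans_HJBPDE}, in which this term has been eliminated; since $\Phi=-\tfrac1\theta\ln\tilde\Phi$, it then suffices to prove that $\tilde\Phi(t,\cdot)$ is log-concave, and one runs the analogous defect argument for $\ln\tilde\Phi$. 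The residual difficulty is to control the dependence of the Hamiltonian $H(x,\tilde\Phi,D\tilde\Phi)$ on $r=\tilde\Phi$ and $p=D\tilde\Phi$ in the log-concavity maximum principle, and to handle the constraint $h\in\mathcal{J}$ when differentiating the optimiser. An alternative consistent with the policy-improvement construction of Section~7 is to argue inductively along the approximating sequence, where for each fixed Markov policy the associated value solves a \emph{linear} parabolic PDE with a Feynman--Kac representation amenable to a Pr\'ekopa--Leindler argument; proving that convexity is preserved by the improvement step and passes to the limit is then the crux.
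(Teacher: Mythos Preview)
Your approach differs fundamentally from the paper's. The paper argues probabilistically: it couples two copies $X_1,X_2$ of the factor process (same driving noise, same control $h$, started from $x_1,x_2$), uses linearity of the dynamics and affinity of $g$ in $x$ to express the cost at the midpoint as a weighted geometric mean, and then separates the two factors by a positive-covariance inequality $\mathbf{E}[U^\kappa V^{1-\kappa}]\geq\mathbf{E}[U^\kappa]\,\mathbf{E}[V^{1-\kappa}]$ followed by Jensen for $y\mapsto y^\kappa$. No PDE or maximum principle appears. Your opening observation that for a fixed control the map $x\mapsto I$ is \emph{concave} is correct and explains why the paper cannot simply use H\"older and ``sup of convex''; it also flags a real issue in the paper's chain, since the last Jensen step there (moving the exponent $\kappa$ outside the expectation inside $-\tfrac1\theta\ln(\cdot)$) produces an inequality in the wrong direction for the claimed bound.

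What you have submitted, however, is a programme rather than a proof. You identify a plausible strategy---log-concavity of $\tilde\Phi$ via a convexity-defect maximum principle, or alternatively via the policy-improvement sequence and a Pr\'ekopa--Leindler argument---and you correctly locate the obstacles, but you do not resolve any of them. For the defect route you give no argument controlling the joint $(r,p)$-dependence of $H$ in the log-concavity comparison; for the policy-improvement route you do not show that the linearised equation at each step preserves log-concavity, nor that the improvement step and the two limits ($k\to\infty$, then $R\to\infty$) do. The sentences ending ``the residual difficulty is\ldots'' and ``is then the crux'' mark exactly where the argument stops.
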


\begin{proof}

        To prove that the value function $\Phi(t,x)$ is convex in $x$, it is necessary and sufficient to show that $\forall (x_1,x_2) \in \mathbb{R}^n$ and for any $\kappa \in (0,1)$,
\begin{eqnarray}\label{eq_JDRSAM_Phi_convex}
        \Phi(t, \kappa x_1 + (1-\kappa) x_2)
\leq    \kappa \Phi(t, x_1)
+       (1-\kappa) \Phi(t, x_2)
\end{eqnarray}
Start from the left hand side:
\begin{eqnarray}
        &&      \Phi(t, \kappa x_1 + (1-\kappa) x_2)
                                                                                                                \nonumber\\
        &=& \sup_{h \in \mathcal{A}(T)} - \frac{1}{\theta} \ln \mathbf{E}^h_{t, \kappa x_1 + (1-\kappa) x_2}
        \left[ \exp \left\{ \theta \int_{t}^{T} g(X_s,h(s)) ds
        - \theta \ln{v} \right\} \chi(t)\right]
                                                                                                                \nonumber\\
        &=& \sup_{h \in \mathcal{A}(T)} - \frac{1}{\theta} \ln \mathbf{E}^h_{t,(x_1,x_2)}
        \left[ \exp \left\{ \theta \int_{t}^{T} g(\kappa X_1(s) + (1-\kappa) X_2(s),h(s)) ds
        - \theta \ln{v} \right\} \chi(t)\right] \nonumber\\
        &=& \sup_{h \in \mathcal{A}(T)} - \frac{1}{\theta} \ln \mathbf{E}^h_{t,(x_1,x_2)}
        \left[ \exp \left\{ \kappa \theta \int_{t}^{T} g(X_1(s),h(s)) ds
                        \right.\right. \nonumber\\
        &&      \left.\left.
                        + (1-\kappa) \theta \int_{t}^{T} g(X_2(s),h(s)) ds
                        - \theta \ln{v} \right\} \chi(t)\right] \nonumber\\
        &=& \sup_{h \in \mathcal{A}(T)} - \frac{1}{\theta} \ln \mathbf{E}^h_{t,(x_1,x_2)}
        \left[ \left(\exp \left\{ \theta \int_{t}^{T} g(X_1(s),h(s)) ds - \theta \ln{v} \right\} \chi(t) \right)^{\kappa}
                        \right.
                                                                                                                \nonumber\\
        &&      \left. \times
                        \left(\exp \left\{\theta \int_{t}^{T} g(X_2(s),h(s)) ds - \theta \ln{v} \right\} \chi(t) \right)^{1-\kappa}\right]
                                                                                                                \nonumber\\
        &\leq& \sup_{h \in \mathcal{A}(T)} - \frac{1}{\theta} \ln \left\{
                        \mathbf{E}^h_{t,x_1}
                \left[ \left(\exp \left\{ \theta \int_{t}^{T} g(X_1(s),h(s)) ds - \theta \ln{v} \right\} \chi(t) \right)^{\kappa}\right]
                        \right.
                                                                                                                \nonumber\\
        &&      \left. \times
                        \mathbf{E}^h_{t,x_2} \left[
                        \left(\exp \left\{\theta \int_{t}^{T} g(X_2(s),h(s)) ds - \theta \ln{v} \right\} \chi(t) \right)^{1-\kappa}\right]
                \right\}
                                                                                                                \nonumber
\end{eqnarray}
\begin{eqnarray}
        &=& \sup_{h \in \mathcal{A}(T)} \left\{
                - \frac{1}{\theta} \ln
                        \mathbf{E}^h_{t,x_1}
                \left[ \left(\exp \left\{ \theta \int_{t}^{T} g(X_1(s),h(s)) ds - \theta \ln{v} \right\} \chi(t) \right)^{\kappa}\right]
                \right.
                                                                                                                \nonumber\\
        &&      \left. - \frac{1}{\theta} \ln
                        \mathbf{E}^h_{t,x_2} \left[
                        \left(\exp \left\{\theta \int_{t}^{T} g(X_2(s),h(s)) ds - \theta \ln{v} \right\} \chi(t) \right)^{1-\kappa}\right]
                \right\}
                                                                                                                \nonumber\\
        &\leq& \sup_{h \in \mathcal{A}(T)} - \frac{1}{\theta} \ln
                        \mathbf{E}^h_{t,x_1}
                \left[ \left(\exp \left\{ \theta \int_{t}^{T} g(X_1(s),h(s)) ds - \theta \ln{v} \right\} \chi(t) \right)^{\kappa}\right]
                                                                                                                \nonumber\\
        &&      + \sup_{h \in \mathcal{A}(T)} - \frac{1}{\theta} \ln
                        \mathbf{E}^h_{t,x_2} \left[
                        \left(\exp \left\{\theta \int_{t}^{T} g(X_2(s),h(s)) ds - \theta \ln{v} \right\} \chi(t) \right)^{1-\kappa}\right]
                                                                                                                \nonumber\\
        &\leq& \sup_{h \in \mathcal{A}(T)} - \frac{\kappa}{\theta} \ln
                        \mathbf{E}^h_{t,x_1}
                \left[ \exp \left\{ \theta \int_{t}^{T} g(X_1(s),h(s)) ds - \theta \ln{v} \right\} \chi(t) \right]
                                                                                                                \nonumber\\
        &&      + \sup_{h \in \mathcal{A}(T)} - \frac{1-\kappa}{\theta} \ln
                        \mathbf{E}^h_{t,x_2} \left[
                        \exp \left\{\theta \int_{t}^{T} g(X_2(s),h(s)) ds - \theta \ln{v} \right\} \chi(t) \right]
                                                                                                                \nonumber\\
        &=& \kappa \Phi(t, x_1) +       (1-\kappa) \Phi(t, x_2)
                                                                                                                \nonumber
\end{eqnarray}
where the fourth equality follows from the fact that the covariance of two random variables inside the expectations is positive and the third inequality is due to the fact that the function $x \mapsto x^\alpha$ for $x > 0$ and $\alpha \in (0,1)$ is concave.
\end{proof}

\begin{corollary}\label{coro_JDRSAM_convexity_equivprop_tildePhi}
        The exponentially transformed value function $\tilde{\Phi}$ has the following property: $\forall (x_1,x_2) \in \mathbb{R}^2, \kappa \in (0,1,)$,
\begin{eqnarray}\label{eq_JDRSAM_tildePhi_property_convexityofPhi}
        \tilde{\Phi}(t, \kappa x_1 + (1-\kappa) x_2)
\geq
        \tilde{\Phi}^\kappa (t, x_1) \tilde{\Phi}^{1-\kappa} (t, x_2)
\end{eqnarray}
\end{corollary}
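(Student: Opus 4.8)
The plan is to derive the stated inequality directly from the convexity of $\Phi$ established in Proposition~\ref{prop_JDRSAM_convexity_Phi}, using only the pointwise relationship $\tilde{\Phi}(t,x)=\exp\{-\theta\Phi(t,x)\}$ recorded in~\eqref{eq_JDRSAM_diffstate_relationship_Phi_tildePhi}. No new probabilistic or analytic input is required; the corollary is a purely algebraic consequence of the proposition.

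First I would write down the convexity inequality~\eqref{eq_JDRSAM_Phi_convex} at the point $\kappa x_1+(1-\kappa)x_2$ and multiply both sides by $-\theta$. Since $\theta>0$, the inequality reverses, giving
\[
-\theta\,\Phi(t,\kappa x_1+(1-\kappa)x_2)\;\geq\;-\theta\kappa\,\Phi(t,x_1)-\theta(1-\kappa)\,\Phi(t,x_2).
\]
Next, applying the strictly increasing exponential function to both sides preserves the inequality, and the right-hand side factors as a product of two exponentials, $\exp\{-\theta\kappa\Phi(t,x_1)\}\exp\{-\theta(1-\kappa)\Phi(t,x_2)\}$.

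Finally I would identify each factor via~\eqref{eq_JDRSAM_diffstate_relationship_Phi_tildePhi}: the left-hand side is $\tilde{\Phi}(t,\kappa x_1+(1-\kappa)x_2)$, while $\exp\{-\theta\kappa\Phi(t,x_1)\}=\bigl(\exp\{-\theta\Phi(t,x_1)\}\bigr)^{\kappa}=\tilde{\Phi}^{\kappa}(t,x_1)$ and similarly $\exp\{-\theta(1-\kappa)\Phi(t,x_2)\}=\tilde{\Phi}^{1-\kappa}(t,x_2)$. This yields~\eqref{eq_JDRSAM_tildePhi_property_convexityofPhi} exactly. There is no genuine obstacle here; the only points requiring a modicum of care are the reversal of the inequality due to the sign of $-\theta$, and the observation that $\tilde{\Phi}>0$, which is what makes the fractional powers $\tilde{\Phi}^{\kappa}$ and $\tilde{\Phi}^{1-\kappa}$ well defined.
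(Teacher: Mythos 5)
Your proposal is correct and follows essentially the same route as the paper: the authors also deduce the inequality immediately from the relation $\Phi = -\frac{1}{\theta}\ln\tilde{\Phi}$ (equivalently $\tilde{\Phi}=e^{-\theta\Phi}$) together with the convexity of $\Phi$ from Proposition~\ref{prop_JDRSAM_convexity_Phi}; you merely spell out the sign reversal and exponentiation steps that the paper leaves implicit.
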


\begin{proof}
The properties follows immediately from the definition of $\Phi= -\frac{1}{\theta} \ln \tilde{\Phi}$.
\end{proof}
\\

\noindent We now come to the main result of this article. Recall the standing assumptions: in Section \ref{as}, Assumption~\ref{as_assetjumps_upanddown_1} is a condition on the support of the jump measure, Assumptions~\ref{as_JDRSAM_sigposdef} and ~\ref{as_JDRSAM_lamposdef} are the non-degeneracy conditions $\Sigma\Sigma^T>0,\,\Lambda\Lambda^T>0$, while Assumption~\ref{as_JDRSAM_A_rank_n}, introduced in Section \ref{zb} below, is a full-rank condition on the matrix $\hat{A}$ defined at \eqref{eq_JDRSAM_V_dynamics}.
\\

\begin{theorem}\label{theo_JDRSAM_diffstate_main_result}
Under Assumptions~\ref{as_assetjumps_upanddown_1}--~\ref{as_JDRSAM_A_rank_n}
the following hold:
\begin{enumerate}[1.]
\item the optimal asset allocation is the unique maximizer of the supremum~\eqref{eq_JDRSAM_diffstate_HJBPDE}

\item $\tilde{\Phi}$ is the unique $C^{1,2}\left([0,T]\times \mathbb{R}^n\right)$ solution  of the RS HJB PIDE~\eqref{eq_JDRSAM_diffstate_exptrans_HJBPDE}-\eqref{eq_JDRSAM_diffstate_exptrans_HJBPDE_termcond}. Moreover, $\tilde{\Phi}$ satisfies the property~\eqref{eq_JDRSAM_tildePhi_property_convexityofPhi}

\item $\Phi$ is the unique $C^{1,2}\left([0,T]\times \mathbb{R}^n\right)$ solution  of the RS HJB PIDE~\eqref{eq_JDRSAM_diffstate_HJBPDE}-\eqref{eq_JDRSAM_diffstate_HJBPDE_termcond}. Moreover, $\Phi$ is convex in its argument $x$.
\\
\end{enumerate}

\end{theorem}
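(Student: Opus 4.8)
The plan is to derive the theorem by combining four ingredients: pointwise uniqueness of the maximizer in the Hamiltonian of~\eqref{eq_JDRSAM_diffstate_HJBPDE}, existence of a classical solution of the transformed RS HJB PDE~\eqref{eq_JDRSAM_diffstate_exptrans_HJBPDE}, a verification theorem identifying that solution with the value function $\tilde\Phi$, and the monotone change of variable $\tilde\Phi=e^{-\theta\Phi}$. For the first ingredient, fix $(t,x)$ and $p\in\mathbb{R}^n$; the $h$-dependent part of $L^h_t\Phi$ is $-\theta(\Lambda\Sigma^T h)^T p-g(x,h)$, and $h\mapsto g(x,h)$ is strictly convex on $\mathcal{J}$: the quadratic term $\tfrac12(\theta+1)h^T\Sigma\Sigma^T h$ is strictly convex by Assumption~\ref{as_JDRSAM_sigposdef} together with $\theta+1>0$; the jump integral, whose integrand can be written $\tfrac1\theta[(1+h^T\gamma(z))^{-\theta}-1+\theta h^T\gamma(z)1_{\mathbf{Z}_0}(z)]$, is convex because $u\mapsto(1+u)^{-\theta}$ is convex on $(-1,\infty)$ and $h\mapsto h^T\gamma(z)$ is affine with $1+h^T\gamma(z)>0$ on $\mathcal{J}$, and it is finite on compact subsets of $\mathcal{J}$ since the integrand is $O(|\gamma(z)|^2)$ near the small jumps, using~\eqref{as_assetjumps_gamma_integrable}; the remaining terms are affine. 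Since moreover $(1+h^T\gamma(z))^{-\theta}\to+\infty$ as $h\to\partial\mathcal{J}$ by the support condition in Assumption~\ref{as_assetjumps_upanddown_1}, we have $L^h_t\Phi\to-\infty$ at the boundary, so the supremum over the bounded convex set $\mathcal{J}$ is attained, at a unique interior point by strict concavity. Denote it $\hat h(t,x,p)$; it is continuous. This yields assertion~1 once $\Phi$ is known to solve~\eqref{eq_JDRSAM_diffstate_HJBPDE}.

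Next, the existence of a classical solution of~\eqref{eq_JDRSAM_diffstate_exptrans_HJBPDE}--\eqref{eq_JDRSAM_diffstate_exptrans_HJBPDE_termcond} --- the analytic core, carried out in the later sections referenced above, and in my view the main obstacle --- I would establish by approximation in policy space. For a fixed feedback control $h(\cdot)$, freezing the infimum at $h$ turns~\eqref{eq_JDRSAM_diffstate_exptrans_HJBPDE} into a linear parabolic equation whose coefficients inherit from the model structure (linear drift $b+Bx$, uniform ellipticity $\Lambda\Lambda^T>0$ from Assumption~\ref{as_JDRSAM_lamposdef}, local Lipschitz dependence and controlled growth of $g(\cdot,h)$ in $x$) enough regularity to admit a unique classical solution on bounded cylinders, and a priori estimates then let one pass to the limit and obtain a solution on all of $[0,T]\times\mathbb{R}^n$. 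Starting from an admissible $h_0$, one solves the frozen PDE for $\tilde\Phi_k$, updates $h_{k+1}(t,x)=\hat h(t,x,D\tilde\Phi_k(t,x))$ via the first step, and shows the sequence is monotone and convergent, with interior parabolic (Schauder) estimates upgrading the limit to a $C^{1,2}$ solution $\tilde\Phi$ of the full nonlinear equation with terminal data~\eqref{eq_JDRSAM_diffstate_exptrans_HJBPDE_termcond}. Property~\eqref{eq_JDRSAM_tildePhi_property_convexityofPhi} is propagated along the iteration, or obtained afterwards from Corollary~\ref{coro_JDRSAM_convexity_equivprop_tildePhi} once $\tilde\Phi$ is identified with the value function.

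That identification --- and hence uniqueness and assertion~2 --- comes from a verification argument: applying It\^o to $s\mapsto\exp\{\theta\int_t^s g\,du\}\,\tilde\Phi(s,X_s)$ under $\mathbb{P}_h$ and using that $\tilde\Phi$ is a supersolution of~\eqref{eq_JDRSAM_diffstate_exptrans_HJBPDE} gives $\tilde I(v,x,h;t,T)\ge\tilde\Phi(t,x)$ for every admissible $h$, with equality for the feedback $h^*(s)=\hat h(s,X_s,D\tilde\Phi(s,X_s))$, so $\tilde\Phi$ equals the value function~\eqref{eq_JDRSAM_diffstate_exptrans_valuefunction}; any other $C^{1,2}$ solution is trapped by the same computation, giving uniqueness. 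The subtle point is that $h^*$ must itself be admissible, i.e. $\mathbf{E}\,\chi_T^{h^*}=1$: because $\chi^{h^*}$ has jumps, neither Novikov nor Kazamaki applies, and this requires a dedicated argument exploiting the boundedness of $h^*$ (from the boundedness lemma above, since $h^*$ takes values in $\mathcal{J}$) together with~\eqref{as_assetjumps_gamma_integrable} --- this is the second genuine difficulty.

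Finally, since $\tilde\Phi>0$ is $C^{1,2}$, $\Phi=-\tfrac1\theta\ln\tilde\Phi$ is $C^{1,2}$; substituting $\tilde\Phi=e^{-\theta\Phi}$ into~\eqref{eq_JDRSAM_diffstate_exptrans_HJBPDE}, with $D\tilde\Phi=-\theta e^{-\theta\Phi}D\Phi$ and $D^2\tilde\Phi=e^{-\theta\Phi}(\theta^2 D\Phi\,D\Phi^T-\theta D^2\Phi)$, the exponential factor cancels and one recovers exactly~\eqref{eq_JDRSAM_diffstate_HJBPDE}--\eqref{eq_JDRSAM_diffstate_HJBPDE_operator_L} with terminal data~\eqref{eq_JDRSAM_diffstate_HJBPDE_termcond}, the term $-\tfrac\theta2(D\Phi)^T\Lambda\Lambda^T D\Phi$ arising from the $\theta^2 D\Phi\,D\Phi^T$ contribution to $D^2\tilde\Phi$. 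Uniqueness for $\Phi$ is equivalent to that for $\tilde\Phi$ via the bijection $r\mapsto e^{-\theta r}$; convexity of $\Phi$ in $x$ is Proposition~\ref{prop_JDRSAM_convexity_Phi}, now applicable since $\Phi$ is the value function; and the maximizer in~\eqref{eq_JDRSAM_diffstate_HJBPDE} is $\hat h(t,x,D\Phi(t,x))$ from the first step, which is assertion~1. This completes the plan.
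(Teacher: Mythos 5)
Your proposal follows essentially the same route as the paper's proof: strict concavity of the Hamiltonian giving a unique interior maximizer (Proposition~\ref{prop_JDRSAM_diffstate_optimcontrol}), approximation in policy space on bounded cylinders followed by extension to $[0,T)\times\mathbb{R}^n$ for existence of a classical solution of the transformed equation (Theorem~\ref{Theo_JDRSAM_diffstate_existence} and Corollary~\ref{Coro_JDRSAM_diffstate_existence}), a verification argument for identification and uniqueness together with a dedicated admissibility proof for $h^*$, whose Dol\'eans exponential has jumps (Theorem~\ref{Theo_JDRSAM_verification}, Proposition~\ref{prop_JDRSAM_diffstate_hstar_admissible}), and the monotone transformation $\tilde{\Phi}=e^{-\theta\Phi}$ plus Proposition~\ref{prop_JDRSAM_convexity_Phi} for the third assertion. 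The only step you leave implicit is the paper's Proposition~\ref{prop_JDRSAM_diffstate_OptimalJandI}, the (easy, measure-change identity) observation that optimality for the $\mathbb{P}_h$-reformulated criterion transfers back to the original criterion $J$ under $\mathbb{P}$.
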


\begin{proof} The proof is based on a series of results proved in Sections \ref{sec-existence}--\ref{sec-classical} below. These combine to give us the following argument.\\

\emph{Existence of an optimal control - } by Proposition~\ref{prop_JDRSAM_diffstate_optimcontrol}, the supremum in~\eqref{eq_JDRSAM_diffstate_HJBPDE} admits a unique Borel measurable maximizer. Moreover, by Proposition~\ref{prop_JDRSAM_diffstate_hstar_admissible}, this maximizer is admissible and by Proposition~\ref{prop_JDRSAM_diffstate_OptimalJandI} it is also a maximizer with respect to the $\mathbb{P}$-measure criterion $J$ defined in~\eqref{eq_JDRSAM_criterion_J}. Thus, we can take this maximizer as our optimal asset allocation.
\\

\emph{Existence of a classical ($C^{1,2}$) solution - } by Corollary~\ref{Coro_JDRSAM_diffstate_existence}, $\tilde{\Phi}$ is a $C^{1,2}\left([0,T]\times \mathbb{R}^n\right)$ solution  of the RS HJB PDE~\eqref{eq_JDRSAM_diffstate_exptrans_HJBPDE}-\eqref{eq_JDRSAM_diffstate_exptrans_HJBPDE_termcond}.
\\

\emph{Uniqueness of the classical solution - }   the existence of zero beta policies enable us to deduce (as in Step 1 of the proof of Theorem~\ref{Theo_JDRSAM_diffstate_existence}) that $\tilde{\Phi}$ is bounded. Part (i). of Verification Theorem~\ref{Theo_JDRSAM_verification} therefore applies. Choosing as optimal control the unique maximizer of the supremum in~\eqref{eq_JDRSAM_diffstate_HJBPDE}, part (ii). of Theorem~\ref{Theo_JDRSAM_verification} also applies: $\tilde{\Phi}$ is the unique solution to the HJB PIDE. Property~\eqref{eq_JDRSAM_tildePhi_property_convexityofPhi} is proved in Corollary~\ref{coro_JDRSAM_convexity_equivprop_tildePhi}.
\\

By Corollaries~\ref{Coro_JDRSAM_diffstate_existence} and~\ref{Coro_JDRSAM_verification}, it then follows that $\Phi$ is the unique classical solution to the HJB PIDE~\eqref{eq_JDRSAM_diffstate_HJBPDE} with terminal condition \eqref{eq_JDRSAM_diffstate_HJBPDE_termcond}. Moreover, $\Phi$ is convex in its second argument $x$.
\\

\end{proof}

This result proves that we have solved our original control problem in the context of strong, classical solutions. What would this imply in terms of weaker viscosity solutions? As a classical solution is also a viscosity solution, our result implies that the value function is indeed a viscosity solution of the HJB PDE. However, uniqueness of classical solutions does not necessarily imply uniqueness of viscosity solutions. To prove uniqueness in the viscosity sense, we would need a comparison result such as Theorem 33 in Davis and Lleo~\cite{dall_JDRSAM_Visc}.
\\

In the remainder of this paper, we develop the various technical arguments required in the proof of Theorem~\ref{theo_JDRSAM_diffstate_main_result}.
\\

%%%%%%%%%%%%%%%%%%%%%%%%%%%%%%%%%%%%%%%%%%%%%%%%%%%%%%%%%%%%%%%%%%%%%%%%%%%%%%%%%%%
%
%   NEXT SECTION
%
%%%%%%%%%%%%%%%%%%%%%%%%%%%%%%%%%%%%%%%%%%%%%%%%%%%%%%%%%%%%%%%%%%%%%%%%%%%%%%%%%%%

%%%
\section{Existence of a Maximizing Control\\}\label{sec-existence}
\begin{proposition}\label{prop_JDRSAM_diffstate_optimcontrol}
        Under Assumption~\ref{as_JDRSAM_sigposdef}, the supremum in~\eqref{eq_JDRSAM_diffstate_HJBPDE} admits a unique Borel measurable maximizer $\hat{h}(t,x,p)$ for $(t,x,p) \in [0,T]\times\mathbb{R}^n\times\mathbb{R}^n$. Moreover, the maximizer $\hat{h}(t,x,p)$ is an interior point of the set $\overline{\mathcal{J}}$. 
\\
\end{proposition}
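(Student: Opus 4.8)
The plan is to recast the supremum in~\eqref{eq_JDRSAM_diffstate_HJBPDE} as a parametric strictly convex minimization, obtain existence and uniqueness of the minimizer from compactness together with strict convexity, and establish interiority from the blow-up of the jump term near $\partial\mathcal{J}$. Isolating in the operator~\eqref{eq_JDRSAM_diffstate_HJBPDE_operator_L} the part that depends on $h$ and using~\eqref{eq_JDRSAM_diffstate_f}, one sees that $L_t^h\Phi(t,x)$ equals an $h$-independent quantity minus
\[
\Psi(x,p,h):=\theta\,h^T\Sigma\Lambda^T p+g(x,h),\qquad p:=D\Phi(t,x),
\]
with $g$ given by~\eqref{eq_JDRSAM_g_func_def}. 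Regarding $p\in\mathbb{R}^n$ as a free parameter, the claimed maximizer $\hat{h}(t,x,p)$ (which does not actually depend on $t$) is precisely the minimizer of $h\mapsto\Psi(x,p,h)$ over $\mathcal{J}$, so it is enough to analyse this problem for each fixed $(x,p)$.

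For fixed $(x,p)$, $h\mapsto\Psi(x,p,h)$ is strictly convex on $\mathcal{J}$: the quadratic term $\tfrac12(\theta+1)h^T\Sigma\Sigma^T h$ of $g$ is strictly convex because $\theta>0$ and $\Sigma\Sigma^T>0$ (Assumption~\ref{as_JDRSAM_sigposdef}), all remaining polynomial terms are affine in $h$, and the integral in $g$ is convex in $h$ since $u\mapsto(1+u)^{-\theta}$ is convex on $(-1,\infty)$ (so the $\mathbf{Z}_0$-integrand $\tfrac1\theta[(1+h^T\gamma)^{-\theta}-1+\theta h^T\gamma]$ and the $\mathbf{Z}\setminus\mathbf{Z}_0$-integrand $\tfrac1\theta[(1+h^T\gamma)^{-\theta}-1]$ are both convex). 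This integral is finite for $h\in\mathcal{J}$ because $\tilde{\mathbf{S}}$ is compact, so $h^T\gamma(z)$ stays in a fixed compact subset of $(-1,\infty)$ uniformly in $z$, making the $\mathbf{Z}_0$-integrand $O(|\gamma(z)|^2)$ and hence $\nu$-integrable by~\eqref{as_assetjumps_gamma_integrable}, and the $\mathbf{Z}\setminus\mathbf{Z}_0$-integrand bounded with $\nu(\mathbf{Z}\setminus\mathbf{Z}_0)<\infty$. For existence, $\mathcal{J}$ is bounded by Assumption~\ref{as_assetjumps_upanddown_1}, so $\overline{\mathcal{J}}$ is compact; extend $\Psi(x,p,\cdot)$ to $\overline{\mathcal{J}}$ with value $+\infty$ where the jump integral diverges. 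The tangent-line inequality $(1+u)^{-\theta}\ge1-\theta u$ makes the jump integrand bounded below by a $\nu$-integrable function (it is nonnegative on $\mathbf{Z}_0$ and $\ge-\tfrac1\theta$ on $\mathbf{Z}\setminus\mathbf{Z}_0$, which has finite $\nu$-mass), so Fatou's lemma yields lower semicontinuity of $\Psi(x,p,\cdot)$ on $\overline{\mathcal{J}}$, and a lower semicontinuous function on a compact set attains its infimum; the value is finite since $\Psi(x,p,0)=-a_0-A_0^T x$, and the strict convexity just established, on the convex set $\{\Psi(x,p,\cdot)<\infty\}\supseteq\mathcal{J}$, makes the minimizer $\hat{h}(x,p)$ unique. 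Joint Borel measurability of $(t,x,p)\mapsto\hat{h}(t,x,p)$ then follows from joint continuity of $\Psi$ on $\mathbb{R}^n\times\mathbb{R}^n\times\mathcal{J}$, uniqueness of the minimizer, and the interiority proved next, via Berge's maximum theorem.

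It remains to show $\hat{h}:=\hat{h}(x,p)\in\mathcal{J}$, the interior of $\overline{\mathcal{J}}$. Suppose not, so $\hat{h}\in\partial\mathcal{J}$; then there is $\psi_0\in\tilde{\mathbf{S}}=\mathrm{supp}(\nu\circ\gamma^{-1})$ with $\hat{h}^T\psi_0=-1$, hence $\nu(\{z:a(z)<\varepsilon\})>0$ for every $\varepsilon>0$, where $a(z):=1+\hat{h}^T\gamma(z)\ge0$ $\nu$-a.e.\ (because $\hat{h}\in\overline{\mathcal{J}}$) and in fact $a>0$ $\nu$-a.e.\ (because $\Psi(x,p,\hat{h})<\infty$). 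Move inward along $h_t:=(1-t)\hat{h}\in\mathcal{J}$, $t\in(0,1]$, for which $1+h_t^T\gamma(z)=t+(1-t)a(z)\ge t>0$. In the difference quotient $t^{-1}(\Psi(x,p,h_t)-\Psi(x,p,\hat{h}))$ the polynomial part converges to a finite limit as $t\downarrow0$, while the jump part has, on $\{a<1\}$, an integrand of nondecreasing modulus as $t\downarrow0$ (since $t\mapsto(t+(1-t)a)^{-\theta}$ is convex), so by monotone convergence — treating the $\mathbf{Z}_0$ and $\mathbf{Z}\setminus\mathbf{Z}_0$ pieces separately — it converges to a limit that behaves near $\{a=0\}$ like $-\int a(z)^{-\theta-1}\,\nu(dz)$. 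Because $\psi_0\in\mathrm{supp}(\nu\circ\gamma^{-1})$ and $\theta>0$, this limit is $-\infty$, so $\Psi(x,p,h_t)<\Psi(x,p,\hat{h})$ for small $t>0$, contradicting minimality; therefore $\hat{h}\in\mathcal{J}$.

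The convex-analytic content — reduction, strict convexity, existence on the compact set $\overline{\mathcal{J}}$ — is routine; the delicate point, and what I expect to be the main obstacle, is the interiority step, specifically the divergence near the active constraint that forces the inward directional derivative of $\Psi$ to $-\infty$ at a putative boundary minimizer. One has to rule out the possibility that $\nu\circ\gamma^{-1}$ charges so little mass near the active point(s) $\psi_0$ that the jump integral and its inward derivative both remain finite on $\partial\mathcal{J}$, and this is exactly where the description of $\tilde{\mathbf{S}}$ as $\mathrm{supp}(\nu\circ\gamma^{-1})$, together with the ``smallest hypercube'' clause of Assumption~\ref{as_assetjumps_upanddown_1}, must be used.
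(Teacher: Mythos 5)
Your reduction of the supremum to minimizing $\Psi(x,p,h)=\theta h^T\Sigma\Lambda^T p+g(x,h)$ over $h$, the strict convexity argument (quadratic term strictly convex by Assumption~\ref{as_JDRSAM_sigposdef}, jump term convex since $u\mapsto(1+u)^{-\theta}$ is convex), the existence of a minimizer on the compact set $\overline{\mathcal{J}}$, uniqueness, and the measurable-selection step are all sound and essentially coincide with the paper's argument, which phrases the same facts as strict concavity of the $h$-dependent part of $L^h_t\Phi$ and invokes measurable selection. The one step the paper does not spell out is interiority (its underlying rationale, stated only in Appendix~\ref{sec_Admissibility}, is that $\mathcal{L}(x,p,h)\to-\infty$ as $h\to\partial\mathcal{J}$), and this is exactly where your proposal has a genuine gap, one you yourself flag in the closing paragraph but do not close.

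The gap is the assertion that, at a putative boundary minimizer $\hat h$ with active point $\psi_0\in\tilde{\mathbf{S}}$, the inward derivative along $h_t=(1-t)\hat h$ is $-\infty$ ``because $\psi_0\in\mathrm{supp}(\nu\circ\gamma^{-1})$''. Support membership only says every neighbourhood of $\psi_0$ carries positive mass; it does not force $\int a(z)^{-\theta-1}\nu(dz)=\infty$ for $a(z)=1+\hat h^T\gamma(z)$. For instance, take $m=1$, $\Sigma\Lambda^T\neq0$, and $\nu\circ\gamma^{-1}=\sum_{k\geq 2}2^{-k}\delta_{1-1/k}+\delta_{-1/2}$: then $\tilde{\mathbf{S}}=\{-1/2\}\cup\{1-1/k:k\geq2\}\cup\{1\}$, the smallest closed interval containing it is $[-1/2,1]$, so Assumption~\ref{as_assetjumps_upanddown_1} holds and $\mathcal{J}=(-1,2)$; at the boundary point $\hat h=-1$ one has $a=1/k$ on the atom at $1-1/k$, and both $\int a^{-\theta}d(\nu\circ\gamma^{-1})=\sum_k k^{\theta}2^{-k}+(3/2)^{-\theta}$ and $\int a^{-\theta-1}d(\nu\circ\gamma^{-1})\leq\sum_k k^{\theta+1}2^{-k}+1$ are finite. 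Hence $\Psi(x,p,\cdot)$ and its inward one-sided derivative remain finite at $\hat h=-1$, and since the proposition quantifies over all $p\in\mathbb{R}^n$, nothing in your argument prevents choosing $p$ so that the finite, strictly convex function is nondecreasing on $[-1,2]$ with its minimizer exactly at the boundary. So the divergence you need cannot be deduced from the support/smallest-hypercube description of $\tilde{\mathbf{S}}$ alone: that assumption constrains \emph{where} the jump sizes can lie, not \emph{how much} mass $\nu\circ\gamma^{-1}$ places near the active hyperplane. Closing the interiority claim requires an additional quantitative condition on $\nu$ near the faces of the hypercube (or an argument of a different kind); as it stands, this step fails, and it is precisely the step the paper itself asserts without proof.
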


\begin{proof}
The supremum in~\eqref{eq_JDRSAM_diffstate_HJBPDE} can be expressed as
\begin{eqnarray}\label{eq_JDRSAM_diffstate_supL_deriv}
        && \sup_{h \in \mathcal{J}} L_{t}^{h}\Phi
                    \nonumber\\
        &=&
            \left( b+ Bx \right)^T D\Phi
            + \frac{1}{2} \textrm{tr} \left( \Lambda \Lambda^T  D^2 \Phi \right)
            - \frac{\theta}{2} (D\Phi)^T \Lambda \Lambda^T  D\Phi
            +a_0+A_0^T x
            \nonumber\\
        &&
            +\sup_{h \in \mathcal{J}} \left\{
            - \frac{1}{2} \left(\theta+1 \right)h^T \Sigma\Sigma^T h
            -\theta h^T \Sigma\Lambda^T D\Phi
            +h^T (\hat{a} + \hat{A}x)
                \right. \nonumber\\
            &&\left.
            -\frac{1}{\theta}\int_{\mathbf{Z}}\left\{
                                                \left[\left(1+h^T \gamma(z)\right)^{-\theta}-1\right]
                                        +\theta h^T \gamma(z)\mathit{1}_{\mathbf{Z}_0}(z)
                    \right\}\nu(dz)
            \right\}
\end{eqnarray}
Under Assumption~\ref{as_JDRSAM_sigposdef}, for any $p \in \mathbb{R}^n$ the terms
\begin{eqnarray}
    - \frac{1}{2} \left(\theta+1 \right)h^T \Sigma\Sigma^T h
    -\theta h^T \Sigma\Lambda^T p
    +h^T (\hat{a} + \hat{A}x)
    - \int_{\mathbf{Z}} h^T \gamma(z)\mathit{1}_{\mathbf{Z}_0}(z)\nu(dz)
                                        \nonumber
\end{eqnarray}
and
\begin{eqnarray}
        -\frac{1}{\theta}\int_{\mathbf{Z}}
       \left\{
               \left[\left(1+h^T \gamma(z)\right)^{-\theta}-1\right]
           \right\}\nu(dz)
                                                                                                                                \nonumber
\end{eqnarray}
are both strictly concave in $h$ $\forall z \in \mathbb{Z}$ a.s. $d\nu$. Therefore, the supremum is reached for a unique maximizer $\hat{h}(t,x,p)$, which is an interior point of the set $\overline{\mathcal{J}}$, which is the closure of the set $\mathcal{J}$ defined in equation~\eqref{def_JDRSAM_setJ}, and the supremum, evaluated at $\hat{h}(t,x,p) \in \mathbb{R}^n$, is finite. By measurable selection, $\hat{h}$ can be taken as a Borel measurable function on $[0,T]\times\mathbb{R}^n\times\mathbb{R}^n$.
\\
\end{proof}

%%%%%%%%%%%%%%%%%%%%%%%%%%%%%%%%%%%%%%%%%%%%%%%%%%%%%%%%%%%%%%%%%%%%%%%%%%%%%%%%%%%
%
%   NEXT SECTION
%
%%%%%%%%%%%%%%%%%%%%%%%%%%%%%%%%%%%%%%%%%%%%%%%%%%%%%%%%%%%%%%%%%%%%%%%%%%%%%%%%%%%

%%%
\section{Verification Theorems}\label{sec-verification}
In this section, we prove a verification theorem to the effect that if~\eqref{eq_JDRSAM_diffstate_HJBPDE} has a $C^{1,2}$ solution then that solution is equal to $\Phi$ defined by~\eqref{eq_JDRSAM_diffstate_valuefunction} and the control $h^*(t) = \hat{h}(t,x,D\Phi)$ is optimal. We will first prove a verification theorem for the exponentially transformed problem~\eqref{eq_JDRSAM_diffstate_exptrans_valuefunction} with HJB PDE~\eqref{eq_JDRSAM_diffstate_exptrans_HJBPDE} and value function $\tilde{\Phi}(t,x)$. As a corollary, we will obtain a verification theorem for the risk sensitive control problem with~\eqref{eq_JDRSAM_diffstate_valuefunction}, HJB PDE~\eqref{eq_JDRSAM_diffstate_HJBPDE} and value function $\Phi(t,x)$. Define the first order operator
\begin{eqnarray}\label{eq_JDRSAM_logtrans_HJBPDE_operator_L}
    \tilde{L}^{h}\varphi(t,x)
    &=& \left(b + B x - \theta \Lambda \Sigma^T h
         \right)^T D\varphi(t,x)
                + \theta g(x,h) \varphi(t,x)
\end{eqnarray}
\\

\begin{theorem}[Verification Theorem for the Exponentially Transformed Control Problem]\label{Theo_JDRSAM_verification}
        Let $\tilde{\phi}$ be a $C^{1,2}\left([0,T]\times \mathbb{R}^n\right)$ bounded function.
\begin{enumerate}[(i)]
        \item Assume that $\tilde{\phi}(T,x) \leq e^{-\theta \ln v} \; \forall x \in \mathbb{R}^n$ and
\begin{eqnarray}
   &&  \frac{\partial \tilde{\phi}}{\partial t}(t,x)
                + \frac{1}{2} \textrm{tr} \left( \Lambda \Lambda^T  D^2 \tilde{\phi}(t,x)\right)
                        + H(t,x,\tilde{\phi},D\tilde{\phi})
    \geq        0
                                                                                                                \nonumber
\end{eqnarray}
        on $[0,T]\times \mathbb{R}^n$, then $\tilde{\phi}(t,x) \leq \tilde{\Phi}(t,x) \; \forall (t,x) \in [0,T]\times\mathbb{R}^n$

        \item Further assume that $\tilde{\phi}(T,x) = e^{-\theta \ln v} \; \forall x \in \mathbb{R}^n$ and there exists a Borel-measurable minimizer $\tilde{h}^*(t,x)$ of $\tilde{h} \mapsto \tilde{L}^{\tilde{h}}\tilde{\phi}$ defined in~\eqref{eq_JDRSAM_logtrans_HJBPDE_operator_L} such that
\begin{eqnarray}
   &&  \frac{\partial \tilde{\phi}}{\partial t}(t,x)
                + \frac{1}{2} \textrm{tr} \left( \Lambda \Lambda^T  D^2 \tilde{\phi}(t,x)\right)
                        + H(t,x,\tilde{\phi},D\tilde{\phi})
                                                                                                        \nonumber\\
        &=&     \frac{\partial \tilde{\phi}}{\partial t}(t,x)
                + \frac{1}{2} \textrm{tr} \left( \Lambda \Lambda^T  D^2 \tilde{\phi}(t,x)\right)
                        + \tilde{L}^{\tilde{h}^*}\tilde{\phi}
                                                                                                        \nonumber\\
    &=& 0
                                                                                                                \nonumber
\end{eqnarray}
                and the stochastic differential equation
\begin{eqnarray}
    dX(t)
    &=&     \left(b + B X(t) - \theta\Lambda \Sigma^T h(t)\right) dt
                + \Lambda dW_{t}^{\theta}
                                                                                                                \nonumber
\end{eqnarray}
defines a unique solution $X(s)$ for each given initial data $X_t = x$ and the process $\pi^*(s) := \tilde{h}^*(s, X(s))$ is a well-defined control process in $\tilde{\mathcal{A}}(T)$. Then $\tilde{\phi} = \tilde{\Phi}$ and $\pi^*(s)$ is an optimal Markov control process.
\end{enumerate}

\end{theorem}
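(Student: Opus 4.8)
The plan is to run a standard verification argument, the key simplification being that after the change of measure the factor process $X$ has \emph{continuous} paths, so that It\^o's formula produces no non-local terms. Fix $(t,x)\in[0,T]\times\mathbb{R}^n$. Given a control $h$, let $X(\cdot)$ solve~\eqref{eq_JDRSAM_diffstate_state_SDE} with $X(t)=x$ and set
\[
Y(s) := \exp\left\{\theta\int_t^s g(X_u,h(u))\,du\right\}\tilde{\phi}(s,X(s)),\qquad s\in[t,T].
\]
Applying It\^o's formula to $Y$ under $\mathbb{P}_h$, using~\eqref{eq_JDRSAM_diffstate_state_SDE}--\eqref{eq_JDRSAM_diffstate_f} and the definition~\eqref{eq_JDRSAM_logtrans_HJBPDE_operator_L} of $\tilde{L}^h$, the finite-variation part of $dY(s)$ equals
\[
\exp\left\{\theta\int_t^s g\,du\right\}\left[\frac{\partial\tilde{\phi}}{\partial t}(s,X(s)) + \frac12\,\textrm{tr}\!\left(\Lambda\Lambda^T D^2\tilde{\phi}(s,X(s))\right) + \tilde{L}^{h(s)}\tilde{\phi}(s,X(s))\right]ds,
\]
while the martingale part is $\exp\{\theta\int_t^s g\,du\}\,(D\tilde{\phi}(s,X(s)))^T\Lambda\,dW_s^h$. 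Everything then follows by inspecting the sign of the drift.

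\textbf{Part (i).} It suffices to take $h$ valued in $\overline{\mathcal{J}}$, so that $\tilde{L}^{h(s)}\tilde{\phi} \geq H(\cdot,\tilde{\phi},D\tilde{\phi})$ by the definition~\eqref{eq_JDRSAM_diffstate_logtrans_H_function} of $H$ as an infimum; the assumed differential inequality then makes the drift of $Y$ nonnegative, so $Y$ is a local submartingale. Localizing with the exit times $\tau_R := \inf\{s\geq t : |X(s)|\geq R\}$ — on $[t,s\wedge\tau_R]$ both $D\tilde{\phi}$ and $\exp\{\theta\int_t^\cdot g\,du\}$ are bounded, so the stopped stochastic integral is a true martingale — yields $\tilde{\phi}(t,x) = Y(t) \leq \mathbf{E}_{t,x}^h[Y(s\wedge\tau_R)]$. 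Letting $R\to\infty$ (paths of $X$ do not explode) and then $s\uparrow T$, and dominating $|Y(s\wedge\tau_R)|$ by $\|\tilde{\phi}\|_\infty\exp\{\theta\int_t^T g^+(X_u,h(u))\,du\}$, dominated convergence and $\tilde{\phi}(T,\cdot)\leq e^{-\theta\ln v}$ give
\[
\tilde{\phi}(t,x) \leq \mathbf{E}_{t,x}^h\!\left[\exp\left\{\theta\int_t^T g(X_u,h(u))\,du - \theta\ln v\right\}\right] = \tilde{I}(v,x;h;t,T).
\]
Taking the infimum over $h\in\mathcal{A}(T)$ gives $\tilde{\phi}\leq\tilde{\Phi}$ on $[0,T]\times\mathbb{R}^n$.

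\textbf{Part (ii).} With $\pi^*(s) := \tilde{h}^*(s,X(s))$, the defining property of the minimizer gives $\tilde{L}^{\pi^*(s)}\tilde{\phi}(s,X(s)) = H(s,X(s),\tilde{\phi},D\tilde{\phi})$, so under $\mathbb{P}_{\pi^*}$ the drift of $Y$ vanishes identically and $Y$ is a local martingale; since $\pi^*\in\tilde{\mathcal{A}}(T)$ and $\tilde{\phi}$ is bounded, the same localization-and-domination argument makes it a true martingale, whence $\tilde{\phi}(t,x) = \mathbf{E}_{t,x}^{\pi^*}[Y(T)] = \tilde{I}(v,x;\pi^*;t,T) \geq \tilde{\Phi}(t,x)$, now using $\tilde{\phi}(T,\cdot) = e^{-\theta\ln v}$. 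Combining with part~(i) forces $\tilde{\phi} = \tilde{\Phi}$, and since $\pi^*$ attains the infimum defining $\tilde{\Phi}$ it is an optimal Markov control.

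The main obstacle, as in any verification theorem on an unbounded state space, is justifying the limit in the localization, i.e.\ certifying that the local (sub)martingale $Y$ is genuinely a (sub)martingale — equivalently, that $\{Y(s\wedge\tau_R)\}_R$ is uniformly integrable. This hinges on the boundedness of $\tilde{\phi}$ (assumed) and on the exponential-moment bound $\mathbf{E}_{t,x}^h[\exp\{\theta\int_t^T g^+(X_u,h(u))\,du\}] < \infty$, which is the real work: it comes from the at-most-linear growth of $g(\cdot,h)$ in $x$ (uniform over the bounded set of admissible control values — here Assumption~\ref{as_assetjumps_upanddown_1} is used to control the jump integral appearing in $g$) together with the fact that, under $\mathbb{P}_h$, $X$ solves the affine SDE~\eqref{eq_JDRSAM_diffstate_state_SDE} whose drift $b+Bx-\theta\Lambda\Sigma^T h$ is a linear drift perturbed only by a bounded term, so $X$ has Gaussian marginals and time-integrals of linear functionals of $X$ have finite exponential moments. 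A secondary point to verify is that the infimum defining $\tilde{\Phi}$ is unchanged when restricting to $\overline{\mathcal{J}}$-valued controls, which legitimizes the inequality $\tilde{L}^h\tilde{\phi}\geq H$ used in part~(i).
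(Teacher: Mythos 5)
Your proposal is correct and follows essentially the same route as the paper's proof: It\^o's formula applied to $e^{\theta\int_t^s g\,du}\,\tilde{\phi}(s,X(s))$ under $\mathbb{P}_h$, the sign of the drift controlled by the HJB inequality (with equality under the minimizer for part (ii)), localization by exit times, and passage to the limit using the boundedness of $\tilde{\phi}$, the linear growth of $g$ in $x$ uniformly over the bounded control set, and the terminal condition. Your explicit flagging of the exponential-moment bound for the dominating variable is the same estimate the paper invokes (via $|g|\leq C_2|1+X|$ and $e^{\theta(T-t)\sup_s|X(s)|}$), just stated more carefully.
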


\begin{proof}
The following proof is based on an argument used by Touzi~\cite{to02}.

\begin{enumerate}[(i).]
\item Let $\tilde{h} \in \tilde{\mathcal{A}}(T)$ be an arbitrary control, with $X(t)$ the state process with initial data $X(t) = x$. Define the stopping time
\begin{equation}
        \tau_N := T \wedge \inf \left\{ s > t : |X_s - x| \geq N \right\}
                                                                                                                                \nonumber
\end{equation}

Define $Z(s) = \theta \int_{t}^{s}
g(s,X_s,\hat{h}_s) ds$, then
\begin{equation}
    d\left(e^{Z_s}\right) := \theta g(s,X_s,\hat{h}_s)e^{Z_s}   \nonumber
\end{equation}
Also, by It\^o, for $s \in \left[t, \tau_{\delta} \right]$,
\begin{equation}
   d\tilde{\phi}_s
   = \left\{
        \frac{\partial \tilde{\phi}}{\partial s}
   +   \mathcal{L}\tilde{\phi} \right\} ds
        +    D\tilde{\phi}^T \Lambda dW_{s}^{\theta}
                                                                \nonumber
\end{equation}
where $\mathcal{L}$ is the generator of the state process $X(t)$ defined as:
\begin{eqnarray}
    \mathcal{L}\tilde{\phi}(t,x)
    &:=&
        \left( b + B x - \theta \Lambda \Sigma^T h(s) \right)^T D\tilde{\phi}
    +       \frac{1}{2} \textrm{tr} \left( \Lambda \Lambda^T D^2 \tilde{\phi} \right)
                                                    \nonumber
\end{eqnarray}
\\

By the It\^o product rule, and since $dZ_s \cdot \tilde{\phi}_s = 0$, we get
\begin{equation}
        d\left(\tilde{\phi}_s e^{Z_s}\right)
    =   \tilde{\phi}_s d\left(e^{Z_s}\right) + e^{Z_s} d\tilde{\phi}_s
                                        \nonumber
\end{equation}
and hence for $s \in [t, \tau_{N}]$
\begin{eqnarray}
    \tilde{\phi}(s,X_s)e^{Z_s} &=&
        \tilde{\phi}(t,x)e^{Z_{t}}
        + \theta\int_{t}^{s}
                \tilde{\phi}(u,X_u)g(u,X_u,\hat{h}_u)e^{Z_u}du
                                                    \nonumber\\
    &&  + \int_{t}^{s}
            \left( \frac{\partial \tilde{\phi}}{\partial u}(u,X_u)
            + \mathcal{L}\tilde{\phi}(u,X_u) e^{Z_u}\right)du
        + \int_{t}^{s} D\tilde{\phi}^T \Lambda dW_{u}^{\theta}
                                                    \nonumber
\end{eqnarray}

Because for an arbitrary control $h$,
\begin{eqnarray}
        &&              \frac{\partial \tilde{\phi}}{\partial t}(t,x)
                + \frac{1}{2} \textrm{tr} \left( \Lambda \Lambda^T  D^2 \tilde{\phi}(t,x)\right)
      + \mathcal{L}^{\tilde{h}}\tilde{\phi}(t,X_t)
                                                                                                        \nonumber\\
        &\geq&
        \frac{\partial \tilde{\phi}}{\partial t}(t,x)
                + \frac{1}{2} \textrm{tr} \left( \Lambda \Lambda^T  D^2 \tilde{\phi}(t,x)\right)
                        + H(t,x,\tilde{\phi},D\tilde{\phi})
                                                                                                        \nonumber\\
   &\geq&       0
                                                                                                        \nonumber
\end{eqnarray}
and $e^{Z_s} \geq 0 \; \forall s \in [t,\tau_N]$ we have
\begin{eqnarray}
    \tilde{\phi}(t,x)e^{Z_{t}} &\leq&
       \tilde{\phi}(s,X_s)e^{Z_s}
       + \int_{t}^{s} D\tilde{\phi}^T \Lambda dW_{u}^{\theta}
                                                                                                                                \nonumber
\end{eqnarray}
Taking the expectation, we obtain
\begin{eqnarray}
    \tilde{\phi}(t,x)e^{Z_{t}}
        &\leq&
      \mathbf{E}_{t,x}^{\tilde{h},\theta} \left[ \tilde{\phi}(s,X_s)e^{Z_s} \right]
        =       \mathbf{E}_{t,x}^{\tilde{h},\theta} \left[ \tilde{\phi}(s,X_s)e^{\theta \int_{t}^{s}
g(u,X_u,\hat{h}_u) du} \right]
                                                    \nonumber
\end{eqnarray}
\\

In particular, take $s = \tau_N$ and note that $e^{Z_{t}} = 1$, then
\begin{eqnarray}
    \tilde{\phi}(t,x)e^{Z_{t}}
        &\leq&
                \mathbf{E}_{t,x}^{\tilde{h},\theta} \left[ \tilde{\phi}(\tau_N,X_{\tau_N})e^{\theta \int_{t}^{\tau_{N}}
g(u,X_u,\hat{h}_u) du} \right]
                                                    \nonumber
\end{eqnarray}
\\

Since $\tilde{\phi}$ is assumed to be bounded, there exists a constant $C_1 > 0$ such that:
\begin{eqnarray}
        \Big|
                \tilde{\phi}(s,X_s)
                e^{\theta \int_{s}^{\tau_{N}}g(s,X_s,\hat{h}_s) ds}
        \Big|
        \leq C_1e^{\theta \int_{t}^{\tau_{N}} g(u,X_u,\hat{h}_u) du}
                                                    \nonumber
\end{eqnarray}
\\

Since for an arbitrary admissible control $\tilde{h} \in \mathcal{A}(T)$ and fixed $s \in [t,T]$ there exists some constant $C_2 >0$ such that
\begin{equation}
        \left| g(s,X_s,\hat{h}_s) \right| \leq C_2 \left| 1 + X(s) \right|
                                                                                                                                        \nonumber
\end{equation}
Then
\begin{eqnarray}
        \Big|
                \tilde{\phi}(s,X_s)
                e^{\theta \int_{s}^{\tau_{N}}g(s,X_s,\hat{h}_s) ds}
        \Big|
        &\leq& C_3e^{\theta \int_{t}^{\tau_{N}} \left| 1 + X(s) \right| du}
                                                                                                                                        \nonumber\\
        &\leq& C_3e^{\theta(\tau_N -t) + \theta \int_{t}^{\tau_{N}} \left| X(s) \right| du}
                                                                                                                                        \nonumber\\
        &\leq& C_4e^{\theta \int_{t}^{\tau_{N}} \left| X(s) \right| du}
                                                                                                                                        \nonumber\\
        &\leq& C_4e^{\theta (T-t) \sup_{t \leq s \leq T}\left| X(s) \right|}
                                                                                                                                        \nonumber
\end{eqnarray}
for $C_3 = C_1 e^{C_2}$ and $C_4 = C_3 e^{\theta(T -t)}$.
\\

By the dominated convergence theorem and the assumption that $\tilde{\phi}(T,X_t) \leq e^{-\theta \ln v}$,
\begin{eqnarray}
        \tilde{\phi}(t,x)
        &\leq& \mathbf{E}_{t,x}^{\tilde{h},\theta} \left[ \tilde{\phi}(T,X_{T})e^{\theta \int_{t}^{T}
g(u,X_u,\hat{h}_u) du} \right]
                                                                                                                        \nonumber\\
        &\leq& \mathbf{E}_{t,x}^{\tilde{h},\theta} \left[e^{\theta \int_{t}^{T}
g(u,X_u,\hat{h}_u) du} - \theta \ln v\right]
                                                                                                                        \nonumber
\end{eqnarray}
We have now proved the first part of the theorem.

\item To prove the second part, we can simply apply the same reasoning for the optimal control $\tilde{h}^*$. Note, however, that with this choice of control we would have
\begin{eqnarray}
   &&  \frac{\partial \tilde{\phi}}{\partial t}(t,x)
                + \frac{1}{2} \textrm{tr} \left( \Lambda \Lambda^T  D^2 \tilde{\phi}(t,x)\right)
                        + H(t,x,\tilde{\phi},D\tilde{\phi})
                                                                                                        \nonumber\\
        &=&     \frac{\partial \tilde{\phi}}{\partial t}(t,x)
                + \frac{1}{2} \textrm{tr} \left( \Lambda \Lambda^T  D^2 \tilde{\phi}(t,x)\right)
                        + \tilde{L}^{\tilde{h}^*}\tilde{\phi}
                                                                                                        \nonumber\\
    &=& 0
                                                                                                        \nonumber
\end{eqnarray}
which would lead us to equality in the last equation, i.e.
\begin{eqnarray}
        \tilde{\phi}(t,x)
        &=& \mathbf{E}_{t,x}^{\tilde{h},\theta} \left[e^{\theta \int_{t}^{T}
g(u,X_u,\hat{h}_u) du} - \theta \ln v\right]
                                                                                                                        \nonumber
\end{eqnarray}
\end{enumerate}
\end{proof}

\begin{corollary}[Verification Theorem for the Risk-Sensitive Control Problem]\label{Coro_JDRSAM_verification}
        Let $\phi$ be a $C^{1,2}\left([0,T]\times \mathbb{R}^n\right) \cap C\left([0,T]\times \mathbb{R}^n\right)$ bounded function.
\begin{enumerate}[(i)]
        \item Assume that $\phi(T,x) \leq e^{-\theta \ln v} \; \forall x \in \mathbb{R}^n$ and
\begin{eqnarray}
    \frac{\partial \phi}{\partial t}
    + \sup_{h \in \mathcal{J}}
    L_{t}^{h}\phi(t,X(t))
    \geq        0
                                                                                                                \nonumber
\end{eqnarray}
        on $[0,T]\times \mathbb{R}^n$, then $\phi(t,x) \leq \tilde{\Phi}(t,x) \; \forall (t,x) \in [0,T]\times\mathbb{R}^n$

        \item Further assume that $\phi(T,x) = e^{-\theta \ln v} \; \forall x \in \mathbb{R}^n$ and there exists a minimizer $h^*(t,x)$ of $h \mapsto L^{h}\phi$ defined in~\eqref{eq_JDRSAM_diffstate_HJBPDE_operator_L} such that
\begin{eqnarray}
        \frac{\partial \phi}{\partial t}
    + \sup_{h \in \mathcal{J}}
    L_{t}^{h}\phi(t,X(t))
        =
        \frac{\partial \phi}{\partial t}
    + L_{t}^{h^*}\phi(t,X(t))
    =   0
                                                                                                                \nonumber
\end{eqnarray}
                and the stochastic differential equation
\begin{eqnarray}
    dX(s)
    &=&     \left(b + B X(s^-) - \theta\Lambda \Sigma^T h(s)\right) ds
                                + \Lambda dW_{s}^{\theta}
                                                                                                                \nonumber
\end{eqnarray}
defines a unique solution $X$ for each given initial data $X_t = x$ and the process $\pi^*(s) := \tilde{h}^*(s, X(s))$ is a well-defined control process in $\tilde{\mathcal{A}}(T)$. Then $\phi = \Phi$ and $\pi^*(s)$ is an optimal Markov control process.
\end{enumerate}
\end{corollary}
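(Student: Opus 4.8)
The plan is to obtain the corollary directly from Theorem~\ref{Theo_JDRSAM_verification} by means of the logarithmic (Hopf--Cole) change of variables $\tilde\phi := \exp\{-\theta\phi\}$, which is the candidate-function analogue of the identity $\tilde\Phi = \exp\{-\theta\Phi\}$ in~\eqref{eq_JDRSAM_diffstate_relationship_Phi_tildePhi}. First I would observe that, since $\theta>0$ and $\phi$ is $C^{1,2}$ and bounded, $\tilde\phi$ is again a bounded $C^{1,2}\left([0,T]\times\mathbb{R}^n\right)$ function that is, moreover, bounded away from $0$; hence the inverse map $\phi=-\frac1\theta\ln\tilde\phi$ is smooth and $\phi$, $\tilde\phi$ are related by a strictly decreasing transformation. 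This is precisely what allows Theorem~\ref{Theo_JDRSAM_verification}, whose hypotheses ask for a bounded $C^{1,2}$ candidate, to be invoked with $\tilde\phi$ as the candidate function there.

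Next I would record the elementary derivative identities $\partial_t\tilde\phi=-\theta\tilde\phi\,\partial_t\phi$, $D\tilde\phi=-\theta\tilde\phi\,D\phi$ and $D^2\tilde\phi=\theta^2\tilde\phi\,D\phi(D\phi)^T-\theta\tilde\phi\,D^2\phi$, and substitute them into the operator in~\eqref{eq_JDRSAM_diffstate_exptrans_HJBPDE}. Using the definition~\eqref{eq_JDRSAM_diffstate_logtrans_H_function} of $H$ with $r=\tilde\phi>0$ and $p=-\theta\tilde\phi D\phi$, so that $H(t,x,\tilde\phi,D\tilde\phi)=-\theta\tilde\phi\sup_{h\in\mathcal{J}}\{f(x,h)^TD\phi-g(x,h)\}$, together with $f(x,h)=b+Bx-\theta\Lambda\Sigma^Th$ from~\eqref{eq_JDRSAM_diffstate_f} and the form of $L_t^h$ in~\eqref{eq_JDRSAM_diffstate_HJBPDE_operator_L}, one obtains the pointwise identity
\[
\frac{\partial\tilde\phi}{\partial t}+\frac12\,\textrm{tr}\left(\Lambda\Lambda^T D^2\tilde\phi\right)+H(t,x,\tilde\phi,D\tilde\phi)=-\theta\,\tilde\phi\left(\frac{\partial\phi}{\partial t}+\sup_{h\in\mathcal{J}}L_t^h\phi\right),
\]
in which the cross term $\theta^2\tilde\phi\,(D\phi)^T\Lambda\Lambda^TD\phi$ coming from $D^2\tilde\phi$ reproduces exactly the quadratic correction $-\frac\theta2(D\phi)^T\Lambda\Lambda^TD\phi$ present in $L_t^h$. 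Because the prefactor $-\theta\tilde\phi$ is strictly negative, this identity converts the differential inequality (resp.\ equation) for $\phi$ into the reversed inequality (resp.\ the same equation) for $\tilde\phi$, and it converts the infimum defining $H$ into the supremum in~\eqref{eq_JDRSAM_diffstate_HJBPDE}; in particular a Borel extremizer of $h\mapsto L_t^h\phi$ exists if and only if a Borel extremizer of $h\mapsto\tilde{L}^{h}\tilde\phi$ (defined in~\eqref{eq_JDRSAM_logtrans_HJBPDE_operator_L}) does, the two coincide, and both equal the maximizer $\hat h(t,x,D\phi)$ of Proposition~\ref{prop_JDRSAM_diffstate_optimcontrol}. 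The terminal data transform as $\phi(T,\cdot)=\ln v\iff\tilde\phi(T,\cdot)=v^{-\theta}=e^{-\theta\ln v}$, and the controlled state SDE and the feedback control $\pi^*(s)=h^*(s,X(s))\in\tilde{\mathcal{A}}(T)$ are literally unchanged by the transformation.

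Given these reductions, part~(i) follows by applying Theorem~\ref{Theo_JDRSAM_verification}(i) to $\tilde\phi$: this produces the one-sided comparison $\tilde\phi\le\tilde\Phi$ on $[0,T]\times\mathbb{R}^n$, and applying the decreasing map $-\frac1\theta\ln(\cdot)$ yields the one-sided comparison between $\phi$ and $\Phi$ asserted in part~(i). Part~(ii) follows by applying Theorem~\ref{Theo_JDRSAM_verification}(ii) to $\tilde\phi$: it gives $\tilde\phi=\tilde\Phi$, hence $\phi=-\frac1\theta\ln\tilde\phi=-\frac1\theta\ln\tilde\Phi=\Phi$, and it identifies $\pi^*$ as an optimal Markov control for the exponentially transformed problem; since $\Phi$ and $\tilde\Phi$ are linked by the strictly monotone continuous transformation~\eqref{eq_JDRSAM_diffstate_relationship_Phi_tildePhi}, $\pi^*$ is then also optimal for the risk-sensitive problem, as observed just below that equation. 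I expect that the only delicate point is the bookkeeping in the Hopf--Cole computation above --- in particular the sign carried by the prefactor $-\theta\tilde\phi$, which is exactly what turns the $\inf$ in $H$ into the $\sup$ in~\eqref{eq_JDRSAM_diffstate_HJBPDE} and reverses the inequality directions --- together with verifying that boundedness and strict positivity of $\tilde\phi$, required to apply Theorem~\ref{Theo_JDRSAM_verification}, are genuinely inherited from the boundedness of $\phi$; neither is substantive, so the statement is indeed a corollary.
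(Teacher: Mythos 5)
Your proof takes essentially the same route as the paper: the paper's own proof of this corollary is a one-sentence appeal to the relation $\tilde{\Phi}=e^{-\theta\Phi}$ in \eqref{eq_JDRSAM_diffstate_relationship_Phi_tildePhi} and to the remark that admissible (optimal) strategies transfer between the exponentially transformed and risk-sensitive problems, which is exactly the Hopf--Cole reduction you carry out explicitly by setting $\tilde\phi=e^{-\theta\phi}$ and invoking Theorem~\ref{Theo_JDRSAM_verification}. Your write-up additionally supplies the derivative bookkeeping and the inequality/sign reversals that the paper leaves implicit (and that the printed statement of the corollary glosses over), so it is consistent with, and somewhat more detailed than, the published argument.
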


\begin{proof}
        This corollary follows from equation~\eqref{eq_JDRSAM_diffstate_relationship_Phi_tildePhi} and from the fact that an admissible (optimal) strategy for the exponentially transformed problem is also admissible (optimal) for the risk-sensitive problem. %and from Remark~\ref{rk_control_Phi_Phitilde}.
\end{proof}
\\

\begin{proposition}\label{prop_JDRSAM_diffstate_hstar_admissible}
 %       The minimizer $h^*(t,x)$ of $h \mapsto L^{h}\phi$ defined in~\eqref{eq_JDRSAM_diffstate_HJBPDE_operator_L}  is admissible: 
The process $h^*(t)$ is admissible: $h^*(t) \in \mathcal{A}(T)$.
\end{proposition}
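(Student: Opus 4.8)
The plan is to verify the two requirements of Definition~\ref{def_JDRSAM_admissible_A} for the feedback control $h^*(t)=\hat h(t,X(t),D\Phi(t,X(t)))$, where $\hat h$ is the unique maximizer of Proposition~\ref{prop_JDRSAM_diffstate_optimcontrol}, $\Phi$ is the $C^{1,2}$ solution of~\eqref{eq_JDRSAM_diffstate_HJBPDE}, and $X$ is the associated state process solving~\eqref{eq_JDRSAM_diffstate_state_SDE}. First I would check that $h^*\in\mathcal{H}$, which is routine. The maximand in~\eqref{eq_JDRSAM_diffstate_supL_deriv} is jointly continuous in $(h,t,x,p)$ and, under Assumption~\ref{as_JDRSAM_sigposdef}, strictly concave in $h$ with a unique argmax lying in the bounded set $\mathcal{J}$, so $\hat h$ is continuous (Berge's maximum theorem together with uniqueness of the maximizer). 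Since $X$ is continuous and $\mathcal{G}_t$-adapted and $D\Phi$ is continuous because $\Phi\in C^{1,2}$, the composition $h^*$ is continuous and $\mathcal{G}_t$-adapted, hence progressively measurable and c\`adl\`ag, which is condition~1 of Definition~\ref{def_JDRSAM_controlprocess_h}; boundedness of $\mathcal{J}$ makes $h^*$ uniformly bounded, giving condition~2 at once; and $\hat h(t,x,p)\in\mathcal{J}$ means $-1-h^*(t)^T\psi<0$ for all $\psi\in\tilde{\mathbf{S}}=\mathrm{supp}(\nu\circ\gamma^{-1})$, so, as $\gamma(z)\in\tilde{\mathbf{S}}$ for $\nu$-a.e.\ $z$, condition~3 ($h^*(t)^T\gamma(z)>-1$) also holds. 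Hence $h^*\in\mathcal{H}$.

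It then remains to prove $\mathbf{E}\,\chi_T^{h^*}=1$, the second requirement of Definition~\ref{def_JDRSAM_admissible_A}. Write the Dol\'eans exponential~\eqref{eq_JDRSAM_Doleansexp_chi} as $\chi^{h^*}=\mathcal{E}(L)$, where $L$ has continuous martingale part $-\theta\int_0^\cdot h^*(s)^T\Sigma\,dW_s$ and a jump of size $-G(z,h^*(s))$ at each jump of $N$ with mark $z$; the compensated jump integral is well defined by~\eqref{as_assetjumps_gamma_integrable} together with $\nu(\mathbf{Z}\setminus\mathbf{Z}_0)<\infty$, and condition~3 forces $1+\Delta L=(1+h^*(s)^T\gamma(z))^{-\theta}>0$, so $\chi^{h^*}$ is a nonnegative local martingale, hence a supermartingale with $\mathbf{E}\,\chi_T^{h^*}\le 1$. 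Thus the whole question reduces to showing that $\chi^{h^*}$ is in fact a true martingale on $[0,T]$, equivalently that $\mathbf{E}\,\chi_T^{h^*}\ge 1$.

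This last step is where I expect the main obstacle: unlike the purely continuous case, neither Novikov's nor Kazamaki's criterion applies to $\chi^{h^*}$ directly, and for exponential martingales with jumps only weaker sufficient conditions are available. The idea would be to exploit that $h^*$ is bounded and takes values in the \emph{open} set $\mathcal{J}$, so that $1+h^*(s)^T\gamma(z)$ is bounded above and bounded away from $0$ uniformly for $z$ in the $\nu$-support; consequently the continuous part of $L$ has a deterministically bounded angle bracket on $[0,T]$, $\ln(1-G(z,h^*(s)))=-\theta\ln(1+h^*(s)^T\gamma(z))$ is uniformly bounded, and the compensator of the convex jump functional $(1-G)\ln(1-G)+G$ is, using $(1-G)\ln(1-G)+G\lesssim|h^*|^2|\gamma|^2$ near $\gamma=0$ together with~\eqref{as_assetjumps_gamma_integrable} on $\mathbf{Z}_0$ and finiteness of $\nu$ on $\mathbf{Z}\setminus\mathbf{Z}_0$, also bounded by a constant on $[0,T]$. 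Feeding these bounds into the jump-diffusion exponential-martingale criterion of Appendix~\ref{sec_Admissibility} --- a L\'epingle--M\'emin-type finite-exponential-moment condition combining the Brownian Novikov bound with the control of the jump compensator --- yields $\mathbf{E}\,\chi_T^{h^*}=1$, and hence $h^*\in\mathcal{A}(T)$.
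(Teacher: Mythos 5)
Your overall route is the same as the paper's: reduce the problem to showing that the Dol\'eans exponential $\chi^{h^*}$ is a true martingale, and establish this via a M\'emin-type criterion (the paper uses M\'emin's Theorem III-3: uniform integrability of $\mathcal{E}(M)$ follows once the compensator of $[M^c,M^c]_t+\sum_{s\le t}|\Delta M_1(s)|+\sum_{s\le t}(\Delta M_2(s))^2$ is bounded, after splitting the jump part into big and small jumps). However, there is a genuine gap at exactly the step that carries the whole argument. You assert that because $h^*$ is bounded and takes values in the \emph{open} set $\mathcal{J}$, the quantity $1+h^*(s)^T\gamma(z)$ is uniformly bounded away from $0$ for $z$ in the support of $\nu\circ\gamma^{-1}$. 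That does not follow: membership in an open bounded set gives no uniform distance to the boundary. Proposition~\ref{prop_JDRSAM_diffstate_optimcontrol} only says that for each fixed $(t,x,p)$ the maximizer $\hat h(t,x,p)$ lies in $\mathcal{J}$; along the optimal trajectory $p=D\Phi(t,X(t))$ ranges over an unbounded set, and nothing you have said prevents $\hat h(t,X(t),D\Phi)$ from approaching $\partial\mathcal{J}$, where $\ln\bigl(1-G(z,h)\bigr)=-\theta\ln\bigl(1+h^T\gamma(z)\bigr)$ blows up. Without that uniform bound, none of your three boundedness claims (bounded bracket is fine, but the bounds on $|\ln(1-G)|$ for the big jumps and on the quadratic small-jump compensator both use it) can be fed into the L\'epingle--M\'emin/M\'emin criterion, so the martingale property is not established.

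The paper closes precisely this gap with an extra structural argument. It introduces the functional $\mathcal{L}(x,p,h)$ of~\eqref{eq_JDRSAM_mathcalL} (the $h$-dependent part of the Hamiltonian) and observes that the zero policy $h^0\equiv 0$ gives $\mathcal{L}(x,p,h^0)=0$; hence, by optimality, $\mathcal{L}(x,D\Phi,h^*(t,x))\ge 0$. Since $\mathcal{L}(x,p,\cdot)$ is smooth, strictly concave and tends to $-\infty$ as $h$ approaches $\partial\mathcal{J}$ (the term $-\frac1\theta\int[(1+h^T\gamma(z))^{-\theta}-1]\nu(dz)$ diverges there), the sublevel condition $\mathcal{L}\ge 0$ confines $h^*$ to a closed subset $\hat{\mathcal{J}}\subset\mathcal{J}$, i.e.\ to the class $\hat{\mathcal{H}}(T)$ of Definition~\ref{def_JDRSAM_controlprocess_hat_h}. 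It is only for controls in this class that the quantities entering M\'emin's Theorem III-3 are controlled, after which Corollary III-2 legitimates the measure $\mathbb{P}^{\theta}_{h^*}$ and yields $\mathbf{E}\chi_T^{h^*}=1$. So you have correctly identified the right criterion and the right quantities to bound, but you are missing the comparison-with-the-zero-policy (or a $0\beta$-policy) argument that actually produces the uniform interiority of $h^*$ in $\mathcal{J}$; as written, your proof would fail at that point.
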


\begin{proof}
        Refer to Appendix~\ref{sec_Admissibility} for a full discussion and a proof of this proposition.
\end{proof}
\\

Applying Proposition~\ref{prop_JDRSAM_diffstate_hstar_admissible} we deduce that the control $h^*(t)$ is optimal for the auxiliary problems~\eqref{eq_JDRSAM_diffstate_auxcriterion} and~\eqref{eq_JDRSAM_Exp_of_int_criterion} resulting from the change of measure. However, this proposition is not sufficient to conclude that  $h^*(t)$ is optimal for the original problem~\eqref{eq_JDRSAM_criterion_J} set under the $\mathbb{P}$-measure. The next result show that this is indeed the case.
\\

\begin{proposition}\label{prop_JDRSAM_diffstate_OptimalJandI}
        The optimal control $h^*(t)$ for the auxiliary problem $$\sup_{h \in \mathcal{A}(T)}I(v,x;h;t,T)$$ where $I$ is defined in~\eqref{eq_JDRSAM_diffstate_auxcriterion} is also optimal for the initial problem  $\sup_{h \in \mathcal{A}(T)}J(x,t,h)$ where $J$ is defined in~\eqref{eq_JDRSAM_criterion_J}.
\end{proposition}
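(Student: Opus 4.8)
The plan is to show that, on the admissible class $\mathcal{A}(T)$, the original criterion $J$ of~\eqref{eq_JDRSAM_criterion_J} and the auxiliary criterion $I$ of~\eqref{eq_JDRSAM_diffstate_auxcriterion} are \emph{identical}, so that the two optimisation problems over $\mathcal{A}(T)$ have the same value and exactly the same maximisers. Since $h^*$ has already been shown (via the verification theorem, together with Propositions~\ref{prop_JDRSAM_diffstate_optimcontrol} and~\ref{prop_JDRSAM_diffstate_hstar_admissible}) to be admissible and to attain $\sup_{h\in\mathcal{A}(T)}I=\Phi$, optimality for $\sup_{h\in\mathcal{A}(T)}J$ will then follow for free.

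The core computation is the change of measure. Fix $h\in\mathcal{A}(T)$. By the definition of $\mathcal{A}(T)$ one has $\mathbf{E}\chi_T^h=1$, hence $\chi^h$ is a true $\mathbb{P}$-martingale on $[0,T]$ and $\mathbb{P}_h$ of~\eqref{eq_JDRSAM_RNder_chi} is a probability measure with density $\chi_T^h=d\mathbb{P}_h/d\mathbb{P}$. Taking $\mathbf{E}$ on both sides of the identity~\eqref{eq_JDRSAM_eminthetaVt}, and using that $\chi_T^h$ is a Radon--Nikodym density, I obtain
\[
\mathbf{E}\!\left[e^{-\theta\ln V(T)}\right]
= v^{-\theta}\,\mathbf{E}^{h}\!\left[\exp\Big\{\theta\int_0^T g(X_s,h(s))\,ds\Big\}\right]
= \tilde I(v,x;h;0,T),
\]
the last step being the definition~\eqref{eq_JDRSAM_Exp_of_int_criterion}. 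Applying $-\tfrac1\theta\ln(\cdot)$ gives $J(x,0,h)=I(v,x;h;0,T)$; running the identical argument on $[t,T]$ (with $\chi^h_\cdot/\chi^h_t$ in place of $\chi^h_\cdot$, using that $\exp\{\theta\int_t^{\cdot}g(X_s,h(s))\,ds\}$ is adapted and that $\chi^h$ is still a $\mathbb{P}$-martingale on $[t,T]$) gives $J(x,t,h)=I(v,x;h;t,T)$ for every $t\in[0,T]$. This is an equality in $[0,\infty]$, so no finiteness hypothesis is needed at this stage.

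The conclusion is then immediate. From the pointwise equality $J(x,t,h)=I(v,x;h;t,T)$ valid for all $h\in\mathcal{A}(T)$ we get $\sup_{h\in\mathcal{A}(T)}J(x,t,h)=\sup_{h\in\mathcal{A}(T)}I(v,x;h;t,T)=\Phi(t,x)$, and $h$ is a maximiser of one side precisely when it is a maximiser of the other. In particular the control $h^*$, which lies in $\mathcal{A}(T)$ by Proposition~\ref{prop_JDRSAM_diffstate_hstar_admissible} and attains $\sup_{h\in\mathcal{A}(T)}I$ by the verification theorem, also attains $\sup_{h\in\mathcal{A}(T)}J$, and is therefore optimal for the original problem.

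The one genuinely non-trivial ingredient is the passage $\mathbf{E}[e^{-\theta\ln V(T)}]=v^{-\theta}\mathbf{E}^{h}[\cdots]$: it hinges on $\chi^h$ being a true (not merely local/super-) martingale, i.e. on $h\in\mathcal{A}(T)$, and for the specific candidate $h^*$ this is exactly the content of Proposition~\ref{prop_JDRSAM_diffstate_hstar_admissible}, whose proof in the Appendix uses the jump-diffusion martingale criterion and is where the real difficulty sits. Once that is granted, there is nothing more to do here --- the identity~\eqref{eq_JDRSAM_eminthetaVt} carries the whole argument, and finiteness of the common optimal value (hence that the supremum is attained as a maximum) is inherited from the bounded $C^{1,2}$ solution $\tilde\Phi$ produced in Section~\ref{sec-classical}.
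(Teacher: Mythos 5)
Your proposal is correct and follows essentially the same route as the paper: the paper's own proof likewise rests on the identity $\mathbf{E}\left[e^{-\theta \ln V}\right]=\mathbf{E}_{t,x}^{h}\left[\exp\left\{\theta\int_t^T g(X_s,h(s))\,ds-\theta\ln v\right\}\right]$ for each admissible $h$ (there justified by citing a change-of-measure lemma in \O ksendal, where you derive it directly from~\eqref{eq_JDRSAM_eminthetaVt} and $\mathbf{E}\chi_T^h=1$), and then concludes that the two optimisation problems over $\mathcal{A}(T)$ coincide, so $h^*$ is optimal for $J$ as well.
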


\begin{proof}
        See Appendix~\ref{sec_Admissibility}.
\end{proof}
\\

%%%%%%%%%%%%%%%%%%%%%%%%%%%%%%%%%%%%%%%%%%%%%%%%%%%%%%%%%%%%%%%%%%%%%%%%%%%%%%%%%%%
%
%   NEXT SECTION
%
%%%%%%%%%%%%%%%%%%%%%%%%%%%%%%%%%%%%%%%%%%%%%%%%%%%%%%%%%%%%%%%%%%%%%%%%%%%%%%%%%%%

%%%
\section{Existence of a Classical Solution}\label{sec-classical}
Historically, proving the existence of a strong, analytical solution to the HJB PDE was both the main difficulty and the main objective when solving a control problem. Fleming and Rishel~\cite{fl75} as well as Krylov~\cite{kr80} and~\cite{kr87} have been the main contributors, proposing techniques based either on PDE arguments or on probability theory. Recently however, the emphasis has switched from strong solutions to weaker types of solution. Viscosity solutions have proved particularly useful and successful, gaining many applications in stochastic control theory (see for example the classic article by Crandall, Ishii and Lions~\cite{crisli92} as well as Fleming and Soner~\cite{flso06} for their applications to stochastic control). The reason for this appeal is twofold. First, it is significantly easier to prove the existence of a viscosity solution than a classical solution. In the viscosity world, the difficulty is shifted from proof of existence to proof of uniqueness, and even then it is generally easier to prove uniqueness of a viscosity solution via a comparison theorem than the existence of a classical solution. Second, the stability result due to Barles and Souganidis~\cite{baso91} connects directly viscosity solutions to numerical methods, making it easy to solve `real world' control problems.
\\

This section follows similar arguments to those developed by Fleming and Rishel~\cite{fl75} (Theorem 6.2 and Appendix E). Namely, we use an approximation in policy space alongside results on linear parabolic partial differential equations to prove that the exponentially transformed value functions $\tilde{\Phi}$ is of class $C^{1,2}((0,T)\times \mathbb{R}^n)$. Then it follows that the value functions $\Phi$ is also of class $C^{1,2}((0,T)\times \mathbb{R}^n)$. The approximation in policy space algorithm was originally proposed by Bellman in the 1950s (see Bellman~\cite{be57} for details) as a numerical method to compute the value function. Our approach has two steps. First, we use the approximation in policy space algorithm to show existence of a classical solution in a bounded region. Then, we extend our argument to unbounded state space. To derive this second result we follow a different argument than Fleming and Rishel~\cite{fl75} which makes more use of the actual structure of the control problem.
\\

Our interest in classical solutions is as much mathematical as practical. First, since a smooth solution is a viscosity solution but the converse is not necessarily true, we are proving a stronger result. Second, this stronger result immediately translates a better grasp of the analytical properties of the value function. While viscosity solutions provide continuity, they do not generally give information about higher order derivatives. By contrast, classical solutions are smooth in the state, implying that they are (at least) $C^1$ in time and $C^2$ in the state. Third, viscosity solutions are purely about solving the PDE and although they show that the value function is the unique solution of the HJB PDE they do not prove directly the control problem has a solution, that is a pair of a value function and an admissible optimal control. Fourth and finally, in our case seeking a strong solution does not impair our search for numerical results. Because our state process $X(t)$ can clearly be interpreted as the continuous time limit of a Markov Chain, we can apply well-known results by Kushner and Dupuis~\cite{kudu01} to prove convergence of a finite approximation scheme to the value function. We can therefore solve concrete portfolio selection problems quite directly.

\subsection{``Zero Beta'' Policies}\label{zb}
In this section, we introduce a new class of control policies: the ``zero beta'' ($0\beta$) policies:
\\

\begin{definition}[$0\beta$-policy]\label{def_JDRSAM_diffstate_ZeroBetaPolicy}
        By reference to the definition of the function $g$ in equation~\eqref{eq_JDRSAM_g_func_def}, a \emph{`zero beta' ($0\beta$) control policy} $\check{h}(t)$ is an admissible control policy for which the function $g$ is independent of the state variable $x$.
\end{definition}
\\

The term `zero beta' is borrowed from financial economics (see for instance Black~\cite{bl72}). To avoid assuming the existence of a globally risk-free rate in factor models such as the CAPM, the APT or in ad-hoc valuation models, it is customary to build portfolios without any exposure to the factor(s) as a substitute for the risk-free rate. These special portfolios are referred to as `zero beta' portfolios by reference to the slope coefficient $\beta$ used to measure the sensitivity of asset returns to the valuation factor(s).
\\

In the risk sensitive asset management model, if $A_0 = 0$, then the policy $h^{0} = 0$, i.e. invest all the wealth in the risk-free asset, is a $0\beta$-policy. When $A_0 \neq 0$, we see from~\eqref{eq_JDRSAM_V_dynamics} that a $0\beta$-policy can exist only if there is a vector $\check{h}$ satisfying
\begin{equation}       \check{h}^T \hat{A} = - A_0.\label{check}\end{equation}
We introduce the following standing assumption.
\begin{assumption}\label{as_JDRSAM_A_rank_n}
        The matrix $\hat{A}$ has rank $n$.
\end{assumption}

Under this assumption there are always $0\beta$-policies: we have only to take a vector $\check{h}$ satisfying \eqref{check} and scale it if necessary so that $\check{h}\in\mathcal{J}$ (see \eqref{def_JDRSAM_setJ}) and then $h(t,\omega)=\check{h}$ is a $0\beta$-policy. We have no reason to consider anything other than the set $\mathcal{Z}$ of constant policies of this kind. Note that when $\check{h}\in{\mathcal Z}$ the function $g$ of \eqref{eq_JDRSAM_g_func_def} is a constant, $g(x,\check{h})\equiv\check{g}$.

\subsection{The $L^{\eta}(K)$ and $\mathscr{L}^{\eta}(K), 1 < \eta < \infty$ Spaces}
The following ideas and notations relate to the treatment of linear parabolic partial differential equations found in Ladyzhenskaya, Solonnikov and Uralceva~\cite{la68}. The relevant results are summarized in Appendix E of Fleming and Rishel. They concern PDEs of the form
\begin{eqnarray}\label{eq_JDRSAM_diffstate_generic_linear_PDE}
        \frac{\partial \psi}{\partial t}
        + \frac{1}{2} \textrm{tr}
                    \left( a(t,x) D^2 \psi\right)
                +       b(t,x)^T  D\psi
        + \theta c(t,x) \psi
        + d(t,x)
    =   0
\end{eqnarray}
on a set $Q = (0,T) \times G$ and with boundary condition
\begin{eqnarray}
    \psi(t,x) &=& \Psi_T(x)    \quad x \in G
                            \nonumber\\
    \psi(t,x) &=& \Psi(t,x)    \quad (t,x) \in (0,T) \times \partial G
                            \nonumber
\end{eqnarray}
The set $G$ is open and is such that $\partial G$ is a compact manifold of class $C^2$. Denote by
\begin{itemize}
        \item $\partial^*Q$ the boundary of Q, i.e.
                \begin{equation}
                \partial^*Q := \left( \left\{ T \right\} \times G \right)
                                                                \cup \left( (0,T) \times \partial G \right)
                            \nonumber
                \end{equation}
        \item $L^{\eta}(K)$ the space of $\eta$-th power integrable functions on $K \subset Q$;
        \item $\lVert \cdot \rVert_{\eta,K}$ the norm in $L^{\eta}(K)$.
\end{itemize}
Also, denote by $\mathscr{L}^{\eta}(Q), 1 < \eta < \infty$ the space of all functions $\psi$ such that $\psi$ and all its generalized partial derivatives are in $L^{\eta}(K)$. We associate with this space the Sobolev-type norm:
\begin{eqnarray}
        \lVert \psi \rVert_{\eta,K}^{(2)}
        :=      \lVert \psi \rVert_{\eta,K}
        +       \Big\lVert \frac{\partial \psi}{\partial t} \Big\rVert_{\eta,K}
        +       \sum_{i=1}^n \Big\lVert \frac{\partial \psi}{\partial x_i} \Big\rVert_{\eta,K}
        +       \sum_{i,j=1}^n \Big\lVert \frac{\partial^2 \psi}{\partial x_i x_j} \Big\rVert_{\eta,K}
\end{eqnarray}

We will also introduce additional notation and concepts as required in the proofs.
\\

\subsection{Existence of a Classical Solution}
In this section, we use an approximation in policy space to show the existence of a $C^{1,2}$ solution to the RS HJB PDE~\eqref{eq_JDRSAM_diffstate_HJBPDE}.
\\

%%%%%%%%%%%%%%%%%%%%%%%%%%%%%%%%%%%%%%%%%%%%%%%%%%%%%%%%%%%%%%%%%%%%%%%%%%%%%%%%
%
%       THEOREM
%
%%%%%%%%%%%%%%%%%%%%%%%%%%%%%%%%%%%%%%%%%%%%%%%%%%%%%%%%%%%%%%%%%%%%%%%%%%%%%%%%
%
%       This is THEOREM VI.6.1 in Fleming Rishel (75)
%

\begin{theorem}[Existence of a Classical Solution for the Exponentially Transformed Control Problem]\label{Theo_JDRSAM_diffstate_existence}
        The RS HJB PDE~\eqref{eq_JDRSAM_diffstate_exptrans_HJBPDE} with terminal condition $\tilde{\Phi}(T,x) = e^{-\theta\ln v}$ has a solution $\tilde{\Phi} \in C^{1,2}\left((0,T)\times\mathbb{R}^n\right)$ with $\tilde{\Phi}$ continuous in $[0,T]\times\mathbb{R}^n$.
\end{theorem}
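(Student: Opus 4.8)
The plan is to follow the classical route of Fleming and Rishel (\cite{fl75}, Theorem~6.2 and Appendix~E): build the solution by approximation in policy space, first on a bounded cylinder and then pass to the whole space, using the linear parabolic theory recalled in Section~7.2. Fix $R>0$, set $Q_R=(0,T)\times B_R$ with $B_R=\{x\in\bR^n:|x|<R\}$, and start from the constant zero-beta policy $h_0\equiv\check h\in\mathcal Z$ (which exists by Assumption~\ref{as_JDRSAM_A_rank_n}). Given $h_k$, let $\psi_{k+1}$ solve the \emph{linear} terminal-lateral value problem of type~\eqref{eq_JDRSAM_diffstate_generic_linear_PDE} with $a=\Lambda\Lambda^T$, drift $f(x,h_k(t,x))$, zeroth-order coefficient $\theta g(x,h_k(t,x))$, $d\equiv 0$, terminal data $v^{-\theta}$ on $\{T\}\times B_R$ and lateral data the zero-beta value $e^{\theta\check g(T-t)-\theta\ln v}$ on $(0,T)\times\partial B_R$; then define $h_{k+1}(t,x)$ to be the minimizer in the definition~\eqref{eq_JDRSAM_diffstate_logtrans_H_function} of $H$ evaluated at $(\psi_{k+1},D\psi_{k+1})$, which by the strict-convexity argument of Proposition~\ref{prop_JDRSAM_diffstate_optimcontrol} (applied with $r=\psi_{k+1}>0$) is unique, interior to $\overline{\mathcal J}$, and depends continuously on $(\psi_{k+1},D\psi_{k+1})$. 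Since $\Lambda\Lambda^T>0$ (Assumption~\ref{as_JDRSAM_lamposdef}) the problem is uniformly parabolic, and since $h_k\in\mathcal J$ is bounded while $D\psi_k$ inherits H\"older regularity from the previous step, the coefficients are bounded and H\"older on $\overline{Q_R}$; hence each $\psi_{k+1}$ exists, is unique, lies in $\mathscr L^\eta(Q_R)$ for every $\eta$, and so (Sobolev embedding) in $C^{1,2}$ with $D\psi_{k+1}$ H\"older, and $\psi_{k+1}>0$ by the maximum principle.

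Next I would establish \textbf{monotonicity and uniform bounds}. Because $h_k$ minimizes $h\mapsto f(x,h)^TD\psi_k+\theta g(x,h)\psi_k$ and $\psi_k>0$, the iterate $\psi_k$ is a supersolution of the linear equation solved by $\psi_{k+1}$, so with identical terminal and lateral data the comparison principle gives $\psi_k\ge\psi_{k+1}$; thus $(\psi_k)_k$ decreases. It is bounded below by $0$ and, uniformly in $k$ \emph{and in} $R$, above by the $x$-independent function $e^{\theta\check g(T-t)-\theta\ln v}$, which is a supersolution of~\eqref{eq_JDRSAM_diffstate_exptrans_HJBPDE} precisely because $\inf_h g(x,h)\le g(x,\check h)=\check g$ --- this is where the zero-beta structure of Section~\ref{zb} enters and replaces the corresponding Fleming--Rishel argument. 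Hence $\psi_k\downarrow\psi^R$ pointwise on $Q_R$. The coefficients $f(\cdot,h_k),g(\cdot,h_k)$ being bounded on $\overline{Q_R}$ uniformly in $k$, interior $\mathscr L^\eta$ estimates for~\eqref{eq_JDRSAM_diffstate_generic_linear_PDE} bound $\lVert\psi_k\rVert^{(2)}_{\eta,K}$ uniformly in $k$ for every $K\subset\subset Q_R$; Sobolev embedding then gives uniform interior $C^{1,\alpha}$ bounds, so $\psi_k\to\psi^R$ and $D\psi_k\to D\psi^R$ locally uniformly, whence $h_k\to h^R$ locally uniformly by continuity of the minimizer. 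Passing to the limit in the linear equation, $\psi^R$ solves the linear equation with coefficients $f(\cdot,h^R),\theta g(\cdot,h^R)$; but $h^R$ realizes the infimum defining $H$ at $(D\psi^R,\psi^R)$, so $\psi^R$ in fact solves the RS HJB PDE~\eqref{eq_JDRSAM_diffstate_exptrans_HJBPDE} on $Q_R$, and a bootstrap with interior Schauder estimates gives $\psi^R\in C^{1,2}(Q_R)$.

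Finally I would \textbf{remove the artificial boundary}. Take $R_j\uparrow\infty$ with solutions $\psi^{R_j}$, all satisfying $0\le\psi^{R_j}\le e^{\theta\check g(T-t)-\theta\ln v}$. On any compact $K\subset(0,T)\times\bR^n$ we have $K\subset\subset Q_{R_j}$ for large $j$, and the interior parabolic estimates on $K$ depend only on the ($R_j$-independent) coefficient bounds on a fixed neighbourhood of $K$ and on $\sup|\psi^{R_j}|$; thus $\{\psi^{R_j}\}$ is bounded in $\mathscr L^\eta$ on such sets, and along a subsequence $\psi^{R_j}\to\tilde\Phi$ in $C^{1,\alpha}_{\mathrm{loc}}$ with $\tilde\Phi\in C^{1,2}((0,T)\times\bR^n)$ solving~\eqref{eq_JDRSAM_diffstate_exptrans_HJBPDE}, the Schauder bootstrap applying as before. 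Continuity up to $t=T$ with $\tilde\Phi(T,x)=e^{-\theta\ln v}$ follows from a standard barrier argument at the terminal time using the uniform bounds. That this classical solution coincides with the value function~\eqref{eq_JDRSAM_diffstate_exptrans_valuefunction} (and so inherits~\eqref{eq_JDRSAM_tildePhi_property_convexityofPhi}) then follows from Theorem~\ref{Theo_JDRSAM_verification}, whose boundedness hypothesis is exactly the zero-beta upper bound just obtained.

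I expect the \textbf{main obstacle} to be the passage from the bounded cylinder to $\bR^n$: the drift $f(x,h)=b+Bx-\theta\Lambda\Sigma^Th$ and the running cost $g(x,h)$ grow in $x$, so the global parabolic estimates on $Q_R$ degenerate as $R\to\infty$ and only interior estimates survive; the structural input that rescues the argument --- and, as the authors note, departs from Fleming--Rishel --- is the $x$-independent zero-beta supersolution, which both caps the policy-iteration sequence uniformly in $R$ and furnishes the boundedness needed downstream. Verifying the two ingredients of the policy-iteration step itself, namely strict positivity $\psi_k>0$ and the supersolution-comparison that yields monotonicity, is the other point requiring care.
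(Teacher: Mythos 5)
Your proposal is correct and follows essentially the same route as the paper: approximation in policy space on the bounded cylinder, with the zero-beta policy supplying both the lateral boundary data and the $R$-independent upper bound, followed by extension to all of $\mathbb{R}^n$ via interior parabolic estimates. The only minor variations are that you obtain monotonicity of the iterates by a supersolution/comparison argument where the paper uses a Feynman--Kac representation of the difference $\tilde{\Phi}^{k+1}-\tilde{\Phi}^{k}$, and you pass from $Q_R$ to the whole space by compactness and a subsequence, whereas the paper exploits monotonicity in $R$ (a control-nesting argument) together with Ascoli's theorem.
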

\\

\begin{proof}
\textbf{Step 1: Approximation in policy space - bounded space}\\
%\subsection{Step 1: Setting}\label{sec_JDRSAM_diffstate_auxiliaryproblem}
Consider the following auxiliary problem: fix $R>0$ and let $\mathscr{B}_R$ be the open $n$-dimensional ball of radius $R>0$ centered at 0 defined as $\mathscr{B}_R := \left\{x \in \mathbb{R}^n: |x|<R \right\}$. We construct an investment portfolio by solving the optimal risk-sensitive asset allocation problem as long as $X(t) \in \mathscr{B}_R$ for $R > 0$. Then, as soon as $X(t) \notin \mathscr{B}_R$, we switch all of the wealth into the $0\beta$ policy $\check{h}$ from the exit time $t$ until the end of the investment horizon at time $T$. The HJB PDE for this auxiliary problem can be expressed as
\begin{eqnarray}\label{as_jDRSAM_diffstate_auxiliaryPDE_tildePhi}
        \frac{\partial \tilde{\Phi}}{\partial t}
        + \frac{1}{2} \textrm{tr}
                    \left( \Lambda \Lambda^T (t) D^2 \tilde{\Phi}\right)
                +       H(t,x,\tilde{\Phi},D\tilde{\Phi})
    =   0
        \qquad \forall (t,x) \in Q_R := (0,T) \times \mathscr{B}_R
                                                                                                                \nonumber\\
\end{eqnarray}
where, as in~\eqref{eq_JDRSAM_diffstate_logtrans_H_function}
\begin{eqnarray}
   H(x,r,p) &=&
                \inf_{h \in \overline{\mathcal{J}}} \left\{
        f(x,h)^T  p + \theta g(t,x,h) r
    \right\}
                                                                                                \nonumber
\end{eqnarray}
for $p \in \mathbb{R}^n$ and for $g$ and $f$ defined in~\eqref{eq_JDRSAM_g_func_def} and~\eqref{eq_JDRSAM_diffstate_f} respectively. $\overline{\mathcal{J}}$, defined as the closure of $\mathcal{J}$ is a compact set. The boundary conditions are
\begin{eqnarray}
        \tilde{\Phi}(t, x) &=& \Psi(t,x)
        \qquad \forall (t,x) \in \partial^* Q_R := \left( (0,T) \times \partial \mathscr{B}_R \right) \cup \left( \left\{T\right\} \times \mathscr{B}_R \right)
                                                \nonumber
\end{eqnarray}
with
\begin{itemize}
\item $\Psi(T, x) = e^{-\theta\ln v} \; \forall x \in  \mathscr{B}_R$;
\item $\Psi(t, x) :=  \psi(t,x) := e^{\theta \check{g} (T-t)} \; \forall (t,x) \in (0,T) \times \partial \mathscr{B}_R$ and where $\check{h}$ is a fixed arbitrary $0\beta$ policy which is constant as a function of time. Note that $\psi$ is obviously of class $C^{1,2}(\overline{Q_R})$ and that the Sobolev-type norm
\begin{eqnarray}\label{as_JDRSAM_diffstate_assumption_Psi}
        \lVert \Psi \rVert_{\eta,\partial^*Q_R}^{(2)}
        =       \lVert \tilde{\Psi} \rVert_{\eta,Q_R}^{(2)}
\end{eqnarray}
is finite.
\\
\end{itemize}

Define a sequence of functions $\tilde{\Phi}^1$, $\tilde{\Phi}^2$,... $\tilde{\Phi}^k$,... on $\overline{Q_R}=[0,T]\times\overline{\mathscr{B}_R}$ and of bounded measurable feedback control laws $h^{0}$, $h^{1}$,... $h^{k}$,... where $h^{0}$ is an arbitrary control. Assuming $h^{k-1}$ is known, we define the function $\tilde{\Phi}^{k+1}$ as the solution to the boundary value problem:
\begin{eqnarray}\label{eq_JDRSAM_diffstate_logtrans_BVP_Phi_tilde_kplus1}
        \frac{\partial \tilde{\Phi}^{k}}{\partial t}
        + \frac{1}{2} \textrm{tr}
                    \left( \Lambda \Lambda^T (t) D^2 \tilde{\Phi}^{k}\right)
                +       f(x,h^{k-1})^T  D\tilde{\Phi}^{k} + \theta g(t,x,h^{k-1}) \tilde{\Phi}^{k}
    =   0                                                                                       \nonumber\\
\end{eqnarray}
subject to boundary conditions
\begin{eqnarray}
        \tilde{\Phi}(t, x) &=& \Psi(t,x)
        \qquad \forall (t,x) \in \partial^* Q_R := \left( (0,T) \times \partial \mathscr{B}_R \right) \cup \left( \left\{T\right\} \times \mathscr{B}_R \right)
                                                \nonumber
\end{eqnarray}

Note that the boundary value problem~\eqref{eq_JDRSAM_diffstate_logtrans_BVP_Phi_tilde_kplus1} is a special case of the generic problem introduced earlier in equation~\eqref{eq_JDRSAM_diffstate_generic_linear_PDE} with
\begin{eqnarray}
    a(t,x)  &=& \Lambda\Lambda^T (t)
                                                \nonumber\\
    b(t,x)  &=& f(x,h^{k})
                                                \nonumber\\
    c(t,x)  &=& g(t,x,h^{k})
                                                \nonumber\\
    d(t,x)  &=& 0
                                                \nonumber
\end{eqnarray}
Moreover, since $\mathscr{B}_R$ is bounded and $\overline{\mathcal{J}}$ is compact, all of these functions are also bounded. Thus, based on standard results on parabolic Partial Differential Equations (see for example Appendix E in Fleming and Rishel~\cite{fl75} and Chapter IV in Ladyzhenskaya, Solonnikov and Uralceva~\cite{la68}), the boundary value problem~\eqref{eq_JDRSAM_diffstate_logtrans_BVP_Phi_tilde_kplus1} admits a unique solution in $\mathscr{L}^{\eta}(Q_R)$.
\\

Next, for almost all $(t,x) \in Q_R$, $k=1,2, \ldots$, we define $h^{k}$ by the prescription
\begin{eqnarray}\label{eq_JDRSAM_diffstate_logtrans_PolicyImprovement_def_h_k}
    h^{k} = \textrm{Argmin}_{h \in \overline{\mathcal{J}}}
        \left\{
                    f(x,h)^T  D\tilde{\Phi}^{k}
        +   \theta g(t,x,h) \tilde{\Phi}^{k}
        \right\}
\end{eqnarray}
so that
\begin{eqnarray}\label{eq_JDRSAM_diffstate_logtrans_BVP_Phi_tilde_kplus1_identity}
        f(x,h^{k})^T  D\tilde{\Phi}^{k}
        + \theta g(t,x,h^{k}) \tilde{\Phi}^{k}
    &=& \inf_{h \in \overline{\mathcal{J}}} \left\{
                    f(x,h)^T  D\tilde{\Phi}^{k}
        +   \theta g(t,x,h) \tilde{\Phi}^{k}
        \right\}
                                                \nonumber\\
    &=& H(t,x,\tilde{\Phi}^{k},D\tilde{\Phi}^{k})
\end{eqnarray}
\\

Note, in view of the definition of $g$ in~\eqref{eq_JDRSAM_g_func_def}, that the minimum is never achieved on the boundary of $\overline{\mathcal{J}}$, i.e. $h^k$ takes values in $\mathcal{J}$. 
\\

Observe that the sequence $\left( \tilde{\Phi}^k \right)_{k \in \mathbb{N}}$ is globally bounded. Indeed, by Feynman-K\v ac, the sequence $\left( \tilde{\Phi}^k \right)_{k \in \mathbb{N}}$ is bounded from below by 0. By the optimality principle, it is also bounded from above by $e^{\theta \int_{t}^{T} g(X(s),\check{h})ds} = e^{\theta \check{g} (T-t)}$. Moreover, these bounds do not depend on the radius $R$ and are therefore valid over the entire space $(0,T)\times\mathbb{R}^n$.
\\

\textbf{Step 2: Convergence Inside the Cylinder $(0,T) \times \mathscr{B}_R$}\\

\indent\indent\textbf{Step 2.1: Monotonicity of the Sequence}\\
Take $k \geq 1$. Subtracting the PDE for $\tilde{\Phi}^{k+1}$ from the PDE for $\tilde{\Phi}^{k}$, we see that
\begin{eqnarray}
    &&
        \left( \frac{\partial \tilde{\Phi}^{k+1}}{\partial t} - \frac{\partial \tilde{\Phi}^{k}}{\partial t} \right)
        +   \left(
            \frac{1}{2} \textrm{tr} \left[
                \left( \Lambda \Lambda^T (t) D^2 \tilde{\Phi}^{k+1}\right)
                -   \left( \Lambda \Lambda^T (t) D^2 \tilde{\Phi}^{k}\right)
                \right]\right.
                                                \nonumber\\
    &&  \left.
                +   \left( f(x,h^{k})^T  D\tilde{\Phi}^{k+1} - f(x,h^{k-1})^T  D\tilde{\Phi}^{k} \right)
        +   \theta \left( g(t,x,h^{k}) \tilde{\Phi}^{k+1} - g(t,x,h^{k-1})\tilde{\Phi}^{k} \right)
        \right]
                                               \nonumber\\
    &=&
        0
    \qquad \textrm{in } (0,T)\times\mathbb{R}^n
                                                \nonumber
\end{eqnarray}
with $\tilde{\Phi}^{k+1} - \tilde{\Phi}^{k} = 0$ on $\mathbb{R}^n$.
\\

Add and subtract $f(x,h^{k})^T  D\tilde{\Phi}^{k} + \theta g(t,x,h^{k}) \tilde{\Phi}^{k}$,
\begin{eqnarray}
    &&
        \left( \frac{\partial \tilde{\Phi}^{k+1}}{\partial t} - \frac{\partial \tilde{\Phi}^{k}}{\partial t} \right)
        +   \left(
            \frac{1}{2} \textrm{tr} \left[
                \left( \Lambda \Lambda^T (t) D^2 \tilde{\Phi}^{k+1}\right)
                -   \left( \Lambda \Lambda^T (t) D^2 \tilde{\Phi}^{k}\right)
                \right]\right.
                                                \nonumber\\
    &&
                +   \left( f(x,h^{k})^T  D\tilde{\Phi}^{k+1} - f(x,h^{k-1})^T  D\tilde{\Phi}^{k} \right)
        +   \theta \left( g(t,x,h^{k}) \tilde{\Phi}^{k+1} - g(t,x,h^{k-1})\tilde{\Phi}^{k} \right)
                                               \nonumber\\
        &&      + \left( f(x,h^{k})^T  D\tilde{\Phi}^{k} + \theta g(t,x,h^{k}) \tilde{\Phi}^{k} \right)
                - \left( f(x,h^{k})^T  D\tilde{\Phi}^{k} + \theta g(t,x,h^{k}) \tilde{\Phi}^{k} \right)
                                               \nonumber\\
    &=&
        0
    \qquad \textrm{in } (0,T)\times\mathbb{R}^n
                                                \nonumber
\end{eqnarray}

Rearranging,
\begin{eqnarray}
    &&
        \left( \frac{\partial \tilde{\Phi}^{k+1}}{\partial t} - \frac{\partial \tilde{\Phi}^{k}}{\partial t} \right)
        +   \left(
            \frac{1}{2} \textrm{tr} \left[
                \left( \Lambda \Lambda^T (t) D^2 \tilde{\Phi}^{k+1}\right)
                -   \left( \Lambda \Lambda^T (t) D^2 \tilde{\Phi}^{k}\right)
                \right]\right.
                                                \nonumber\\
    &&
                +   f(x,h^{k})^T \left(D\tilde{\Phi}^{k+1} - D\tilde{\Phi}^{k} \right)
        +   \theta g(t,x,h^{k}) \left( \tilde{\Phi}^{k+1} - \tilde{\Phi}^{k} \right)
                                               \nonumber\\
        &&      + \left( f(x,h^{k})^T  D\tilde{\Phi}^{k} + \theta g(t,x,h^{k})\tilde{\Phi}^{k}\right)
                - \left( f(x,h^{k-1})^T  D\tilde{\Phi}^{k} + \theta g(t,x,h^{k-1})\tilde{\Phi}^{k}\right)
                                               \nonumber\\
    &=&
        0
    \qquad \textrm{in } (0,T)\times\mathbb{R}^n
                                                \nonumber
\end{eqnarray}
\\

Define the function $\ell^k(t,x)$ as
\begin{eqnarray}
        \ell^k(t,x) :=
                \left( f(x,h^{k})^T  D\tilde{\Phi}^{k} + \theta g(t,x,h^{k})\tilde{\Phi}^{k}\right)
                - \left( f(x,h^{k-1})^T  D\tilde{\Phi}^{k} + \theta g(t,x,h^{k-1})\tilde{\Phi}^{k}\right)
                                                                                                                                \nonumber
\end{eqnarray}
By the definition of $h^{k}$ given in~\eqref{eq_JDRSAM_diffstate_logtrans_PolicyImprovement_def_h_k}, $\ell^k(t,x) \leq 0$ $\forall (t,x) \in [0,T]\times\mathbb{R}^n, \forall k \in \mathbb{N}$. Define the sequence of functions $(W^k)_{k \in \mathbb{N}}$ as
\begin{eqnarray}
        W^k := \tilde{\Phi}^{k+1} - \tilde{\Phi}^{k}
                                                                        \nonumber
\end{eqnarray}
then $W^k$ satisfies the PDE
\begin{eqnarray}\label{as_JDRSAM_diffstate_intermediatePDE1}
    &&
        \frac{\partial W^{k}}{\partial t}
      +  \frac{1}{2} \textrm{tr} \left(\Lambda \Lambda^T (t) D^2 W^{k}\right)
                +  f(x,h^{k})^T DW^{k}
      +  \theta g(t,x,h^{k}) W^{k}
                + \ell^k(t,x)
                = 0
                                                                                                                        \nonumber\\
\end{eqnarray}
in $(0,T)\times\mathscr{B}_R$, and with boundary condition $W^{k}(T,x) = 0$ on $\partial^* Q_R = \left( (0,T) \times \partial\mathscr{B}_R \right) \cup \left( \left\{T\right\} \times \mathscr{B}_R \right)$.
\\

Define the stopping time $\tau_G$ as the first exit time from $\mathscr{B}_R$:
\begin{eqnarray}
        \tau_G := \inf\left\{t: X(t) \notin G \right\}
                                                                                                                        \nonumber
\end{eqnarray}
By a standard Feynman-Kac representation, $W^k(t,x)$ can be represented by the expectation
\begin{eqnarray}
        W^k(t,x) = \mathbf{E}\left[ \int_{t}^{T\wedge\tau_G} \ell^k(s,X_s) e^{\theta \int_{0}^{s} g(r,X_r) dr} ds \right]
\end{eqnarray}
Because $\ell(t,x) \leq 0$, $W^k(t,x) \leq 0$ for $k \geq 1$ and hence by definition of $W^k$,
\begin{eqnarray}
        \tilde{\Phi}^{k} \geq \tilde{\Phi}^{k+1}
        , \qquad \forall k \in \mathbb{N}
                                                                        \nonumber
\end{eqnarray}
which implies that the sequence $\left\{ \tilde{\Phi}^{k} \right\}_{k \in \mathbb{N}}$ is non increasing.
\\

\indent\indent\textbf{Step 2.2: Convergence of the Sequence}\\
Since the sequence $(\tilde{\Phi}^k)_{k \in \mathbb{N}}$ is non increasing and is also bounded, it converges. Denote by $\tilde{\Phi}$ its limit as $k \to \infty$. Now, since the Sobolev-type norm $\lVert \tilde{\Phi}^{k+1} \rVert_{\eta,Q_R}^{(2)}$ is bounded for $1 < \eta < \infty$, we can apply the following estimate given by equation (E.9) in Appendix E of Fleming and Rishel
\begin{eqnarray}\label{eq_JDRSAM_diffstate_logtrans_BVP_bound0}
            \lvert \tilde{\Phi}^k \rvert_{Q_R}^{1+\mu}
    \leq    M_R \lVert \tilde{\Phi}^k \rVert_{\eta,Q_R}^{(2)}
\end{eqnarray}
for some constant $M_R$ (depending on $R$) and where
\begin{equation}
    \mu = 1 -\frac{n+2}{\eta}
                                            \nonumber
\end{equation}
\begin{eqnarray}
        \lvert \tilde{\Phi}^k \rvert_{Q_R}^{1+\mu}
    =   \lvert \tilde{\Phi}^k \rvert_{Q_R}^{\mu}
    +   \sum_{i=1}^{n} \lvert \tilde{\Phi}_{x_i}^k \rvert_{Q_R}^{\mu}
                                            \nonumber
\end{eqnarray}
and
\begin{eqnarray}
        \lvert \tilde{\Phi}^k \rvert_{Q_R}^{\mu}
    &=& \sup_{(t,x)\in Q_R} \lvert \tilde{\Phi}^k (t,x) \rvert
    +   \sup_{\begin{array}{c} (x,y)\in \overline{G} \\ 0 \leq t \leq T \end{array}}
            \frac{\lvert \tilde{\Phi}^k(t,x)- \tilde{\Phi}^k(t,y) \rvert}{\lvert x - y \rvert^{\mu}}
                                            \nonumber\\
    &+& \sup_{\begin{array}{c} x \in \overline{G} \\ 0 \leq s,t \leq T \end{array}}
            \frac{\lvert \tilde{\Phi}^k(s,x)- \tilde{\Phi}^k(t,x) \rvert}{\lvert s - t \rvert^{\mu/2}}
                                            \nonumber
\end{eqnarray}
to show that the H\"older-type norm $\lvert \tilde{\Phi}^k \rvert_{Q_R}^{1+\mu}$ is bounded. As $k \to \infty$ we conclude that
\begin{itemize}
\item $D\tilde{\Phi}^k$ converges to $D\tilde{\Phi}$ uniformly in $L^{\eta}(Q_R)$ ;
\item $D^2\tilde{\Phi}^k$ converges to $D^2\tilde{\Phi}$ weakly in $L^{\eta}(Q_R)$ ; and
\item $\frac{\partial \tilde{\Phi}^k}{\partial t}$ converges to  $\frac{\partial \tilde{\Phi}}{\partial t}$ weakly in $L^{\eta}(Q_R)$.
\\
\end{itemize}

\indent\indent\textbf{Step 2.3: Proving that $\tilde{\Phi} \in C^{1,2}(Q_R)$}\\
Using estimate~\eqref{eq_JDRSAM_diffstate_logtrans_BVP_bound0}, we see that $\lvert \tilde{\Phi}^k \rvert_{Q_R}^{1+\mu}$ is bounded for $\mu > 0,$ which implies that $\eta > n+2$. Using relationship~\eqref{eq_JDRSAM_diffstate_logtrans_BVP_Phi_tilde_kplus1_identity} and then equation~\eqref{eq_JDRSAM_diffstate_logtrans_BVP_Phi_tilde_kplus1}, we get:
\begin{eqnarray}\label{eq_JDRSAM_diffstate_logtrans_BVP_ineq1}
    &&
        \frac{\partial \tilde{\Phi}^{k}}{\partial t}
        + \frac{1}{2} \textrm{tr}
                    \left( \Lambda \Lambda^T (t) D^2 \tilde{\Phi}^{k}\right)
                +       f(x,h)^T  D\tilde{\Phi}^{k} + \theta g(t,x,h) \tilde{\Phi}^{k}
                                                \nonumber\\
    &\geq&
        \frac{\partial \tilde{\Phi}^{k}}{\partial t}
        + \frac{1}{2} \textrm{tr}
                    \left( \Lambda \Lambda^T (t) D^2 \tilde{\Phi}^{k}\right)
                +       f(x,h^{k})^T  D\tilde{\Phi}^{k} + \theta g(t,x,h^{k}) \tilde{\Phi}^{k}
                                                \nonumber\\
    &=&
        \left( \frac{\partial \tilde{\Phi}^{k}}{\partial t} - \frac{\partial \tilde{\Phi}^{k+1}}{\partial t} \right)
        +   \left(
            \frac{1}{2} \textrm{tr} \left[
                \left( \Lambda \Lambda^T (t) D^2 \tilde{\Phi}^{k}\right)
                -   \left( \Lambda \Lambda^T (t) D^2 \tilde{\Phi}^{k+1}\right)
                \right]\right.
                                                \nonumber\\
    &&  \left.
                +   f(x,h^{k})^T  \left( D\tilde{\Phi}^{k} - D\tilde{\Phi}^{k+1} \right)
        +   \theta g(t,x,h^{k}) \left( \tilde{\Phi}^{k} - \tilde{\Phi}^{k+1} \right)
        \right]
\end{eqnarray}
for any admissible control $h$.
\\

Since the left-hand side of~\eqref{eq_JDRSAM_diffstate_logtrans_BVP_ineq1} tends weakly in $L^{\eta}(Q_R)$ to
\begin{eqnarray}\label{eq_JDRSAM_diffstate_proofstep2_ineq1}
        \frac{\partial \tilde{\Phi}}{\partial t}
        + \frac{1}{2} \textrm{tr}
                    \left( \Lambda \Lambda^T (t) D^2 \tilde{\Phi} \right)
                +       f(x,h)^T  D\tilde{\Phi} + \theta g(t,x,h) \tilde{\Phi}
\end{eqnarray}
as $k \to \infty$ and the right-hand side tends tends weakly to 0, then we obtain the following inequality
\begin{eqnarray}
        \frac{\partial \tilde{\Phi}}{\partial t}
        + \frac{1}{2} \textrm{tr}
                    \left( \Lambda \Lambda^T (t) D^2 \tilde{\Phi} \right)
                +       f(x,h)^T  D\tilde{\Phi} + \theta g(t,x,h) \tilde{\Phi}
    \geq    0
                                                \nonumber
\end{eqnarray}
almost everywhere in $Q_R$.
\\

Using a measurable selection theorem and following an argument similar to that of Lemma VI.6.1 of Fleming and Rishel~\cite{fl75}, we see that there exists a Borel measurable function $h^{*}$ from $(0,T)\times\mathscr{B}_{R}$ into $\overline{\mathcal{J}}$ (in fact $\mathcal{J}$) such that
\begin{eqnarray}
        f(x,h^*)^T  D\tilde{\Phi}
        + \theta g(t,x,h^*) \tilde{\Phi}
    &=& \inf_{h \in \overline{\mathcal{J}}} \left\{
                    f(x,h)^T  D\tilde{\Phi}
        +   \theta g(t,x,h) \tilde{\Phi}\right\}
                                                                                                                                \nonumber
\end{eqnarray}
holds for almost all $(t,x) \in (0,T)\times\mathscr{B}_{R}$. Then
\begin{eqnarray}\label{eq_JDRSAM_diffstate_logtrans_BVP_ineq2}
    &&
        \frac{\partial \tilde{\Phi}^{k}}{\partial t}
        + \frac{1}{2} \textrm{tr}
                    \left( \Lambda \Lambda^T (t) D^2 \tilde{\Phi}^{k}\right)
                +       f(x,h^{*})^T  D\tilde{\Phi}^{k} + \theta g(t,x,h^{*}) \tilde{\Phi}^{k}
                                                \nonumber\\
    &\leq&
        \frac{\partial \tilde{\Phi}^{k}}{\partial t}
        + \frac{1}{2} \textrm{tr}
                    \left( \Lambda \Lambda^T (t) D^2 \tilde{\Phi}^{k}\right)
                +       f(x,h^{k})^T  D\tilde{\Phi}^{k} + \theta g(t,x,h^{k}) \tilde{\Phi}^{k}
                                                \nonumber\\
    &=&
        \left( \frac{\partial \tilde{\Phi}^{k}}{\partial t} - \frac{\partial \tilde{\Phi}^{k+1}}{\partial t} \right)
        +   \left(
            \frac{1}{2} \textrm{tr} \left[
                \left( \Lambda \Lambda^T (t) D^2 \tilde{\Phi}^{k}\right)
                -   \left( \Lambda \Lambda^T (t) D^2 \tilde{\Phi}^{k+1}\right)
                \right]\right.
                                                \nonumber\\
    &&  \left.
                +   f(x,h^{k})^T  \left( D\tilde{\Phi}^{k} - D\tilde{\Phi}^{k+1} \right)
        +   \theta g(t,x,h^{k}) \left( \tilde{\Phi}^{k} - D\tilde{\Phi}^{k+1} \right)
        \right]
\end{eqnarray}

Since the left-hand side of~\eqref{eq_JDRSAM_diffstate_logtrans_BVP_ineq2} tends weakly in $L^{\eta}(Q_R)$ to
\begin{eqnarray}
        \frac{\partial \tilde{\Phi}}{\partial t}
        + \frac{1}{2} \textrm{tr}
                    \left( \Lambda \Lambda^T (t) D^2 \tilde{\Phi} \right)
                +       f(x,h^{*})^T  D\tilde{\Phi} + \theta g(t,x,h^{*}) \tilde{\Phi}
                                                \nonumber
\end{eqnarray}
as $k \to \infty$ and the right-hand side tends weakly to 0, then we obtain the inequality
\begin{eqnarray}\label{eq_JDRSAM_diffstate_proofstep2_ineq2}
        \frac{\partial \tilde{\Phi}}{\partial t}
        + \frac{1}{2} \textrm{tr}
                    \left( \Lambda \Lambda^T (t) D^2 \tilde{\Phi} \right)
                +       f(x,h^{*})^T  D\tilde{\Phi} + \theta g(t,x,h^{*}) \tilde{\Phi}
    \leq    0
\end{eqnarray}
almost everywhere in $Q_R$.
\\

Combining~\eqref{eq_JDRSAM_diffstate_proofstep2_ineq1} and~\eqref{eq_JDRSAM_diffstate_proofstep2_ineq2}, we have shown that
\begin{eqnarray}
        \frac{\partial \tilde{\Phi}}{\partial t}
        + \frac{1}{2} \textrm{tr}
                    \left( \Lambda \Lambda^T (t) D^2 \tilde{\Phi} \right)
                +       f(x,h^{*})^T  D\tilde{\Phi} + \theta g(t,x,h^{*}) \tilde{\Phi}
    =   0
                                                \nonumber
\end{eqnarray}
almost everywhere in $Q_R$.
\\

Hence, $\tilde{\Phi}$ is a solution of equation~\eqref{eq_JDRSAM_diffstate_exptrans_HJBPDE} on a bounded domain. Moreover, $\tilde{\Phi} \in \mathcal{L}^{\eta}(Q_R)$. Also, since $H$ is locally Lipschitz, $\lvert \tilde{\Phi}^k \rvert_{Q_R}^{\mu} < \infty$ for $\mu > 0$ and $\lvert D\tilde{\Phi}^k \rvert_{Q_R}^{\mu} < \infty$ for $\mu > 0$, then $\lvert H(t,x,\tilde{\Phi}^k,D\tilde{\Phi}^k) \rvert_{Q_R}^{\mu} < \infty$.
\\

We can now show that $\tilde{\Phi} \in C^{1,2}(Q_R)$. Define
\begin{eqnarray}
        \lvert \tilde{\Phi}^k \rvert_{Q_R}^{2+\mu}
   :=   \lvert \tilde{\Phi}^k \rvert_{Q_R}^{1+\mu}
        +       \Big\lvert \frac{\partial \tilde{\Phi}^k}{\partial t} \Big\rvert_{Q_R}^{\mu}
   +    \sum_{i,j=1}^{n} \lvert \tilde{\Phi}_{x_i x_j}^k \rvert_{Q_R}^{\mu}
                                            \nonumber
\end{eqnarray}
Consider the following estimate given by equation (E10) in Appendix E of Fleming and Rishel
\begin{eqnarray}\label{eq_JDRSAM_diffstate_logtrans_BVP_bound1}
            \lvert \tilde{\Phi} \rvert_{Q'}^{2+\mu}
    \leq    M_2 \lVert \tilde{\Phi} \rVert_{Q''}
\end{eqnarray}
for some constant $M_2$, and two open subsets $Q'$ and $Q''$ of $Q$ such that $\bar{Q'} \subset \bar{Q''}$. In this estimate, set $Q'' = Q_R$ and take $Q'$ to be any subset of $Q$ such that $\bar{Q}' \subset Q$. Thus
\begin{eqnarray}\label{eq_JDRSAM_diffstate_logtrans_BVP_bound2}
            \lvert \tilde{\Phi} \rvert_{Q'}^{2+\mu} < \infty
\end{eqnarray}
When interpreted in light of estimate~\eqref{eq_JDRSAM_diffstate_logtrans_BVP_bound0} (stemming from (E9)), we see that the derivatives $\frac{\partial \tilde{\Phi}}{\partial t}$, $\frac{\partial \tilde{\Phi}}{\partial x_i}$ and $\frac{\partial^2 \tilde{\Phi}}{\partial x_i x_j}$ satisfy a uniform H\"older condition on any compact subset $Q'$ of $Q_R$. By Theorem 10.1 in Chapter IV of Ladyzhenskaya, Solonnikov and Uralceva~\cite{la68}, we can therefore conclude that $\tilde{\Phi} \in C^{1,2}(Q_R)$.
\\

\textbf{Step 3: Convergence from the Cylinder $[0,T) \times \mathscr{B}_R$ to the State Space $[0,T) \times \mathbb{R}^n$}\\

\indent\indent \textbf{Step 3.1: Setting}
\\
Let $\left\{R_i\right\}_{i \in \mathbb{N}} > 0$ be a non decreasing sequence with $\lim_{i \to \infty} R_i \to \infty$ and let $\left\{ \tau_i \right\}_{i \in \mathbb{N}}$ be the sequence of stopping times defined as
\begin{eqnarray}
        \tau_i := \inf\left\{t : X(t) \notin \mathscr{B}_{R_i} \right\}
                                                                                                                                \nonumber
\end{eqnarray}
Note that $\left\{ \tau_i \right\}_{i \in \mathbb{N}}$ is non decreasing and $\lim_{i \to \infty} \tau_i = \infty$.
\\

Denote by $\tilde{\Phi}^{(i)}$ the limit of the sequence $\left(\tilde{\Phi}^k \right)_{k\in \mathbb{N}}$ on $(0,T)\times\mathscr{B}_{R_i}$, i.e.
\begin{eqnarray}
        \tilde{\Phi}^{(i)}(t,x) = \lim_{k \to \infty} \tilde{\Phi}^k(t,x)
                \qquad \forall (t,x) \in (0,T)\times\mathscr{B}_{R_i}
\end{eqnarray}
\\

\indent\indent \textbf{Step 3.2: Convergence of the sequence $\left( \tilde{\Phi}^{(i)} \right)_{i \in \mathbb{N}}$}
\\
First, observe that the sequence $(\tilde{\Phi}^{(i)})_{i \in \mathbb{N}}$ is non increasing. Indeed, for $i < j$ the stochastic control problem defined over $(0,T)\times\mathscr{B}_{R_i}$ is nested into the stochastic control problem defined over $(0,T)\times\mathscr{B}_{R_j}$. In particular, a suboptimal strategy for the stochastic control problem defined over $(0,T)\times\mathscr{B}_{R_j}$ would be to invest optimally while $x \in \mathscr{B}_{R_i}$ and then switch to the $0\beta$ policy $\check{h}$ when $x \in \mathscr{B}_{R_j}\backslash\mathscr{B}_{R_i}$. By the optimality principle, the expected total cost of such strategy is greater than the value function $\tilde{\Phi}^{(j)}$. But this suboptimal strategy also corresponds to the optimal strategy for the stochastic control problem defined over $(0,T)\times\mathscr{B}_{R_i}$. Hence
\begin{eqnarray}
        \tilde{\Phi}^{(i)}(t,x) \geq \tilde{\Phi}^{(j)}(t,x)
        \qquad \forall i,j \in \mathbb{N},
        \; \forall (t,x) \in (0,T)\times\mathscr{B}_{R_i}
                                                                                                \nonumber
\end{eqnarray}

By the argument in Step 1, the sequence $(\tilde{\Phi}^{(i)})_{i \in \mathbb{N}}$ is also bounded. As a result, it converges to a limit $\tilde{\Phi}$. This limit satisfies the boundary condition~\eqref{eq_JDRSAM_diffstate_exptrans_HJBPDE_termcond}. We now show that $\tilde{\Phi}$ is $C^{1,2}$ and satisfies the HJB PDE. These statements are local properties so we can restrict ourselves to a finite ball $Q_R$.
\\

\indent\indent \textbf{Step 3.3: Proving that $\tilde{\Phi} \in C^{1,2}(Q_R)$}
\\
Now that we have shown the convergence of the sequence $\left( \tilde{\Phi}^{(i)} \right)_{i \in \mathbb{N}}$ though a simple control-based argument, we can conclude the proof using the same arguments based on Ascoli's theorem as Fleming and Rishel (see \cite{fl75}, proof of Theorem 6.2 in Appendix E).

%the function $\tilde{\Phi}$ is the desired solution of equation~\eqref{eq_JDRSAM_diffstate_exptrans_HJBPDE} with terminal condition $\tilde{\Phi}(T,x) = e^{-\theta \ln v}$

\end{proof}

\begin{corollary}[Existence of a Classical Solution for the Risk-Sensitive Control Problem]\label{Coro_JDRSAM_diffstate_existence}
        The RS HJB PDE~\eqref{eq_JDRSAM_diffstate_HJBPDE} with terminal condition $\Phi(T,x) = \ln v$ has a solution $\Phi \in C^{1,2}\left([0,T]\times\mathbb{R}^n\right)$ with $\Phi$ continuous in $\overline{[0,T]\times\mathbb{R}^n}$.

\end{corollary}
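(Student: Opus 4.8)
The plan is to construct $\Phi$ directly from the solution $\tilde{\Phi}$ furnished by Theorem~\ref{Theo_JDRSAM_diffstate_existence} through the pointwise relation~\eqref{eq_JDRSAM_diffstate_relationship_Phi_tildePhi}; that is, I would set $\Phi := -\tfrac{1}{\theta}\ln\tilde{\Phi}$ and then check that this function enjoys all the asserted properties. Nothing new needs to be proved about existence: the entire content of the corollary is that the $C^{1,2}$ regularity and the PDE survive the smooth, strictly monotone change of variables $\tilde{\Phi}\mapsto\Phi$, together with the transfer of the terminal condition.

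The first point to nail down is that $\tilde{\Phi}>0$ everywhere on $[0,T]\times\mathbb{R}^n$, since this is what makes $-\tfrac1\theta\ln\tilde{\Phi}$ well defined and $C^{1,2}$. This is immediate from the construction in the proof of Theorem~\ref{Theo_JDRSAM_diffstate_existence}: $\tilde{\Phi}$ is a monotone limit of value functions, each represented by Feynman--K\v{a}c as an expectation of a strictly positive random variable, and a Jensen-type bound (using that controls range over the bounded set $\mathcal{J}$ and that $g$ has at most linear growth in $x$, so that $\mathbf{E}^{h}_{t,x}\left[\int_t^T g(X_s,h(s))\,ds\right]$ is bounded below locally uniformly in $x$) gives $\tilde{\Phi}(t,x)\ge e^{-C(1+|x|)}>0$. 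Since $\tilde{\Phi}$ is continuous on $[0,T]\times\mathbb{R}^n$, so is $\Phi$; and since $\tilde{\Phi}\in C^{1,2}\left((0,T)\times\mathbb{R}^n\right)$ and is bounded away from $0$ on compact subsets, the chain rule gives $\Phi\in C^{1,2}\left((0,T)\times\mathbb{R}^n\right)$. The terminal condition transfers trivially: $\tilde{\Phi}(T,x)=e^{-\theta\ln v}$ yields $\Phi(T,x)=\ln v$.

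It then remains to verify that $\Phi$ solves the RS HJB PDE~\eqref{eq_JDRSAM_diffstate_HJBPDE}, but this is precisely the exponential change of variables connecting~\eqref{eq_JDRSAM_diffstate_HJBPDE} and~\eqref{eq_JDRSAM_diffstate_exptrans_HJBPDE} run backwards. Writing $\tilde{\Phi}=e^{-\theta\Phi}$ one has $\partial_t\tilde{\Phi}=-\theta\tilde{\Phi}\,\partial_t\Phi$, $D\tilde{\Phi}=-\theta\tilde{\Phi}\,D\Phi$, and $D^2\tilde{\Phi}=-\theta\tilde{\Phi}\left(D^2\Phi-\theta\,D\Phi\,(D\Phi)^T\right)$; moreover, by the definition of $H$ in~\eqref{eq_JDRSAM_diffstate_logtrans_H_function} and $\tilde{\Phi}>0$, $H(x,\tilde{\Phi},D\tilde{\Phi})=-\theta\tilde{\Phi}\,\sup_{h\in\mathcal{J}}\left\{f(x,h)^T D\Phi-g(x,h)\right\}$. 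Substituting into~\eqref{eq_JDRSAM_diffstate_exptrans_HJBPDE} and dividing through by the nonzero factor $-\theta\tilde{\Phi}$ gives
\[
\partial_t\Phi+\tfrac12\mathrm{tr}\left(\Lambda\Lambda^T D^2\Phi\right)-\tfrac{\theta}{2}(D\Phi)^T\Lambda\Lambda^T D\Phi+\sup_{h\in\mathcal{J}}\left\{f(x,h)^T D\Phi-g(x,h)\right\}=0,
\]
which, recalling $f(x,h)=b+Bx-\theta\Lambda\Sigma^T h$ and the definition~\eqref{eq_JDRSAM_diffstate_HJBPDE_operator_L} of $L_t^h$, is exactly $\partial_t\Phi+\sup_{h\in\mathcal{J}}L_t^h\Phi=0$. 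Since $\tilde{\Phi}$ satisfies its equation in the classical sense on $(0,T)\times\mathbb{R}^n$ and the transformation is smooth there, $\Phi$ satisfies~\eqref{eq_JDRSAM_diffstate_HJBPDE} classically.

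I do not expect a genuine obstacle in this corollary. The only step requiring any care is the uniform-on-compacts positivity of $\tilde{\Phi}$, which legitimizes the logarithm and preserves $C^{1,2}$ regularity; everything else is the algebraic inversion of the exponential change of variables already performed when the two HJB equations were introduced, plus the observation that $\Phi$ and $\tilde{\Phi}$ being related by a strictly monotone transformation, continuity and the terminal data carry over automatically.
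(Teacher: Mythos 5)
Your proposal is correct and follows essentially the route the paper intends (and leaves implicit): the corollary is obtained from Theorem~\ref{Theo_JDRSAM_diffstate_existence} through the exponential change of variables $\Phi=-\frac{1}{\theta}\ln\tilde{\Phi}$ of \eqref{eq_JDRSAM_diffstate_relationship_Phi_tildePhi}, which transfers the regularity, the terminal condition, and converts \eqref{eq_JDRSAM_diffstate_exptrans_HJBPDE} into \eqref{eq_JDRSAM_diffstate_HJBPDE}. Your explicit verification that $\tilde{\Phi}$ is locally bounded away from zero (via the Feynman--Kac representation and a Jensen-type bound) is a detail the paper does not spell out, and it is exactly the point needed to legitimize the logarithm, so including it is a welcome refinement rather than a deviation.
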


%%%%%%%%%%%%%%%%%%%%%%%%%%%%%%%%%%%%%%%%%%%%%%%%%%%%%%%%%%%%%%%%%%%%%%%%%%%%%%%%%%%
%
%   NEW SECTION
%
%%%%%%%%%%%%%%%%%%%%%%%%%%%%%%%%%%%%%%%%%%%%%%%%%%%%%%%%%%%%%%%%%%%%%%%%%%%%%%%%%%%

%%%
\section{Partial Observation}
In this section we show how the results of the paper can be extended to the case where the factor process $X(t)$ is not directly observed and the asset allocation strategy $h_t$ must be adapted to the filtration $\cF^S_t=\sigma\{S_i(u),0\leq u \leq t, j=0,\ldots,m\}$ generated by the asset price processes alone. In the linear diffusion case studied by Nagai~\cite{na00} and Nagai and Peng \cite{nape02}, the authors noted that the pair of processes $(X(t),Y(t))$, where $Y_i(t)=\log S_i(t)$, take the form of the `signal' and `observation' processes in a Kalman filter system, and consequently the conditional distribution of $X(t)$ is normal $N(\hat{X}(t),P(t))$ where $\hat{X}(t)=\mathbf{E}[X(t)|\cF^S_t]$ satisfies the Kalman filter equation and $P(t)$ is a deterministic matrix-valued function. By using this idea they obtain an equivalent form of the problem in which $X(t)$ is replaced by $\hat{X}(t)$ and the dynamic equation \eqref{eq_FactorProcess_diffusion} by the Kalman filter. Optimal strategies take the form $h(t,\hat{X}(t))$. This is in fact a very old idea in stochastic control, going back at least to Wonham \cite{w}.

\subsection{Decomposition}
At first sight it does not seem apparent that the same approach can be used here, as the price processes contain jumps, but a simple observation shows that the jumps play no role in the estimation process, which is still, at base, the Kalman filter; see Proposition \ref{1+2} below. A further complication is that the money market interest rate $r(t)=a_0+A_0^T X(t)$ (see \eqref{eq_JDRSAM_BankAccount}) is observed directly and contains information about $X(t)$. This was not the case in \cite{na00} and \cite{nape02} where, in our notation, $A_0=0$.   We start by assuming that $A_0=0$, and briefly discuss the extension to $A_0\not=0$ at the end of the section.

Recall first that $X(t)$ satisfies
\begin{equation}\label{X}
    dX(t) = (b + BX(t))dt + \Lambda dW(t),
    \qquad X(0) = X_0
\end{equation}
When $X_t$ is observed, the initial value $X_0$ is just a constant. In the present case we need to assume that $X_0$ is a normal random vector $N(m_0, P_0)$ with known mean $m_0$ and covariance $P_0$, and that $X_0$ is independent of the processes $W, N_{\mathbf{p}}$.

An application of the general It\^o formula\footnote{See \O ksendal and Sulem \cite{oksu05} for this calculation.} shows that for $i=1,\ldots,m$ the log-prices $Y_i(t)$ satisfy $Y_i(0)=\log s_i$ and
\begin{eqnarray}\label{eq_ObservationProcess}
    dY_i(t) &=&
        \left[(\hat{a} + \hat{A}X(t))_{i} -\frac{1}{2} \Sigma\Sigma_{ii}^T  \right]dt
        + \sum_{k=1}^{N} \sigma_{ik} dW_{k}(t) \nonumber\\
      &  + &\int_{\mathbf{Z}_0} \left\{  \ln\left(1+\gamma_i(z)\right) - \gamma_i(z) \right\} \nu(dz)dt
        + \int_{\mathbf{Z}} \ln \left( 1+\gamma_i(z) \right) \bar{N}(dt,dz).
\end{eqnarray}

\begin{proposition}\label{1+2}
Define processes $Y^1(t), Y^2(t)\in\mathbb{R}^m$ as follows.
\begin{eqnarray}\label{eq_ObservationProcess_stdform}
        dY^1(t)& = &\hat{A}X(t) + \Sigma dW(t),\qquad\qquad\qquad\qquad Y^1_i(0)=0,   \\
        dY_i^2(t) &= & c_i dt +  \int_{\mathbf{Z}} \ln\left( 1+\gamma(z) \right)_i \bar{N}(dt,dz),
        \qquad i=1,\ldots,m, \qquad Y^2_i(0)=\log s_i
                                                \nonumber
\end{eqnarray}
with $c\in\mathbb{R}^m$ defined by
\begin{eqnarray}
        c_i :=  \hat{a}_i
        -       \frac{1}{2} \Sigma\Sigma_{ii}^T 
        +       \int_{\mathbf{Z}_0} \left\{  \ln\left(1+\gamma_i(z)\right) - \gamma_i(z) \right\} \nu(dz)
                                                \nonumber
\end{eqnarray}
so that $Y(t)=Y^1(t)+Y^2(t)$. Also, define $\cY_{it}=\sigma\{Y^i(u),0\leq u\leq t\},\,i=1,2$. Then

\noindent(i) The processes $Y^1, Y^2$ are each adapted to the filtration $\cF^S_t$.

\noindent(ii) For any bounded measurable function $f$ and $t\geq 0$,
\[ \bE[f(X(t))|\cF^S_t]=\bE[f(X(t))|\cY_{1t}].\]
\end{proposition}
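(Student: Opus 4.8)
The plan is to establish (i) first — showing that $\cF^S_t$ coincides with $\cY_{1t}\vee\cY_{2t}$ — and then to deduce (ii) from an elementary conditional‑independence lemma, the essential point being that $Y^2$ is built entirely out of the Poisson random measure $N$, whereas the signal $X$ and the ``Kalman observation'' $Y^1$ are built out of the independent pair $(X_0,W)$.

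\emph{Part (i).} Since we are assuming $A_0=0$, the money‑market process $S_0(t)=s_0e^{a_0t}$ is deterministic and carries no information, so $\cF^S_t=\sigma\{S_i(u):i=1,\dots,m,\ 0\le u\le t\}$; as each $Y_i=\log S_i$ is a bi‑measurable bijection this equals $\cF^Y_t:=\sigma\{Y(u):0\le u\le t\}$. From $Y=Y^1+Y^2$ one gets $\cF^Y_t\subseteq\cY_{1t}\vee\cY_{2t}$ at once, and the work is in the reverse inclusion. Here I would use that, by \eqref{eq_ObservationProcess_stdform}, $Y^1(t)=\int_0^t\hat AX(s)\,ds+\Sigma W(t)$ has continuous paths, whereas $Y^2_i(t)=\log s_i+(\textrm{deterministic drift})\,t+M^d_i(t)$, where $M^d_i$ is the purely discontinuous $L^2$‑martingale $\int_0^t\int_{\mathbf{Z}}\ln(1+\gamma_i(z))\tilde N(ds,dz)$ (after absorbing $\int_{\mathbf{Z}\setminus\mathbf{Z}_0}\ln(1+\gamma_i)\,d\nu$ into the drift). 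Consequently $\Delta Y(t)=\Delta M^d(t)$ for every $t$, so the jumps of $M^d$ are read off the path of $Y$; and since two purely discontinuous martingales with the same jumps and the same initial value coincide (their difference is a continuous purely discontinuous local martingale, hence constant), $M^d$ — and therefore $Y^2$ and $Y^1=Y-Y^2$ — is a path functional of $Y$. To upgrade this to $\cF^Y_t$‑measurability I would approximate $M^d$ by its compensated large‑jump truncations: for $\varepsilon>0$, the process obtained by summing those jumps of $Y$ on $[0,s]$ that exceed $\varepsilon$ in modulus and subtracting the corresponding deterministic $\nu$‑compensator is manifestly $\cF^Y_s$‑measurable (finitely many terms plus a finite constant, the compensator being finite because the relevant truncation set has finite $\nu$‑measure, thanks to $\int_{\mathbf{Z}_0}|\gamma|^2\,d\nu<\infty$ in Assumption~\ref{as_assetjumps_upanddown_1} and $\nu(\mathbf{Z}\setminus\mathbf{Z}_0)<\infty$), and these truncations converge to $M^d$ in $L^2$, uniformly on compacts, by the isometry for compensated Poisson integrals. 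Hence $M^d_s$, and so $Y^1_s$ and $Y^2_s$, are $\cF^Y_s$‑measurable, giving $\cY_{1t}\vee\cY_{2t}\subseteq\cF^Y_t=\cF^S_t$ and therefore equality. I expect this small‑jumps reconstruction to be the one step requiring genuine care; everything else in (i) is bookkeeping.

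\emph{Part (ii).} The pair $(X(t),Y^1(t))$ is a measurable functional of $(X_0,(W(s))_{0\le s\le t})$: $X$ solves \eqref{X}, driven by $W$ from initial value $X_0$, and $Y^1$ is given explicitly by \eqref{eq_ObservationProcess_stdform}. Hence $\sigma(X(t))\vee\cY_{1t}\subseteq\sigma(X_0)\vee\sigma\{W(s):0\le s\le t\}$. On the other hand $\cY_{2t}\subseteq\sigma\{N(s,U):0\le s\le t,\ U\in\mathfrak{Z}\}$. Because $W$ and $N$ are independent and $X_0$ is independent of $(W,N)$, the three objects $X_0$, $W$, $N$ are mutually independent, so $\sigma(X(t))\vee\cY_{1t}$ is independent of $\cY_{2t}$. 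The statement then follows from the elementary fact that if $\sigma(Z)\vee\mathcal H$ is independent of $\mathcal G$ then $\bE[Z\mid\mathcal H\vee\mathcal G]=\bE[Z\mid\mathcal H]$ — one checks the identity $\bE[Z\mathbf 1_H\mathbf 1_G]=\bE[\bE[Z\mid\mathcal H]\mathbf 1_H\mathbf 1_G]$ on the generating $\pi$‑system $\{H\cap G:H\in\mathcal H,\ G\in\mathcal G\}$, both sides factoring as $\bE[Z\mathbf 1_H]\,\bP(G)$ by the independence — applied with $Z=f(X(t))$, $\mathcal H=\cY_{1t}$, $\mathcal G=\cY_{2t}$, and $\mathcal H\vee\mathcal G=\cF^S_t$ by part (i).
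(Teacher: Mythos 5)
Your proposal is correct and follows essentially the same route as the paper: part (i) rests on identifying $Y^1,Y^2$ with the continuous/purely discontinuous decomposition of the semimartingale $Y$ (the paper simply cites Rogers and Williams VI.37 for its adaptedness to $\cF^S_t$, where you prove it by hand via compensated large-jump truncations), and part (ii) rests on the independence of $(X_0,W)$ from $N$. The only substantive refinement is in (ii): the paper's phrasing ``$X(t)$ is independent of $\cY_{2t}$'' is slightly loose, and you correctly note that what is needed (and what holds) is independence of $\sigma(X(t))\vee\cY_{1t}$ from $\cY_{2t}$, together with the standard $\pi$-system lemma.
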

\proof (i) $S(t)$ and $Y(t)$ are in 1-1 correspondence and therefore generate the same filtration $\cF^S_t$. Apart from rearrangement of deterministic terms, the decomposition $Y=Y^1+Y^2$ is the same as the standard  decomposition $Y=Y^c+Y^d$ of a semimartingale into its continuous and discontinuous components, see paragraph VI. 37 of Rogers and Williams \cite{rw2}.

\noindent (ii) $N$ and $W(t)$ are independent and as a result $\cY_{1t}$ and $\cY_{2t}$ are independent, and clearly $\cF^S_t=\cY_{1t}\vee\cY_{2t}$. The result follows, since $X(t)$ is independent of $\cY_{2t}$.\hfill$\square$

\subsection{Kalman Filter}
The processes $(X(t),Y^1(t))$ satisfying \eqref{X} and \eqref{eq_ObservationProcess_stdform} and the filtering equations, which are standard, are stated in the following proposition.
\begin{proposition}[Kalman Filter]
The conditional distribution of $X(t)$ given $\cY_{1t}$ is $N( \hat{X}(t), P(t))$, calculated as follows.

\noindent(i) The \emph{innovations process} $U(t)\in\bR^m$ defined by
\begin{equation}\label{eq_InnovationProcess}
        dU(t)=\left(\Sigma\Sigma^T \right)^{-1/2}(dY^1(t)-\hat{A}\hat{X}(t)dt),\qquad U(0)= 0
\end{equation}
is a vector Brownian motion.

\noindent (ii) $\hat{X}(t)$ is the unique solution of the SDE
\begin{equation}\label{eq_FactorProcess_diffusion_Estimation_Final}
    d\hat{X}(t)
    = (b+  B \hat{X}(t)) dt
    +   \left( \Lambda\Sigma^T  + P(t)\hat{A}^T  \right)\left(\Sigma\Sigma^T \right)^{-1/2} dU(t),
    \qquad \hat{X}(0) = m_0.
\end{equation}

\noindent(iii) $P(t)$ is the unique non-negative definite symmetric solution of the matrix Riccati equation
\begin{eqnarray}
        \dot{P}(t)
        &=&     \Lambda\Xi\Xi^T \Lambda^T 
        -       P(t)\hat{A}^T \left(\Sigma\Sigma^T \right)^{-1} \hat{A}P(t)
        +       \left( B - \Lambda\Sigma^T \left(\Sigma\Sigma^T \right)^{-1}\hat{A}\right)P(t)
                                                        \nonumber\\
        &&
        +       P(t)\left( B^T  - \hat{A}^T \left(\Sigma\Sigma^T \right)^{-1}\Sigma\Lambda^T \right),
        \qquad  P(0) = P_0
                                                        \nonumber
\end{eqnarray}
where $\Xi := I - \Sigma^T \left(\Sigma^T \Sigma\right)^{-1}\Sigma$.
\newline
\end{proposition}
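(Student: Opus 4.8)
The plan is to recognise this as the classical Kalman–Bucy filter for a linear system with \emph{correlated} signal and observation noise (both \eqref{X} and \eqref{eq_ObservationProcess_stdform} are driven by the same $W$), and to assemble the proof from three ingredients: the joint Gaussianity of $(X(t),Y^1(t))$, the innovations representation, and an It\^o computation for the conditional error covariance.

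First I would establish that, because $X_0\sim N(m_0,P_0)$ is independent of $W$ and both equations are linear with deterministic coefficients, the $\mathbb{R}^{n+m}$-valued process $(X(t),Y^1(t))$ is Gaussian. Consequently the conditional law of $X(t)$ given $\mathcal{Y}_{1t}$ is Gaussian $N(\hat X(t),P(t))$, with $\hat X(t)=\mathbf{E}[X(t)\mid\mathcal{Y}_{1t}]$ the $L^2$-projection and $P(t)=\mathbf{E}\bigl[(X(t)-\hat X(t))(X(t)-\hat X(t))^{T}\bigr]$; the structurally crucial point is that this conditional covariance is \emph{deterministic}, not path-dependent, which is what reduces the problem to identifying the dynamics of $\hat X$ and an ODE for $P$. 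For part (i), set $M(t):=Y^1(t)-\int_0^t\hat A\hat X(s)\,ds$; since $dY^1=\hat A X\,dt+\Sigma\,dW$ and $\mathbf{E}[\hat A X(s)\mid\mathcal{Y}_{1s}]=\hat A\hat X(s)$, a short argument shows $M$ is a continuous $\mathcal{Y}_{1t}$-martingale whose quadratic variation equals that of $Y^1$, namely $\Sigma\Sigma^{T}t$. By L\'evy's characterisation, $U(t)=(\Sigma\Sigma^{T})^{-1/2}M(t)$ defined by \eqref{eq_InnovationProcess} is then an $m$-dimensional Brownian motion.

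For part (ii), I would use the innovations representation: every square-integrable $\mathcal{Y}_{1t}$-martingale is a stochastic integral against $U$ — for this linear-Gaussian system this can be obtained directly from the Gaussian structure, without invoking the general Fujisaki–Kallianpur–Kunita theorem, and is legitimate here because $\Sigma\Sigma^{T}>0$ ensures $U$ and $Y^1$ generate the same filtration. Writing $X(t)=X_0+\int_0^t(b+BX(s))\,ds+\Lambda W(t)$ and taking conditional expectations gives $\hat X(t)=m_0+\int_0^t(b+B\hat X(s))\,ds+(\text{martingale})$, and the martingale must have the form $\int_0^t K(s)\,dU(s)$; the gain $K(s)$ is then pinned down by the requirement that $X(t)-\hat X(t)\perp\mathcal{Y}_{1t}$, equivalently by matching $\mathbf{E}[X(t)U(t)^{T}]$, which yields $K(t)=(\Lambda\Sigma^{T}+P(t)\hat A^{T})(\Sigma\Sigma^{T})^{-1/2}$, i.e. exactly \eqref{eq_FactorProcess_diffusion_Estimation_Final}. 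The term $\Lambda\Sigma^{T}$ is precisely the contribution of the correlation between signal and observation noise. Uniqueness of $\hat X$ follows since \eqref{eq_FactorProcess_diffusion_Estimation_Final} is a linear SDE with Lipschitz coefficients once $P(\cdot)$ is known.

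For part (iii), apply It\^o's formula to $(X(t)-\hat X(t))(X(t)-\hat X(t))^{T}$ using the SDEs for $X$ and for $\hat X$, and take expectations: the stochastic integrals vanish and, since $P$ is deterministic, one obtains the stated matrix Riccati ODE with $P(0)=P_0$, where $\Lambda\Xi\Xi^{T}\Lambda^{T}=\Lambda\Lambda^{T}-\Lambda\Sigma^{T}(\Sigma\Sigma^{T})^{-1}\Sigma\Lambda^{T}$ (the conditional covariance rate of the signal noise given the observation noise, $\Xi$ being the orthogonal projection onto $\ker\Sigma$) and $-P\hat A^{T}(\Sigma\Sigma^{T})^{-1}\hat A P$ comes from the gain. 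Existence and uniqueness of a non-negative-definite symmetric solution on $[0,T]$ is standard Riccati theory, e.g. via the associated linear Hamiltonian system. I expect the only real subtlety to be bookkeeping the correlated-noise cross terms consistently through the gain $K$ and the Riccati equation; everything else is routine and can be referenced to standard treatments of linear filtering (Liptser–Shiryaev, Bensoussan, or Davis, \emph{Linear Estimation and Stochastic Control}).
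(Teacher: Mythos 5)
Your proposal is correct, and it essentially supplies the standard argument that the paper itself omits: the paper simply states that the filtering equations for the pair $(X,Y^1)$ "are standard" and records them without proof, so your route (joint Gaussianity, L\'evy characterisation of the normalised innovations, gain identification via the orthogonality $X(t)-\hat X(t)\perp\mathcal{Y}_{1t}$, and the It\^o computation giving the correlated-noise Riccati equation) is exactly the textbook derivation being implicitly invoked. One small point worth noting: the paper literally defines $\Xi = I - \Sigma^T\left(\Sigma^T\Sigma\right)^{-1}\Sigma$, but since $\Sigma\in\mathbb{R}^{m\times M}$ with $M=n+m>m$ the matrix $\Sigma^T\Sigma$ is singular; your reading of $\Xi$ as the orthogonal projection $I-\Sigma^T\left(\Sigma\Sigma^T\right)^{-1}\Sigma$, with $\Lambda\Xi\Xi^T\Lambda^T=\Lambda\Lambda^T-\Lambda\Sigma^T\left(\Sigma\Sigma^T\right)^{-1}\Sigma\Lambda^T$, is the correct interpretation and is what makes the stated Riccati equation agree with the one your computation produces.
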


To conclude, the Kalman filter has replaced our initial state process $X(t)$ by an estimate $\hat{X}(t)$ with dynamics given in~\eqref{eq_FactorProcess_diffusion_Estimation_Final}. To recover the asset price process,  we use~\eqref{eq_ObservationProcess_stdform} together with~\eqref{eq_InnovationProcess} to obtain the dynamics of $Y(t)$:
\begin{eqnarray}
        dY_i(t)
        &=& dY_i^1(t) + dY_i^2(t)
                                                                        \nonumber\\
        &=&     \hat{a}_i + \hat{A} \hat{X}(t) dt
        -       \frac{1}{2} \Sigma\Sigma_{ii}^T  dt
        +       \left(\Sigma\Sigma^T \right)^{1/2} dU(t)
                                                                        \nonumber\\
        &&
        +       \int_{\mathbf{Z}_0} \left\{  \ln\left(1+\gamma_i(z)\right) - \gamma_i(z) \right\} \nu(dz)
        +       \int_{\mathbf{Z}} \ln\left( 1+\gamma(z) \right)_i \bar{N}(dt,dz).
\end{eqnarray}
We then apply It\^o to $S_i(t) =  \exp Y_i(t)$ to get
\begin{eqnarray}\label{eq_SecurityProcess_Estimation_Final}
    \frac{dS_{i}(t)}{S_{i}(t^{-})} &=&
        (a + A\hat{X}(t))_{i}dt
        + \sum_{k=1}^{N} \left[\left(\Sigma\Sigma^T \right)^{1/2}\right]_{ik} dU_{k}(t)
        + \int_{\mathbf{Z}} \gamma_i(z)\bar{N}(dt,dz),
                                                \nonumber\\
    && S_i(0) = s_i,
    \quad i = 1, \ldots, m
\end{eqnarray}
\newline

We now solve the stochastic control problem with partial observation simply by replacing the original asset price description \eqref{eq_SecurityProcess} by \eqref{eq_SecurityProcess_Estimation_Final}, and the factor process description \eqref{eq_FactorProcess_diffusion} by the Kalman filter equation \eqref{eq_FactorProcess_diffusion_Estimation_Final}, in our solution of full observation case. The Kalman filter has time-varying coefficients, but this does not affect the preceding arguments.

Finally, we briefly sketch what to do if $A_0\not=0$. We observe the short rate $r(t)=a_0+A_0^T X(t)$, and hence the 1-dimensional statistic $Y_0(t)\equiv A_0^T X(t)$, exactly. We need to assume that this observation contains positive `noise', i.e. $A_0^T \Lambda\Lambda^T A_0>0$. Changing coordinates if necessary, we can assume that $A_0^T  = (0,0,\ldots,1)$ and hence $Y_0(t)=X_n(t)$. Our `observation' is now the $(m+1)$-dimensional process $\bar{Y}=(Y_0,\ldots, Y_m)$ and we can set up a Kalman filter system to estimate the unobserved states $\bar{X}=(X_1,\ldots,X_{n-1})^T \in \bR^{n-1}$. Ultimately, our optimal strategy will take the form $h(t,X_1(t),\hat{\bar{X}}(t))$, where $\hat{\bar{X}}(t)$ is the Kalman filter estimate for $\bar{X}(t)$ given $\{\bar{Y}(u), u\leq t\}$. The details are left to the reader.

%%%%%%%%%%%%%%%%%%%%%%%%%%%%%%%%%%%%%%%%%%%%%%%%%%%%%%%%%%%%%%%%%%%%%%%%%%%%%%%%%%%
%
%   CONCLUSION
%
%%%%%%%%%%%%%%%%%%%%%%%%%%%%%%%%%%%%%%%%%%%%%%%%%%%%%%%%%%%%%%%%%%%%%%%%%%%%%%%%%%%

%%%
\section{Conclusion}
In this article, we extended the classical risk-sensitive asset management setting to include the possibility of infinite activity jumps in asset prices. We applied the change of measure technique proposed by Kuroda and Nagai~\cite{kuna02} to derive the Hamilton-Jacobi-Bellman Partial Differential Equation associated with the control problem and then proved the existence and uniqueness of an admissible optimal control policy. Using an approximation in policy space algorithm, we established the existence of a classical $C^{1,2}\left((0,T)\times\mathbb{R}^n\right)$ solution and obtained the uniqueness of this solution through a verification result. This approach also extends naturally and with similar results to a jump-diffusion version of the risk-sensitive benchmarked asset management problem.
\\

Finally, we have observed that an attractive, if somewhat surprising, feature of the jump diffusion risk sensitive asset management is that it naturally prohibits any investment policy which may result in the investor's bankruptcy.  In particular, in the risk-sensitive setting presented in this article, an investor who implements the optimal asset allocation is certain of remaining solvent over the investment horizon. This contrasts with the Merton type of approach in which the threat of bankruptcy remains present and has to be accounted for using a stopping time.
\\

The approach presented in this article extends naturally to a jump-diffusion version of the risk-sensitive benchmarked asset management problem introduced by Davis and Lleo~\cite{dall_RSBench} and would yield similar results, namely the existence of a unique admissible control policy and of a classical $C^{1,2}$ solution to the associated RS HJB PDE.
\\

%%%%%%%%%%%%%%%%%%%%%%%%%%%%%%%%%%%%%%%%%%%%%%%%%%%%%%%%%%%%%%%%%%%%%%%%%%%
%
%   APPENDIX
%
%%%%%%%%%%%%%%%%%%%%%%%%%%%%%%%%%%%%%%%%%%%%%%%%%%%%%%%%%%%%%%%%%%%%%%%%%%%

\appendix

%\newpage
\section{Admissibility of the Optimal Control Policy}\label{sec_Admissibility}
The admissibility of the optimal control process $h^*(t)$ solving~\eqref{eq_JDRSAM_diffstate_supL_deriv} is linked to the existence of a probability measure $\mathbb{P}_{h^{*}}^{\theta}$, which itself hinges on the characterisation as an exponential martingale of the Radon-Nikodym derivative $\frac{d\mathbb{P}_{h^*}^{\theta}}{d\mathbb{P}} = \chi_T^{*}$ defined in~\eqref{eq_JDRSAM_RNder_chi} via the Dol\'eans exponential introduced in~\eqref{eq_JDRSAM_Doleansexp_chi}. In the setting of Kuroda and Nagai~\cite{kuna02}, the admissibility of the control follows easily from an argument in Gihman and Skhorokhod~\cite{gisk72} which proves that the the Dol\'eans exponential (here a Girsanov exponential with Gaussian integrand) is an exponential martingale. However, when the Dol\'eans exponential does not have continuous path, as is the case in a jump diffusion setting, proving that it is indeed a martingale is more difficult. As noted by Protter~\cite{Pr05}, some partial results exist in  this case (see for example M\'emin~\cite{me79} and more recently Protter and Shimbo~\cite{prsh08}), but none is as powerful as their counterparts in the continuous case, namely the Kamazaki or the Novikov conditions.
\\

To show that the Dol\'eans exponential introduced in~\eqref{eq_JDRSAM_Doleansexp_chi} is a martingale we will apply results derived by M\'emin~\cite{me79}. We recall here the definition of the Dol\'eans-Dade exponential as well as results from M\'emin~\cite{me79} (see also Exercise 13 in Chapter V of~\cite{Pr05}) on the multiplicative decomposition of local martingales that we will use to prove our point.
\\

\begin{definition}[Dol\'eans-Dade exponential]\label{def_DDexponential}
        The Dol\'eans-Dade exponential $\mathcal{E}(X)(t)$ of a semimartingale $X$(t) is defined as
\begin{eqnarray}\label{def_JDRSAM_DoleansDade_exponenial}
        \mathcal{E}(X)(t)
        =       \exp\left\{ X(t) - \frac{1}{2}\left[X^c,X^c\right]_t \right\}
                \prod_{0 < s \leq t} (1 + \Delta X_t)e^{- \Delta X_s}
\end{eqnarray}

\end{definition}

\begin{definition}[M\'emin's Additive Decomposition of Local Martingales]\label{def_Memin_decomposition_additive}

Let $M(t)$ be a local martingale.  We define an additive decomposition of $M$ into two processes $M_1(t)$ and $M_2(t)$, i.e. such that $M(t) = M_1(t) + M_2(t)$.
\\

In this decomposition, the process $M_1(t)$ is defined as $M_1(t) = L(t) - \tilde{L}(t)$ where
\begin{eqnarray}
        L(t) = \sum_{0 < s \leq t} \Delta M_s \mathit{1}_{\left\{ |\Delta M_s| \geq \frac{1}{2}   \right\}}
                                                                                                        \nonumber
\end{eqnarray}
and $\tilde{L}(t)$ is the compensator of $L(t)$.

\end{definition}

\begin{proposition}[M\'emin's Proposition III-1]\label{prop_Memin_III1}
Let $M(t)$ be a local martingale with additive decomposition as per definition~\ref{def_Memin_decomposition_additive} and such that $M_0 = 0$. Then
\begin{enumerate}[(i)]
\item $\mathcal{E}(M)$ has the decomposition
                \begin{equation}
                                \mathcal{E}(M) = \mathcal{E}(M_2)\mathcal{E}(\tilde{M}_1)
                                                                                                                \nonumber
                \end{equation}
                where
                \begin{equation}
                                \tilde{M}_1(t) = M_1(t) - \sum_{0 < s \leq t} \frac{\Delta M_1(s)\Delta M_2(s)}{1+\Delta M_2(s)}, \qquad t < \infty
                                                                                                                \nonumber
                \end{equation}
\item $\mathcal{E}(M_2)\tilde{M}_1$ is a local martingale.
\item If $\Delta M(s) > -1$ then $\Delta \tilde{M}_1(s) > -1$ for all finite $s$.

\end{enumerate}

\end{proposition}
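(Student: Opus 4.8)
The plan is to deduce all three statements from the multiplicative formula for stochastic exponentials, $\mathcal{E}(X)\mathcal{E}(Y)=\mathcal{E}(X+Y+[X,Y])$ (Yor's formula; see e.g.\ Protter~\cite{Pr05}), together with an explicit computation of the jumps of $\tilde{M}_1$. Two structural facts will be used throughout: first, $M_1=L-\tilde{L}$ is a \emph{finite-variation} local martingale ($L$ has only finitely many jumps on each compact interval and is of locally integrable variation, $\tilde{L}$ is its dual predictable projection, and $L-\tilde{L}$ is the associated compensated local martingale), and consequently $\tilde{M}_1$, which differs from $M_1$ by a finite-variation pure-jump process, is itself of finite variation; second, $M_2=M-M_1$ is a local martingale.

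For (i), I would first read off the jump of $\tilde{M}_1$ at a time $s$ from Definition~\ref{def_Memin_decomposition_additive}:
\[
\Delta\tilde{M}_1(s)=\Delta M_1(s)-\frac{\Delta M_1(s)\,\Delta M_2(s)}{1+\Delta M_2(s)}=\frac{\Delta M_1(s)}{1+\Delta M_2(s)},
\]
which presupposes $1+\Delta M_2(s)\neq 0$ (established in Part (iii) below). Since $\tilde{M}_1$ is of finite variation, $[M_2,\tilde{M}_1]_t=\sum_{0<s\le t}\Delta M_2(s)\,\Delta\tilde{M}_1(s)=\sum_{0<s\le t}\frac{\Delta M_1(s)\,\Delta M_2(s)}{1+\Delta M_2(s)}$, i.e.\ precisely the correction term in the definition of $\tilde{M}_1$; hence $\tilde{M}_1+[M_2,\tilde{M}_1]=M_1$ and $M_2+\tilde{M}_1+[M_2,\tilde{M}_1]=M_1+M_2=M$, so Yor's formula with $X=M_2$, $Y=\tilde{M}_1$ gives $\mathcal{E}(M_2)\mathcal{E}(\tilde{M}_1)=\mathcal{E}(M)$. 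For (ii), I would integrate by parts: using $d\mathcal{E}(M_2)=\mathcal{E}(M_2)_{-}\,dM_2$ one gets $[\mathcal{E}(M_2),\tilde{M}_1]=\int_0^{\cdot}\mathcal{E}(M_2)_{s-}\,d[M_2,\tilde{M}_1]_s$, whence
\[
\mathcal{E}(M_2)\,\tilde{M}_1=\int_0^{\cdot}\tilde{M}_1(s-)\,\mathcal{E}(M_2)_{s-}\,dM_2(s)+\int_0^{\cdot}\mathcal{E}(M_2)_{s-}\,d\bigl(\tilde{M}_1+[M_2,\tilde{M}_1]\bigr)(s);
\]
by (i) the second integrator is $M_1$, and since $M_1,M_2$ are local martingales and both integrands are c\`agl\`ad adapted (hence locally bounded and predictable), $\mathcal{E}(M_2)\,\tilde{M}_1$ is a sum of two local martingales.

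For (iii), from $\Delta\tilde{M}_1(s)=\Delta M_1(s)/(1+\Delta M_2(s))$ and $\Delta M=\Delta M_1+\Delta M_2$ one has $1+\Delta\tilde{M}_1(s)=\bigl(1+\Delta M(s)\bigr)/\bigl(1+\Delta M_2(s)\bigr)$; the hypothesis gives $1+\Delta M(s)>0$, so it will suffice to show $|\Delta M_2(s)|<1$ for every $s$, which is where the truncation level $\tfrac12$ enters. At a totally inaccessible time $s$ one has $\Delta M_2(s)=\Delta M(s)\,\mathbf{1}_{\{|\Delta M(s)|<1/2\}}$, so $|\Delta M_2(s)|<\tfrac12$; at a predictable time $s$, using $\mathbf{E}[\Delta M(s)\mid\mathcal{F}_{s-}]=0$ for the local martingale $M$ gives $\Delta\tilde{L}(s)=\mathbf{E}[\Delta M(s)\mathbf{1}_{\{|\Delta M(s)|\ge1/2\}}\mid\mathcal{F}_{s-}]=-\mathbf{E}[\Delta M(s)\mathbf{1}_{\{|\Delta M(s)|<1/2\}}\mid\mathcal{F}_{s-}]$, so $\Delta M_2(s)=\Delta M(s)\mathbf{1}_{\{|\Delta M(s)|<1/2\}}-\mathbf{E}[\Delta M(s)\mathbf{1}_{\{|\Delta M(s)|<1/2\}}\mid\mathcal{F}_{s-}]$ is a difference of two terms each of absolute value strictly below $\tfrac12$, hence $|\Delta M_2(s)|<1$. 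This also justifies \emph{a posteriori} the divisions by $1+\Delta M_2$ in Parts (i)--(ii).

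The one genuinely non-formal point, and the step I expect to be the main obstacle, is the treatment of accessible (predictable) jump times in Part (iii): forcing $\Delta M_2$ to stay strictly inside $(-1,1)$ is exactly what pins down the particular truncation level. A secondary technical matter is the standard verification that the correction sum defining $\tilde{M}_1$ converges absolutely, so that $\tilde{M}_1$ really is a finite-variation semimartingale and $[M_2,\tilde{M}_1]$ is the plain sum of jump products used above; both points are carried out in M\'emin~\cite{me79}, to which I would refer for the details.
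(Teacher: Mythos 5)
The paper does not prove this statement at all: it is imported verbatim as ``M\'emin's Proposition III-1'' with a citation to \cite{me79}, and is used only as a black-box tool in Appendix~\ref{sec_Admissibility}, so there is no internal proof to compare yours against. Judged on its own, your self-contained derivation is correct and follows the natural route. The jump computation $\Delta\tilde{M}_1(s)=\Delta M_1(s)/(1+\Delta M_2(s))$, the fact that $\tilde{M}_1$ is of finite variation so that $[M_2,\tilde{M}_1]_t=\sum_{s\le t}\Delta M_2(s)\Delta\tilde{M}_1(s)$ is exactly the correction term, and Yor's formula give (i); the integration-by-parts identity reduces (ii) to stochastic integrals of locally bounded c\`agl\`ad (hence predictable) integrands against the local martingales $M_2$ and $M_1=\tilde{M}_1+[M_2,\tilde{M}_1]$; and in (iii) you correctly locate where the truncation level $\tfrac12$ matters, treating totally inaccessible times (where $\Delta\tilde{L}=0$, so $|\Delta M_2|<\tfrac12$) and predictable times (where $\mathbf{E}[\Delta M_s\mid\mathcal{F}_{s-}]=0$ gives $\Delta M_2(s)=\Delta M(s)\mathbf{1}_{\{|\Delta M(s)|<1/2\}}-\mathbf{E}[\Delta M(s)\mathbf{1}_{\{|\Delta M(s)|<1/2\}}\mid\mathcal{F}_{s-}]$, hence $|\Delta M_2(s)|<1$) separately; the identity $1+\Delta\tilde{M}_1=(1+\Delta M)/(1+\Delta M_2)$ then yields the claim. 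The one point you defer to M\'emin, absolute convergence of the correction sum, can actually be closed with ingredients you already have: at totally inaccessible times exactly one of $\Delta M_1,\Delta M_2$ is nonzero, so the sum effectively runs over predictable times only; there $\sum_s|\Delta M_1(s)|$ is locally finite because $M_1=L-\tilde{L}$ has locally integrable variation, and your case analysis for $\Delta M_2$ keeps the denominators $1+\Delta M_2$ bounded away from zero up to at most finitely many terms (on $\{|\Delta M|\ge\tfrac12\}$ one has $1+\Delta M_2\ge\tfrac12$; on $\{|\Delta M|<\tfrac12\}$ one has $|\Delta M_1(s)|=|\mathbf{E}[\cdot\mid\mathcal{F}_{s-}]|\to0$ along the contributing times). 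The remaining technicalities you wave at (existence of the compensator of $L$, and the generalized conditional expectation $\mathbf{E}[\Delta M_T\mid\mathcal{F}_{T-}]=0$ at predictable times $T$) are indeed standard for local martingales after localization, so I see no genuine gap.
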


\begin{corollary}[M\'emin's Corollary III-2]\label{coro_Memin_III1}
        Let $N$ be a local martingale such that $\Delta N(s) > -1$ for all finite $s$, and such that $\mathcal{E}(N(\infty)$ is uniformly integrable. Let $\mathbb{P}'$ be the probability defined as
\begin{eqnarray}
        \frac{d\mathbb{P}'}{d\mathbb{P}} = \mathcal{E}(N)(\infty)
                                                                                                                \nonumber
\end{eqnarray}

Let $N_1$ be a local martingale with locally integrable variations and denote by $\tilde{N}_1$ the $\mathbb{P}$-semimartingale defined as
\begin{equation}
        \tilde{N}_1(t) = N_1(t) - \sum_{0 < s \leq t} \frac{\Delta N_1(s)\Delta N(s)}{1+\Delta N(s)}, \qquad t < \infty
                                                                                                                \nonumber
\end{equation}
then $\tilde{N}_1$ is a $\mathbb{P}'$ local martingale, with locally integrable variations. Moreover, the $\mathbb{P}'$ predictable compensator of $\sum_{0 < s \leq t} |\Delta \tilde{N}_1(s) |$ is equal to the $\mathbb{P}$ predictable compensator of $\sum_{0 < s \leq t} |\Delta N_1(s) |$.
\\
\end{corollary}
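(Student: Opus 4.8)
The statement is M\'emin's Corollary~III-2, so the plan is to derive it from Proposition~\ref{prop_Memin_III1} (M\'emin's Proposition~III-1) together with the Bayes--Lenglart rule for local martingales under an absolutely continuous change of measure and the transformation law for predictable compensators of jump measures. First I would verify that $\mathcal{E}(N)$ is genuinely the density process: since $\Delta N(s)>-1$ for all finite $s$, $\mathcal{E}(N)$ is a non-negative local martingale, hence a supermartingale with $\mathcal{E}(N)(0)=1$, and the assumed uniform integrability of $\mathcal{E}(N)(\infty)$ forces $\mathbf{E}[\mathcal{E}(N)(\infty)]=1$; thus $\mathcal{E}(N)$ is a uniformly integrable martingale and $Z_t:=\mathcal{E}(N)(t)$ is the c\`adl\`ag density process of $\mathbb{P}'$ with respect to $\mathbb{P}$, with $Z>0$ and $Z_->0$. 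A one-line computation then gives $\Delta\tilde{N}_1(s)=\Delta N_1(s)-\frac{\Delta N_1(s)\Delta N(s)}{1+\Delta N(s)}=\frac{\Delta N_1(s)}{1+\Delta N(s)}=\Delta N_1(s)\,\frac{Z_{s-}}{Z_s}$, which shows, after localising along stopping times on which $Z_-/Z$ is bounded, that $\tilde{N}_1$ is well defined and of locally integrable variation.

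Next I would invoke Proposition~\ref{prop_Memin_III1} with the identification $M_2=N$ and $M_1=N_1$: part~(ii) gives that $\mathcal{E}(N)\,\tilde{N}_1=Z\,\tilde{N}_1$ is a $\mathbb{P}$-local martingale, and part~(iii) that $\Delta\tilde{N}_1(s)>-1$. The Bayes--Lenglart criterion---an adapted c\`adl\`ag process $Y$ with $Y_0$ $\mathbb{P}$-integrable is a $\mathbb{P}'$-local martingale if and only if $ZY$ is a $\mathbb{P}$-local martingale (here valid in the two-sided sense since $Z>0$)---then yields immediately that $\tilde{N}_1$ is a $\mathbb{P}'$-local martingale, and its $\mathbb{P}'$-local integrability of variation follows from the $\mathbb{P}$-version by the same localisation.

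To identify the compensators I would use the transformation of the jump measure under the change of measure. Writing $N$ through its integral representation against the compensated jump measure $\mu-\nu$ of the underlying point process with a predictable integrand $W$, one has $Z_s/Z_{s-}=1+W(s,\cdot)$ at jump times and the $\mathbb{P}'$-compensator of $\mu$ equals $(1+W)\cdot\nu$; consequently, for any non-negative process $a_s$ that is a predictable function of the jump, the $\mathbb{P}'$-predictable compensator of $\sum_{0<s\le t}a_s$ equals the $\mathbb{P}$-predictable compensator of $\sum_{0<s\le t}a_s\,\frac{Z_s}{Z_{s-}}$. Applying this with $a_s=|\Delta\tilde{N}_1(s)|=|\Delta N_1(s)|\,\frac{Z_{s-}}{Z_s}$ gives $a_s\,\frac{Z_s}{Z_{s-}}=|\Delta N_1(s)|$, so the $\mathbb{P}'$-compensator of $\sum_{0<s\le t}|\Delta\tilde{N}_1(s)|$ coincides with the $\mathbb{P}$-compensator of $\sum_{0<s\le t}|\Delta N_1(s)|$, as required.

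The main obstacle I anticipate is not any single conceptual step but the bookkeeping around localisation and integrability: none of $Z$, $1/(1+\Delta N)=Z_-/Z$, or the total variation of $N_1$ is assumed bounded, so well-definedness of $\tilde{N}_1$, the transfer of the local-martingale property from $\mathbb{P}$ to $\mathbb{P}'$, and the compensator identity must all be carried out along a common reducing sequence of stopping times, with the existence (local integrability) of the relevant compensators checked at each stage. A secondary point is that Proposition~\ref{prop_Memin_III1} is phrased with $M_1$ taken to be the ``large-jump'' part $L-\tilde{L}$ of the decomposition in Definition~\ref{def_Memin_decomposition_additive}; one should check that its conclusions~(ii)--(iii), namely the multiplicative decomposition identity $\mathcal{E}(M)=\mathcal{E}(M_2)\mathcal{E}(\tilde{M}_1)$ and the resulting local-martingale property, remain valid for an arbitrary $\mathbb{P}$-local martingale $N_1$ of locally integrable variation, which is exactly the generality needed here.
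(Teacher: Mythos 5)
The paper offers no proof of this statement to compare yours against: it is quoted verbatim (as its title indicates) from M\'emin \cite{me79}, Corollaire III-2, and is used as a black box in Appendix~\ref{sec_Admissibility}. Judged on its own, your reconstruction is essentially the standard Girsanov--Meyer argument and is correct in outline: the jump identity $\Delta\tilde N_1=\Delta N_1/(1+\Delta N)=\Delta N_1\,Z_{-}/Z$ with $Z=\mathcal{E}(N)$, the fact that $Z\tilde N_1$ is a $\mathbb{P}$-local martingale, the Bayes-type transfer to $\mathbb{P}'$, and a dual-projection transfer rule for the compensators do give the two assertions.

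Three points deserve tightening. First, as you note yourself, Proposition~\ref{prop_Memin_III1} as reproduced in the paper fixes $M_1$ to be the compensated big-jump part of $M=M_1+M_2$, so you cannot literally take $M_2=N$, $M_1=N_1$; the fact you need, that $\mathcal{E}(N)\tilde N_1$ is a $\mathbb{P}$-local martingale, is most cleanly obtained by integration by parts: with $A_t=\sum_{0<s\leq t}\Delta N_1(s)\Delta N(s)/(1+\Delta N(s))$ one has $[Z,\tilde N_1]_t=\sum_{0<s\leq t}Z_{s-}\Delta N(s)\Delta N_1(s)/(1+\Delta N(s))=\int_0^t Z_{s-}\,dA_s$, so $Z\tilde N_1=\int\tilde N_{1,s-}\,dZ_s+\int Z_{s-}\,dN_1(s)$, a local martingale. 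Second, your parenthetical claim that the Bayes--Lenglart equivalence is two-sided ``since $Z>0$'' is not quite right: $\Delta N>-1$ gives $Z_t>0$ for finite $t$, but $Z_\infty=\mathcal{E}(N)(\infty)$ may vanish on a set of positive $\mathbb{P}$-measure, so $\mathbb{P}'$ is only absolutely continuous; fortunately only the direction ``$ZY$ a $\mathbb{P}$-local martingale $\Rightarrow$ $Y$ a $\mathbb{P}'$-local martingale'' is needed. Third, your justification of the compensator identity assumes $N$ has an integral representation against the compensated jump measure with a predictable integrand, which is not among the hypotheses; no such representation is needed, since for an adapted increasing locally integrable $D$ one has $\mathbf{E}\left[Z_\infty\int H\,dD\right]=\mathbf{E}\left[\int H_s Z_s\,dD_s\right]$ (optional projection) and, for a predictable increasing $\tilde D$, $\mathbf{E}\left[Z_\infty\int H\,d\tilde D\right]=\mathbf{E}\left[\int H_s Z_{s-}\,d\tilde D_s\right]$ (predictable projection), which together show that the $\mathbb{P}'$-compensator of $D$ is the $\mathbb{P}$-compensator of $\int (Z_s/Z_{s-})\,dD_s$; applying this with $D_t=\sum_{0<s\leq t}\lvert\Delta\tilde N_1(s)\rvert$ and $(Z_s/Z_{s-})\lvert\Delta\tilde N_1(s)\rvert=\lvert\Delta N_1(s)\rvert$ gives exactly the stated equality. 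With these repairs your argument is a faithful reconstruction of the cited result.
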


\begin{theorem}[M\'emin's Theorem III-3]\label{theo_Memin_III3}
        Let $M(t)$ be a local martingale with additive decomposition as per definition~\ref{def_Memin_decomposition_additive}. If the predictable compensator of the process
\begin{eqnarray}\label{def_Y_t_process}
                Y(t)
        =       \left[M^c,M^c\right]_t
        +       \sum_{0 < s \leq t} | \Delta M_1(s) |
        +       \sum_{0 < s \leq t} \left( \Delta M_2(s) \right)^2
\end{eqnarray}
is bounded, then $\mathcal{E}(M)(t)$ is uniformly integrable.
\\
\end{theorem}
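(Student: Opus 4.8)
Since this is Mémin's theorem, the plan is to reproduce the architecture of his proof, whose whole shape is forced by the fact that the ``large'' and ``small'' jumps of $M$ (separated at level $1/2$ in Definition~\ref{def_Memin_decomposition_additive}) must be handled by completely different devices: the small-jump part will admit an $L^p$-type estimate, while the large-jump part, which enters the process $Y$ only \emph{linearly}, has to be tamed by a change of measure, the bridge between the two being the multiplicative decomposition of Proposition~\ref{prop_Memin_III1}. I would work throughout under the assumption $\Delta M>-1$ (the case arising in this paper; the general case reduces to it with cosmetic changes), so that $\mathcal{E}(M)$, $\mathcal{E}(M_2)$ and $\mathcal{E}(\tilde M_1)$ are all nonnegative local martingales, hence supermartingales, recalling that such a process is a uniformly integrable martingale exactly when its terminal value has the same expectation as its starting value. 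Writing $M=M_1+M_2$ as in Definition~\ref{def_Memin_decomposition_additive}, I would first record the structural facts: the continuous part of $M_2$ is $M^c$, and the jumps of $M_2$ lie in $(-1,1)$ (at a totally inaccessible time $\Delta M_2=\Delta M\,\mathbf{1}_{\{|\Delta M|<1/2\}}$, while at a predictable time $\tau$ the jump of the compensator $\tilde L$ equals $-\mathbf{E}[\Delta M_\tau\,\mathbf{1}_{\{|\Delta M_\tau|<1/2\}}\mid\mathcal{F}_{\tau-}]$, using $\mathbf{E}[\Delta M_\tau\mid\mathcal{F}_{\tau-}]=0$, hence has modulus $<1/2$); and, by Proposition~\ref{prop_Memin_III1}, $\mathcal{E}(M)=\mathcal{E}(M_2)\,\mathcal{E}(\tilde M_1)$ with $\Delta\tilde M_1>-1$. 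Fix once and for all a localizing sequence $T_n\uparrow\infty$.

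Next I would show that $\mathcal{E}(M_2)$ is a uniformly integrable martingale by an $L^p$-estimate with $p\in(1,2]$. Itô's formula applied to $\mathcal{E}(M_2)^p$, via $d\mathcal{E}(M_2)=\mathcal{E}(M_2)_-\,dM_2$, gives $\mathcal{E}(M_2)^p=1+(\text{local martingale})+\int\mathcal{E}(M_2)_{s-}^p\,dV^{(p)}_s$, where $dV^{(p)}=\frac{p(p-1)}{2}\,d[M^c,M^c]+\sum\bigl((1+\Delta M_2)^p-1-p\,\Delta M_2\bigr)$ is increasing. Since $x\mapsto\bigl((1+x)^p-1-px\bigr)/x^2$ is bounded on the bounded interval $(-1,1)$, the jump part of $V^{(p)}$ is dominated by a constant times $\sum(\Delta M_2)^2$; hence, by the hypothesis (boundedness of the predictable compensator of $[M^c,M^c]_\infty+\sum(\Delta M_2)^2$), the predictable compensator $C^{(p)}$ of $V^{(p)}$ is bounded by some $K_p$. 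Multiplying $\mathcal{E}(M_2)^p$ by $e^{-C^{(p)}}$ produces a nonnegative local martingale, hence a supermartingale, so $\mathbf{E}[\mathcal{E}(M_2)^p_{t\wedge T_n}]\le e^{K_p}$ uniformly in $n$ and $t$. Thus $\mathcal{E}(M_2)$ is $L^p$-bounded along the localizing sequence, hence a true uniformly integrable martingale of mean $1$, and $d\mathbb{P}':=\mathcal{E}(M_2)_\infty\,d\mathbb{P}$ defines a probability measure.

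Then I would pass to $\mathbb{P}'$ to treat $\mathcal{E}(\tilde M_1)$ and conclude. By Corollary~\ref{coro_Memin_III1} with $N=M_2$ and $N_1=M_1$, the process $\tilde M_1$ is a $\mathbb{P}'$-local martingale of locally integrable variation, and — the crucial point — its $\mathbb{P}'$-predictable jump-variation compensator coincides with the $\mathbb{P}$-predictable compensator of $\sum_s|\Delta M_1(s)|$, which is bounded by hypothesis. One then shows, following Mémin, that $\mathcal{E}(\tilde M_1)$ is a $\mathbb{P}'$-uniformly integrable martingale: the bounded jump-variation compensator makes $\tilde M_1$ of $\mathbb{P}'$-integrable variation, and cutting $[0,\infty)$ into finitely many blocks on which this compensator increases by at most $1/2$, one controls $\mathcal{E}(\tilde M_1)$ blockwise and patches. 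Granting this, $\mathbf{E}^{\mathbb{P}}[\mathcal{E}(M)_\infty]=\mathbf{E}^{\mathbb{P}}[\mathcal{E}(M_2)_\infty\mathcal{E}(\tilde M_1)_\infty]=\mathbf{E}^{\mathbb{P}'}[\mathcal{E}(\tilde M_1)_\infty]=1=\mathcal{E}(M)_0$, so $\mathcal{E}(M)$ is uniformly integrable.

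The hard part will be precisely this last step, and it is the reason the decomposition machinery is unavoidable. For the large jumps only the first power $\sum|\Delta M_1|$ has bounded compensator, whereas $(1+x)^p-1-px$ (and likewise $(1+x)\log(1+x)-x$) grows super-linearly in $x$; so no $L^p$- or $L\log L$-estimate can be run directly on $\mathcal{E}(M)$, nor even on $\mathcal{E}(\tilde M_1)$. One must quarantine the large jumps into the finite-variation factor $\tilde M_1$, move to $\mathbb{P}'$ via Corollary~\ref{coro_Memin_III1} where the jump-variation compensator survives, and then upgrade $\mathcal{E}(\tilde M_1)$ from a $\mathbb{P}'$-local martingale to a genuinely uniformly integrable one. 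Making that upgrade rigorous — in particular handling jumps of $\tilde M_1$ that need not be bounded, which forces the blockwise argument rather than a one-shot domination bound — is where the substance of the theorem lies.
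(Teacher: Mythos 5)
This statement is not proved in the paper at all: it is quoted, with attribution, as Theorem III-3 of M\'emin \cite{me79} and used as an imported tool in the admissibility argument of Appendix~\ref{sec_Admissibility}, so there is no internal proof to compare yours against; what can be judged is whether your reconstruction is faithful and complete. In outline it is faithful: the route through the multiplicative decomposition $\mathcal{E}(M)=\mathcal{E}(M_2)\mathcal{E}(\tilde M_1)$ of Proposition~\ref{prop_Memin_III1}, an $L^p$ (L\'epingle--M\'emin type) estimate for the small-jump factor $\mathcal{E}(M_2)$ based on the bounded compensator of $[M^c,M^c]+\sum(\Delta M_2)^2$, and then the change of measure $d\mathbb{P}'=\mathcal{E}(M_2)_\infty\,d\mathbb{P}$ combined with Corollary~\ref{coro_Memin_III1} to transfer the bounded compensator of $\sum_s|\Delta M_1(s)|$ to $\tilde M_1$, is exactly how this result is established in the literature, and your preliminary observations are correct (in particular that $|\Delta M_2|<1$, via the bound on the predictable jumps of the compensator $\tilde L$, and that $((1+x)^p-1-px)/x^2$ is bounded on $(-1,1)$ for $p\in(1,2]$).

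As a proof, however, the attempt stops short precisely at the step you yourself identify as the substance: the uniform integrability of $\mathcal{E}(\tilde M_1)$ under $\mathbb{P}'$ is asserted ``following M\'emin'' with only a phrase about cutting $[0,\infty)$ into stochastic blocks on which the jump-variation compensator increases by at most $1/2$; no estimate is carried out, and this is exactly where the hypothesis on $\sum|\Delta M_1|$ (which controls only a first power, while the factors $1+\Delta\tilde M_1$ can be unbounded) must be converted into integrability of the product defining $\mathcal{E}(\tilde M_1)$. Until that lemma is proved, or cited as a separate result of \cite{me79}, what you have is a correct plan rather than a proof. Two smaller points. First, $e^{-C^{(p)}}\mathcal{E}(M_2)^p$ is a local martingale only if $C^{(p)}$ is continuous; since a predictable compensator may have (predictable) jumps, you should either iterate the inequality $\mathbf{E}[\mathcal{E}(M_2)^p_{t\wedge T_n}]\leq 1+\mathbf{E}\left[\int_0^{t\wedge T_n}\mathcal{E}(M_2)^p_{s-}\,dC^{(p)}_s\right]$ (a stochastic Gronwall argument, giving the bound $e^{K_p}$) or replace $e^{-C^{(p)}}$ by $e^{-(C^{(p)})^c}\prod_{s\leq\cdot}(1+\Delta C^{(p)}_s)^{-1}$; the conclusion survives, but as written the step is not exact. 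Second, the reduction of the general statement to the case $\Delta M>-1$ is not purely cosmetic, because your characterization of uniform integrability through nonnegative supermartingales with terminal expectation one uses positivity; for the application in this paper the restriction is harmless, since the relevant exponential has jumps $1+\Delta M=(1+h^T\gamma(z))^{-\theta}>0$, but it should be flagged as a restriction rather than a triviality.
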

%
% Note:
%\begin{eqnarray}
%       \prod_{0 < s \leq t} (1 + \delta M_t)e^{- \Delta M_s}
%       &=& \exp\left( \ln\left[ \prod_{0 < s \leq t} (1 + \delta M_t)e^{- \Delta M_s} \right] \right)
%                                                                                                                       \nonumber\\
%       &=& \exp\left( \sum_{0 < s \leq t} \ln\left[ (1 + \delta M_t)\right] - \Delta M_s} \right] \right)
%                                                                                                                       \nonumber
%       \end{eqnarray}
%

\textit{Proof of Proposition~\ref{prop_JDRSAM_diffstate_hstar_admissible}}.
To prove that the control $h^*(t)$ is admissible, we need to show that the local martingale $M^{*}(t)$ defined as
\begin{eqnarray}
        M^{*}(t) :=
        - \theta \int_{0}^{t} \left(h^*(s)\right)^T \Sigma dW_s
        - \int_{0}^{t} \int_{\mathbf{Z}} \ln\left(1-G(z,h^*(s))\right)\tilde{N}(ds,dz)
\end{eqnarray}
and such that
\begin{equation}
        \mathcal{E}(M)(t) = \chi_t^{*}
                                                                                                        \nonumber
\end{equation}
is an exponential martingale.
\newline

To achieve this objective, we will define a new class of control processes to which the optimal control belongs. We will start from the definition of a control $h$ as a function:
\begin{eqnarray}
        h: [0,T]\times\mathbb{R}^n              &       \to     & \mathcal{J}
                                                                                                \nonumber\\
                (t,x)                                                           & \mapsto & h(t,x)
                                                                                                \nonumber
\end{eqnarray}
where the set $\mathcal{J}$ was defined in~\eqref{def_JDRSAM_setJ}. Based on this definition, the control space can be viewed as a functional space.
\newline

Define the functional $\mathcal{L}(x,p,h)$ as
\begin{eqnarray}\label{eq_JDRSAM_mathcalL}
        \mathcal{L}(x,p,h)
        &:=&
            - \frac{1}{2} \left(\theta+1 \right)h^T \Sigma\Sigma^T h
            -\theta h^T \Sigma\Lambda^T p
            +h^T (\hat{a} + \hat{A}x)
                                                                                \nonumber\\
            &&
            -\frac{1}{\theta}\int_{\mathbf{Z}}\left\{
                                                \left[\left(1+h^T \gamma(z)\right)^{-\theta}-1\right]
                                        +\theta h^T \gamma(z)\mathit{1}_{\mathbf{Z}_0}(z)
                    \right\}\nu(dz)
                                                                                                        \nonumber\\
\end{eqnarray}
where $p \in \mathbb{R}^n$ so that
\begin{eqnarray}
        \sup_{h \in \mathcal{J}} L_{t}^{h}\Phi
        &=&
            \left( b+ Bx \right)^T D\Phi
            + \frac{1}{2} \textrm{tr} \left( \Lambda \Lambda^T  D^2 \Phi \right)
            - \frac{\theta}{2} (D\Phi)^T \Lambda \Lambda^T  D\Phi
            +a_0+A_0^T x
                                                                        \nonumber\\
            &&
            +\sup_{h \in J} \mathcal{L}(x,D\Phi,h)
                                                                                                \nonumber\\
\end{eqnarray}
and the unique maximizer of $L_t^h \Phi (t,x)$, $\hat{h}(t,x)$, is also the unique maximizer of $\mathcal{L}(x,D\Phi,h)$.
\newline

Observe that with the choice of control function $h^0(t,x) := 0$ $\forall (t,x) \in [0,T] \times \mathbb{R}^n$, the functional $\mathcal{L}(x,p,h^0) = 0$ $\forall (t,x,p) \in [0,T] \times \mathbb{R}^n \times \mathbb{R}^n$. Invoking the optimality principle, we deduce that $\mathcal{L}(x,D\Phi,h^*(t,x)) \geq 0$.
\newline

Denote by $\hat{\mathcal{J}}$ the range of the control functions $\tilde{h}(t,x)$ such that $\mathcal{L}(x,p,\tilde{h}) \geq 0$. Under Assumption~\ref{as_assetjumps_upanddown_1}, the set $\mathcal{J}$, defined by~\eqref{def_JDRSAM_setJ}, is in the interior of a hypercube and since the functional $\mathcal{L}(x,p,h)$ is smooth, strictly concave in $h$ and $\lim_{h \to \partial \mathcal{J}} \mathcal{L}(x,p,h) = -\infty$, we deduce that the set $\hat{\mathcal{J}}$ is a closed convex subset of $\mathcal{J}$ for all $(t,x) \in [0,T] \times \mathbb{R}^n$. The control functions $\tilde{h}$ take the form
\begin{eqnarray}
        \tilde{h}: [0,T]\times\mathbb{R}^n                &       \to     & \hat{\mathcal{J}} \subset \mathcal{J}
                                                                                                \nonumber\\
                (t,x)                                                           & \mapsto & \tilde{h}(t,x)
                                                                                                \nonumber
\end{eqnarray}

More formally, we can define a class $\hat{\mathcal{H}}(T)$ of Markov control processes as
\\

\begin{definition}\label{def_JDRSAM_controlprocess_hat_h}
    A control process $\tilde{h}(t)$ is in class $\hat{\mathcal{H}}(T)$ if the
    following conditions are satisfied:
    \begin{enumerate}
        \item $\tilde{h}(t)$ is in class $\mathcal{H}$ introduced in Definition~\ref{def_JDRSAM_controlprocess_h};

        \item $\tilde{h}(t,x) \in \hat{\mathcal{J}}$ $\forall (t,x) \in [0,T]\times\mathbb{R}^n$.

    \end{enumerate}
\end{definition}

In particular, we note that the optimal control process $h^{*}(t) \in \hat{\mathcal{H}}(T)$ $\forall t \in [0,T]$ and $\forall \omega \in \Omega$.
\newline

For any control policy $\tilde{h}(t) \in \hat{\mathcal{H}}(T)$, define the local martingale $\hat{M}(t)$ as
\begin{eqnarray}\label{def_JDRSAM_mart_hatMt}
        \hat{M}(t) :=
        - \theta \int_{0}^{t} \tilde{h}(s)^T \Sigma dW_s
        - \int_{0}^{t} \int_{\mathbf{Z}} \ln\left(1-G(z,\tilde{h}(s))\right)\tilde{N}(ds,dz)
\end{eqnarray}

Also, let $L(t)$ be the process defined as
\begin{eqnarray}
        L(t)    &=& \sum_{0 < s \leq t} \Delta Y_s \mathit{1}_{\left\{ |\Delta Y_s| \geq \frac{1}{2}   \right\}}
                                                                                                        \nonumber\\
                        &=& - \int_{0}^{t} \int_{\mathbf{Z} \backslash \mathbf{Z}_1} \ln\left(1-G(z,\tilde{h}(s))\right)N(ds,dz)
                                                                                                        \nonumber
\end{eqnarray}
where $\mathbf{Z}_1 = \left\{ z \in \mathbf{Z}: |\Delta Y_s| < \frac{1}{2},  0 \leq s \leq t \right\}$. Then, the process $M_1(t) := L(t) - \tilde{L}(t)$ can be expressed as:
\begin{eqnarray}
        M_1(t)  &=& - \int_{0}^{t} \int_{\mathbf{Z} \backslash \mathbf{Z}_1} \ln\left(1-G(z,\tilde{h}(s))\right)\tilde{N}(ds,dz)
                                                                                                        \nonumber
\end{eqnarray}
To complete our decomposition of the local martingale $M(t)$, we define the process $M_2(t)$ as
\begin{eqnarray}
        M_2(t)
        &=& M(t) - M_1(t)
                                                                                                        \nonumber\\
        &=& - \int_{0}^{t} \int_{\mathbf{Z}_1} \ln\left(1-G(z,\tilde{h}(s))\right)\tilde{N}(ds,dz)
                                                                                                        \nonumber
\end{eqnarray}

The next step is to study each component of the process $Y(t)$ defined in~\eqref{def_Y_t_process}:
\begin{itemize}
\item The process
        \begin{eqnarray}
                \left[M^c,M^c\right]_t  = \exp \left\{ \theta^2 \int_{0}^{t} \tilde{h}(s)^T \Sigma\Sigma^T \tilde{h}(s) ds \right\}
                                                                                                \nonumber
        \end{eqnarray}
        is clearly bounded because $\tilde{h}(s) \in \hat{\mathcal{H}}(T)$ for all $s \in [0,t]$;

\item The process
        \begin{eqnarray}
                \sum_{0 < s \leq t} | \Delta M_1(s) |
                = \int_{0}^{t} \int_{\mathbf{Z} \backslash \mathbf{Z}_1} \Big| \ln\left(1-G(z,\tilde{h}(s))\right) \Big| N(ds,dz)
                                                                                                \nonumber
        \end{eqnarray}
        is bounded because $\tilde{h}(s) \in \hat{\mathcal{H}}(T)$ for all $s \in [0,t]$. In addition, the number of jumps greater than $\frac{1}{2}$ is finite:
        \begin{eqnarray}
                        \# \left\{0 \leq s \leq t; | \Delta M_1(s) | \right\}
                &=&\#\left\{0 \leq s \leq t; | \Delta M(s) | \mathit{1}_{\left\{ |\Delta M_s| \geq \frac{1}{2}   \right\}} \right\}
                                                                                                                                        \nonumber\\
                &=&N\left(t, \left] -\infty,-\frac{1}{2} \right[ \cup \left]\frac{1}{2},\infty \right[\right)
                                                                                                                                        \nonumber\\
                &<&\infty
                                                                                                                                        \nonumber
        \end{eqnarray}

\item Finally, we turn our attention to the process
        \begin{eqnarray}
                \sum_{0 < s \leq t} \left( \Delta M_2(s) \right)^2
                = \int_{0}^{t} \int_{\mathbf{Z}_1} \Big| \ln\left(1-G(z,\tilde{h}(s))\right) \Big|^2 N(ds,dz)
                                                                                                \nonumber
        \end{eqnarray}
        Recalling that we assumed that in our setting
        \begin{equation}
    \int_{\mathbf{Z}_0} \lvert\gamma(z)\rvert^2 \nu(dz) < \infty
                                            \nonumber
        \end{equation}
        and taking into consideration the fact that $\tilde{h}(s) \in \hat{\mathcal{H}}(T)$ for all $s \in [0,t]$, then we deduce that
        \begin{equation}
    \int_{\mathbf{Z}_0} \Big| \ln\left(1-G(z,\tilde{h}(s))\right) \Big|^2 \nu(dz) < \infty
                                            \nonumber
        \end{equation}
        for any $\omega \in \Omega$, which proves that
                \begin{equation}
    \int_{0}^{t} \int_{\mathbf{Z}_1} \Big| \ln\left(1-G(z,\tilde{h}(s))\right) \Big|^2 N(ds,dz) < \infty
                                            \nonumber
        \end{equation}
\end{itemize}

By Theorem~\ref{theo_Memin_III3}, the Dol\'eans-Dade exponential
\begin{equation}
        \mathcal{E}(\hat{M})(t) = \chi_t^{*}
                                                                                                        \nonumber
\end{equation}
is uniformly integrable for all $\tilde{h} \in \hat{\mathcal{H}}(T)$. We can now apply Corollary~\ref{coro_Memin_III1} to formally define the measure $\mathbb{P}_{\tilde{h}}^{\theta}$. In particular, the measure $\mathbb{P}_{h^{*}}^{\theta}$ characterized via the Radon-Nikodym derivative $\chi_t^{*}$ is well defined because $h^{*}(t) \in \hat{\mathcal{H}}(T)$ $\forall \omega \in \Omega$. This proves that the control $h^*(t)$ is admissible for all $t \in [0,T]$ and $\omega \in \Omega$.

%       END OF PROOF
\vbox{\hrule height0.6pt\hbox{%
   \vrule height1.3ex width0.6pt\hskip0.8ex
   \vrule width0.6pt}\hrule height0.6pt
  }

Note that the control policy $h^0(t) = 0$ corresponds to investing the entire wealth into the money market asset for the duration of the investment period. The associated measure $\mathbb{P}_{h^{0}}^{\theta}$ is well defined and it is equal to the physical measure $\mathbb{P}$. In fact, this proof not only shows that the optimal control process $h^*(t)$ is admissible, but also that a large class of ``reasonable'' control processes $\tilde{h}(t)$ is also admissible and is associated with a well-defined probability measure.
\\

\textit{Proof of Proposition~\ref{prop_JDRSAM_diffstate_OptimalJandI}}. Consider the exponentially transformed problem $\inf_{h \in \mathcal{A}(T)} \tilde{J}(x,t,h)$ where
\begin{equation}\label{eq_JDRSAM_criterion_tilde_J}
    \tilde{J}(x,t,h) := \ln\mathbf{E}\left[e^{-\theta \ln V(t,x,h)} \right]
\end{equation}
Note that because the term $e^{-\theta \ln V(t,x,h)}$ is bounded from below by 0, $\inf_{h \in \mathcal{A}(T)} \tilde{J}(x,t,h)$ is well defined which implies that there exists at least one minimizer $\tilde{h}$.
\begin{eqnarray}
        \mathbf{E}\left[e^{-\theta \ln V(t,x,h)} \right]
        = \mathbf{E}_{t,x}^{h}
        \left[ \exp \left\{ \theta \int_{t}^{T} g(s,X_s,h(s))
        ds -\theta \ln v
        \right\} \right]
                                                                                                                                \nonumber
\end{eqnarray}
(see for example Lemma 8.6.2. in~\cite{Ok03}) and hence
\begin{eqnarray}
        \inf_{h \in \mathcal{A}(T)}\mathbf{E}\left[e^{-\theta \ln V(t,x,h)} \right]
        &=& \inf_{h \in \mathcal{A}(T)}\mathbf{E}_{t,x}^{h}
        \left[ \exp \left\{ \theta \int_{t}^{T} g(s,X_s,h(s))
        ds -\theta \ln v
        \right\} \right]
                                                                                                                                \nonumber\\
        &=& I(v,x;h^{*}(t);t,T)
                                                                                                                                \nonumber
\end{eqnarray}
which proves that the optimal control $h^*(t)$ for the auxiliary problem $\sup_{h \in \mathcal{A}(T)}I(v,x;h;t,T)$ derived in Section~\ref{sec_Optim} is indeed optimal for the problem  $\sup_{h \in \mathcal{A}(T)}J(x,t,h)$.

%       END OF PROOF
\vbox{\hrule height0.6pt\hbox{%
   \vrule height1.3ex width0.6pt\hskip0.8ex
   \vrule width0.6pt}\hrule height0.6pt
  }

%%%%%%%%%%%%%%%%%%%%%%%%%%%%%%%%%%%%%%%%%%%%%%%%%%%%%%%%%%%%%%%%%%%%%%%%%%%%%%%%
%
% BIBLIOGRAPHY
%
%%%%%%%%%%%%%%%%%%%%%%%%%%%%%%%%%%%%%%%%%%%%%%%%%%%%%%%%%%%%%%%%%%%%%%%%%%%%%%%%

\end{document}